\newif\iflongversion\longversiontrue
    \let\Cref\crtCref
    \let\cref\crtcref
\newclass{\THREEEXP}{3EXP}
\newclass{\TWOEXP}{2EXP}
\newcommand*\patchAmsMathEnvironmentForLineno[1]{%
  \expandafter\let\csname old#1\expandafter\endcsname\csname #1\endcsname
  \expandafter\let\csname oldend#1\expandafter\endcsname\csname end#1\endcsname
  \renewenvironment{#1}%
  {\linenomath\csname old#1\endcsname}%
  {\csname oldend#1\endcsname\endlinenomath}}%
\newcommand*\patchBothAmsMathEnvironmentsForLineno[1]{%
  \patchAmsMathEnvironmentForLineno{#1}%
  \patchAmsMathEnvironmentForLineno{#1*}}%
\tikzset {
  edge with arrow/.style = {
    ->,
    >=stealth,
    shorten >=1pt,
  },
  directed/.style = {
    edge with arrow,
    node distance=2cm,
    on grid,
    semithick,
    double distance=1.5pt,
  },
  automaton/.style = {
    directed,
    auto,	
    initial text={},
    state/.append style = {
        ellipse,
    },
  },
  timeline/.style = {
    very thick,
  },
  hole/.style = {
    timeline,
    draw = none,
  },
  block/.style = {
    draw,
  },
  blockBeforeKilled/.style = {
    draw,
    dashed,
  },
  nullTimerKilled/.style = {
    circle,
    fill,
    inner sep = 2pt,
  },
  nonNullTimerKilled/.style = {
    draw,
    cross out,
    solid,
    inner sep = 2pt,
  },
  delay/.style = {
    {Stealth[length=4pt]}-{Stealth[length=4pt]},
    shorten > = 0pt,
  },
  blockGraph/.style = {
    directed,
  }
}
\newcommand{\iffMSO}{\leftrightarrow}
\newcommand{\impliesMSO}{\rightarrow}
\newcommand{\timeout}[1]{\mathit{to}[#1]}
\newcommand{\killSymbol}[1]{\mathit{di}[#1]}
\newcommand{\toevents}[1]{\mathit{TO}[#1]}
\newcommand{\subsets}[1]{{\mathcal{P}}(#1)} 
\newcommand{\nat}{{\mathbb N}}
\newcommand{\natplus}{\nat^{>0}}
\newcommand{\nnr}{{\mathbb R}^{\geq 0}}
\newcommand{\automaton}[1]{\mathcal{#1}}
\newcommand{\M}{\automaton{M}}
\newcommand{\A}{\automaton{A}}
\newcommand{\N}{\automaton{N}}
\newcommand{\AwT}{AT\xspace}
\newcommand{\AwTs}{ATs\xspace}
\newcommand{\robust}{race-avoiding\xspace}
\newcommand{\Robust}{Race-avoiding\xspace}
\newcommand{\valuation}{\kappa}
\newcommand{\Val}[1]{\mathsf{Val}({#1})} 
\newcommand{\dom}[1]{{\textsf{dom}}(#1)} 
\newcommand{\truns}[1]{\mathit{truns}(#1)} 
\newcommand{\mtruns}[1]{\mathit{ptruns}(#1)} 
\newcommand{\untime}[1]{\mathit{untime}(#1)} 
\newcommand{\sink}{\mathit{sink}}
\newcommand{\done}{\mathit{done}}
\newcommand{\state}{q}
\newcommand{\tsymbol}{\mathit{symbol}}
\renewcommand{\clock}{\mathit{clock}}
\newcommand{\phase}{\mathit{phase}}
\newcommand{\go}{\mathit{go}}
\NewDocumentCommand{\clockConstraints}{O{C}}{\Phi(#1)}
\NewDocumentCommand{\values}{O{x}}{\mathrm{Values}(#1)} 
\newcommand{\regionRelation}{\cong}
\newcommand{\regionClass}[1]{\llbracket {#1} \rrbracket_{\regionRelation}}
\newcommand{\regionAutomaton}{\mathcal{R}}
\newcommand{\delaySymbol}{\tau}
\newcommand{\timerFate}{\gamma}
\newcommand{\nilTimerKilled}{\textup{\newmoon}}
\newcommand{\nonNilTimerKilled}{\times}
\newcommand{\elapsed}{relative elapsed time\xspace}
\newcommand{\distance}{\mathrm{reltime}}
\newcommand{\ext}[1]{\mathrm{ext}(#1)}
\NewDocumentCommand{\blockGraph}{O{}}{G_{#1}}
\newcommand{\blockMSO}{\mathrm{Block}}
\newcommand{\raceMSO}{\mathrm{Race}}
\newcommand{\firstMSO}{\mathrm{First}}
\newcommand{\lastMSO}{\mathrm{Last}}
\newcommand{\nextMSO}{\mathrm{Next}}
\newcommand{\nowigglingMSO}{\Phi}
\newcommand{\partition}{\mathrm{Partition}}
\newcommand{\maximal}{padded\xspace}
\def\orcidID#1{\smash{\href{http://orcid.org/#1}{\protect\raisebox{-1.25pt}{\protect\includegraphics{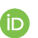}}}}}
\begin{document}

\title{Automata with Timers%
    \thanks{This work was supported by the Belgian FWO \enquote{SAILor} project (G030020N). Ga\"etan Staquet is a research fellow (Aspirant) of the Belgian F.R.S.-FNRS\@.
    The research of Frits Vaandrager was supported by NWO TOP project 612.001.852 ``Grey-box learning of Interfaces for Refactoring Legacy Software (GIRLS)''.	
    }%
}
\author{V\'eronique Bruy\`ere\inst{1}\orcidID{0000-0002-9680-9140}%
  \and Guillermo A. P\'erez\inst{2}\orcidID{0000-0002-1200-4952}%
  \and Ga\"etan Staquet\inst{1,2}\orcidID{0000-0001-5795-3265}%
  \and Frits W. Vaandrager\inst{3}\orcidID{0000-0003-3955-1910}%
}
\authorrunning{V. Bruy\`ere et al.}

\institute{University of Mons, Belgium\\
  \email{\{veronique.bruyere,gaetan.staquet\}@umons.ac.be}
  \and University of Antwerp -- Flanders Make, Belgium\\
  \email{guillermo.perez@uantwerpen.be}%
  \and Radboud University, The Netherlands\\
  \email{f.vaandrager@cs.ru.nl}
}

\maketitle

\begin{abstract}
  In this work, we study properties of deterministic finite-state automata with timers, a subclass of timed automata proposed by Vaandrager et al.\ as a candidate for an efficiently learnable timed model. We first study the  complexity of the configuration reachability problem for such automata and establish that it is \PSPACE-complete.
Then, as simultaneous timeouts (we call these, races) can occur in timed runs of such automata, we study the problem of determining whether it is possible to modify the delays between the actions in a run, in a way to avoid such races. 
The absence of races is important for modelling purposes and to 
streamline learning of automata with timers. We provide an effective characterization of when an automaton is \robust and establish that the related decision problem is in \THREEEXP{} and \PSPACE-hard.

  \keywords{Timed systems \and model checking \and reachability}
\end{abstract}

\section{Introduction}

Timed automata were introduced by Alur \& Dill~\cite{DBLP:journals/tcs/AlurD94} as finite-state automata equipped with real-valued clock variables for measuring the time between state transitions. These clock variables all increase at the same rate when time elapses, they can be reset along transitions, and be used in guards along transitions and in invariant predicates for states.
Timed automata have become a framework of choice for modeling and analysis of real-time systems, equipped with a rich theory, supported by powerful tools, and with numerous applications~\cite{DBLP:reference/mc/BouyerFLMO018}.

Interestingly, whereas the values of clocks in a timed automaton \emph{increase} over time, designers of real-time systems (e.g.\ embedded controllers and network protocols) typically use timers to enforce timing constraints, and the values of these timers \emph{decrease} over time.
If an application starts a timer with a certain value $t$, then this value decreases over time and after $t$ time units --- when the value has become $0$ --- a timeout event occurs.
It is straightforward to encode the behavior of timers using a timed automaton.
Timed automata allow one to express a richer class of behaviors than what can be described using timers, and can for instance express that the time between two events is contained in an interval $[t-d, t+ d]$.
Moreover, timed automata can express constraints on the timing between arbitrary events, not just between start and timeout of timers.

However, the expressive power of timed automata entails certain problems.
For instance, one can easily define timed automata models in which time stops at some point (timelocks) or an infinite number of discrete transitions occurs in a finite time (Zeno behavior). Thus timed automata may describe behavior that cannot be realized by any physical system.
Also, learning~\cite{DBLP:journals/iandc/Angluin87,DBLP:conf/dagstuhl/HowarS16} of timed automata models in a black-box setting turns out to be challenging~\cite{DBLP:conf/concur/GrinchteinJP06,DBLP:journals/tcs/GrinchteinJL10,AnCZZZ20}.
For a learner who can only observe the external events of a system and their timing, it may be really difficult to infer the logical predicates (invariants and guards) that label the states and transitions of a timed automaton model of this system.
As a result, all known learning algorithms for timed automata suffer from combinatorial
explosions, which severely limits their practical usefulness.

For these reasons, it is interesting to consider variations of timed automata whose expressivity is restricted by using timers instead of clocks.
Vaandrager et al.\ \cite{DBLP:conf/lata/VaandragerB021} study deterministic
Mealy machines with a single timer (MM1T). In an MM1T, a transition may start a
timer by setting it to a certain constant. Whenever a timer reaches zero, it
produces an observable timeout symbol that triggers a transition in the
automaton.  Vaandrager et al.\ provide a black-box active learning algorithm for MM1Ts,
and evaluate it on a number of realistic applications, showing that it outperforms the timed automata based approaches of Aichernig et al.\ \cite{AichernigPT20} and An et al.\ \cite{AnCZZZ20}.
However, whereas MM1Ts only support a single timer, the genetic programming approach of~\cite{AichernigPT20} is able to learn models with multiple clocks/timers.

If we want to extend the learning algorithm of~\cite{DBLP:conf/lata/VaandragerB021} to a setting with multiple timers, we need to deal with the issue of \emph{races}, i.e., situations where multiple timers reach zero (and thus timeout) simultaneously.
If a race occurs, then (despite the automaton being deterministic!) the automaton can process the simultaneous timeouts in various orders, leading to nondeterministic behavior.
This means that during learning of an automaton with multiple timers, a learner needs to offer the inputs at specific times in order to avoid the occurrence of races. As long as there are no races, the behavior of the automaton will be deterministic, and a learner may determine, for each timeout, by which preceding input it was caused by slightly \emph{wiggling} the timing of inputs and check whether the timing timeout changes in a corresponding manner. 

\paragraph{Contribution}
In this work,
we take the one-timer definition
from~\cite{DBLP:conf/lata/VaandragerB021} and extend it to multiple timers
while --- to avoid overcomplicating the model --- keeping the restriction that
every transition can start or restart at most one timer. We first study the
complexity of the configuration reachability problem for this model and
establish that it is \PSPACE-complete.
Then, we turn our attention to the problem of determining whether it is possible to wiggle the delays between the inputs in a run, in a way to avoid races.
The importance of the latter is twofold. First, automata with timers may not be an attractive modelling formalism in the presence of behaviors that do not align with those of the real-world systems they are meant to abstract. Second, the absence of races is a key property used in the learning algorithm for automata with a single timer. In this direction, we provide an effective characterization of when an automaton is \robust and establish that the related decision problem is in \THREEEXP{} and \PSPACE-hard.
In a more pragmatic direction, while again leveraging our characterization, we show that with fixed input and timer sets, the problem is in \PSPACE. Finally, we also give some simple yet sufficient conditions for an automaton to be \robust.

\section{Preliminaries}\label{sec:def}

An \emph{automaton with timers} uses a finite set $X$ of \emph{timers}. Intuitively, a timer can be \emph{started} to any integer value to become \emph{active}.  Subsequently, its value is decremented as time elapses (i.e., at the same fixed rate for all timers). When the value of a timer reaches $0$, it \emph{times out} and it is no longer active.
Active timers can also be \emph{stopped}, rendering them inactive, too. 
Such an automaton, along any transition, can stop a number of timers and update a \emph{single} timer.  

Some definitions are in order.
We write $\toevents{X}$ to denote the set $\{\timeout{x} \mid x \in X\}$ of \emph{timeouts of} $X$. We denote by $I$ a
finite set of \emph{inputs}. We write $\hat{I}$ to denote the set $I \cup
\toevents{X}$ of \emph{actions}: either reading an input (an
\emph{input-action}), or processing a timeout (a \emph{timeout-action}).
Finally, we denote by $U = (X \times \natplus) \cup \{\bot\}$ the set of
\emph{updates}, where $(x,c)$ means that timer $x$ is started with value $c$, and $\bot$ stands for no timer update.

\begin{definition}[Automaton with timers]\label{def:AwT}
  An automaton with timers (\AwT, for short) is a tuple $\A =
  (X,I,Q,q_0,\chi,\delta)$ where:
  \begin{itemize}
    \item $X$ is a finite set of timers, $I$ a finite set of inputs,
    \item $Q$ is a finite set of states, with $q_0 \in Q$ the initial state,
    \item $\chi \colon Q \to \subsets{X}$, with $\chi(q_0) = \emptyset$, is a total function that assigns a finite set of active timers to each state,
    \item $\delta \colon Q \times \hat{I} \to Q \times U$ is a partial transition function that assigns a state and an update to each state-action pair, such that 
    \begin{itemize}
        \item  $\delta(q,i)$ is defined iff either $i \in I$ or there is a timer $x \in \chi(q)$ with $i = \timeout{x}$,
        \item if $\delta(q,i) = (q',u)$ with $i = \timeout{x}$ and $u = (y,c)$, then $x = y$ (when processing a timeout $\timeout{x}$, only the timer $x$ can be restarted). 
    \end{itemize}
  \end{itemize}
  Moreover, any transition $t$ of the form $\delta(q, i) = (q', u)$ must be such that
\begin{itemize}
  \item if $u = \bot$, then $\chi(q') \subseteq \chi(q)$ (all timers active in $q'$ were already active in $q$ in case of no timer update); moreover, if $i = \timeout{x}$, then $x \not\in\chi(q')$ (when the timer $x$ times out and is not restarted, then $x$ becomes inactive in $q'$);
  \item if $u = (x,c)$, then $x \in \chi(q')$ and $\chi(q')\setminus \{x\} \subseteq \chi(q)$  ((re)starting the timer \(x\) makes it active in \(q'\)).
  \end{itemize}
When a timer $x$ is active in $q$ and $i \neq \timeout{x}$, we say that the transition $t$ \emph{stops} $x$ if $x$ is inactive in $q'$, and that $t$ \emph{discards} $x$ if $t$ stops $x$ or restarts $x$. 
We write $q \xrightarrow[u]{i} q'$ if $\delta(q,i) = (q',u)$.
\end{definition}

\begin{example}\label{ex:AwT}
  \begin{figure}[t]
    \centering
    \begin{tikzpicture}[
    automaton,
    node distance = 1cm and 2.5cm,
    state/.append style = {
        inner sep = 2pt,
        minimum size = 0pt,
    }
]
    \node [state, initial]          (q0)    {\(q_0\)};
    \node [state, right=of q0]      (q1)    {\(q_1\)};
    \node [state, below=of q1]      (q2)    {\(q_2\)};
    \node [state, below=of q0]      (q3)    {\(q_3\)};

    \path
        (q0)    edge                node [above]    {\(i,(x_1, 1)\)}
                    (q1)
        (q1)    edge [bend left]    node            {\(i, (x_2, 2)\)}
                    (q2)
                edge [loop above]   node            {\(\timeout{x_1}, (x_1, 1)\)}
                    (q1)
        (q2)    edge [loop right]   node            {\(i, (x_1, 1)\)}
                    (q2)
                edge             
                                    node [below]    {\(\timeout{x_1},\bot\)}
                    (q3)
                edge [bend left]   node             {\(\timeout{x_2}, \bot\)}
                    (q1)
        (q3)    edge [loop left]   node            {\(i, (x_2, 1)\)}
                    (q3)
                edge                node            {\(\timeout{x_2}, \bot\)}
                    (q0)
    ;
\end{tikzpicture}
    \caption{An automaton with two timers $x_1,x_2$, such that $\chi(q_0) = \emptyset$, $\chi(q_1) = \{x_1\}$, $\chi(q_2) = \{x_1, x_2\}$, and \(\chi(q_3) = \{x_2\}\).}%
    \label{fig:AwT}
  \end{figure}
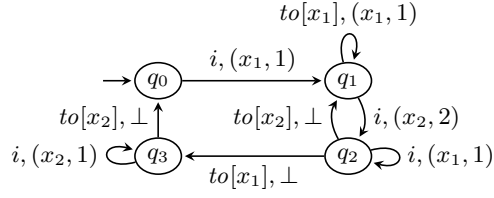

  An \AwT \(\A\) is shown in \Cref{fig:AwT} with set $X = \{x_1, x_2\}$ of timers and with set $I = \{i\}$ of inputs. In the initial state \(q_0\), no timer is active, while \(x_1\) is active in \(q_1\) and \(q_2\), and \(x_2\) is active in \(q_2\) and \(q_3\). That is,
  \(\chi(q_0) = \emptyset, \chi(q_1) = \{x_1\}, \chi(q_2) = \{x_1, x_2\}\), and \(\chi(q_3) = \{x_2\}\). Timer updates are shown in the transitions. For instance, \(x_1\) is started with value $1$ when going from \(q_0\) to \(q_1\).
  The transition looping on $q_2$ discards $x_1$ and restarts it with value $1$.
  \end{example}

\subsection{Timed semantics}

The semantics of an \AwT $\A$ is defined via an infinite-state labeled transition system that describes all possible configurations and transitions between them.

A \emph{valuation} is a partial function $\valuation \colon X \to \nnr$ that assigns nonnegative real numbers to timers. For $Y \subseteq X$, we write $\Val{Y}$ for the set of all valuations $\valuation$ such that $\dom{\valuation} = Y$.\footnote{Notation $\dom{f}$ means the domain of the partial function $f$.}  A \emph{configuration} of $\A$ is a pair $(q, \valuation)$ where $q \in Q$ and $\valuation \in \Val{\chi(q)}$. The \emph{initial configuration} is the pair $(q_0, \valuation_0)$ where $\valuation_0$ is the empty valuation since $\chi(q_0) = \emptyset$. If $\valuation \in \Val{Y}$ is a valuation in which all timers from $Y$ have a value of at least $d \in \nnr$, then $d$ units of time may elapse. We write $\valuation - d \in\Val{Y}$ for the valuation that satisfies $(\valuation -d)(x) = \valuation(x) -d$, for all $x \in Y$. The following rules specify the transitions between configurations $(q, \valuation), (q', \valuation')$.
\begin{gather}
  \label{Rule1}
  \infer{(q, \valuation) \xrightarrow{d} (q, \valuation - d)}
  {\forall x \colon \valuation(x) \geq d}
  \\
  \label{Rule2}
  \infer{(q, \valuation) \xrightarrow[\bot]{i} (q', \valuation')}
  {q \xrightarrow[\bot]{i} q', & i = \timeout{x} \Rightarrow \valuation(x) =
  0, & \forall y \in \chi(q') \colon \valuation'(y) = \valuation(y)}
  \\
  \label{Rule3}
  \infer{(q, \valuation) \xrightarrow[(x,c)]{i} (q', \valuation')}
  {q \xrightarrow[(x,c)]{i} q', & i = \timeout{x} \Rightarrow \valuation(x)
    = 0, & \forall y \in \chi(q') \colon \valuation'(y) = \begin{cases}
    c & \mbox{if } y = x\\
    \valuation(y) & \mbox{otherwise}
    \end{cases}
  }
\end{gather}
Transitions of type~\eqref{Rule1} are called \emph{delay transitions} (delay zero is allowed); those of type~\eqref{Rule2} and~\eqref{Rule3} are called \emph{discrete transitions} (\emph{timeout transitions} when $i = \timeout{x}$ and \emph{input transitions} otherwise). A \emph{timed run} of $\A$ is a sequence of configurations such that delay and discrete transitions alternate. The set $\truns{\A}$ of timed runs is defined inductively as follows.
\begin{itemize}
\item The sequence $(q_0,\valuation_0) \xrightarrow{d} (q_0, \valuation_0 - d)$ is in $\truns{\A}$.
\item Suppose $\rho (q, \valuation)$ is a timed run ending with configuration $(q, \valuation)$, then $\rho' = \rho (q, \valuation) \xrightarrow[u]{i} (q', \valuation') \xrightarrow{d} (q', \valuation' - d)$ is in $\truns{\A}$.
\end{itemize}
A timed run is also written as $\rho = (q_0,\kappa_0) ~ d_1 ~ i_1/u_1
~\dots ~ d_{n} ~ i_n/u_n ~ d_{n+1} ~(q,\kappa)$
such that only the initial configuration $(q_0,\kappa_0)$ and the last configurations $(q,\kappa)$ of $\rho$ are given. 
The \emph{untimed trace} of a timed run $\rho$, denoted $\untime{\rho}$, is the alternating sequence of states and actions from $\rho$, that is,
\(
  \untime{\rho} = q_0 ~ i_1 ~ \dots ~ i_n ~ q
\)
(we omit the valuations, the delays, and the updates).

\begin{example}\label{ex:trun}
  A sample timed run $\rho$ of the \AwT of Example~\ref{ex:AwT} is given below. Notice the transition with delay zero, indicating that two actions occur at the same time.
  \begin{align*}
    &\rho =
      (q_0, \emptyset)
      \! \xrightarrow{1} \!                     (q_0, \emptyset)
      \! \xrightarrow[(x_1,1)]{i} \!            (q_1, x_1 = 1)
      \! \xrightarrow{1} \!                     (q_1, x_1 = 0)
      \! \xrightarrow[(x_2, 2)]{i} \!           (q_2, x_1 = 0, x_2 = 2)\\
      &\! \xrightarrow{0} \!                    (q_2, x_1 = 0, x_2 = 2)
      \! \xrightarrow[\bot]{\timeout{x_1}}\!    (q_3, x_2 = 2)
      \! \xrightarrow{2} \!                     (q_3, x_2 = 0)
      \! \xrightarrow[\bot]{\timeout{x_2}} \!   (q_0, \emptyset)
      \! \xrightarrow{0.5} \!                   (q_0, \emptyset).
  \end{align*}
  The untimed trace of $\rho$ is $\untime{\rho} = q_0 ~ i ~ q_1 ~ i ~ q_2 ~\timeout{x_1} ~ q_3 ~ \timeout{x_2} ~q_0$.
\end{example}

\subsection{Blocks and races}\label{subsec:blocks}

In this section, given an \AwT $\A$, we focus on its timed runs $\rho = (q_0,\kappa_0) ~ d_1 ~ i_1/u_1 ~ \cdots$ $d_n ~ i_n/u_n ~ d_{n+1} ~ (q,\kappa)$ such that their first and last delays are non-zero and no timer times out in their last configuration, i.e., $d_1 > 0, d_{n+1} > 0$ and $\kappa(x) \ne 0$ for all $x \in \chi(q)$.\footnote{The reason for this choice will be clarified at the end of this section.} Such runs are called \emph{\maximal}, and we denote by $\mtruns{\A}$ the set of all \maximal timed runs of $\A$. To have a good intuition about \maximal timed runs, their decomposition into \emph{blocks} is helpful and will be often used in the proofs. A block is
composed of an input $i$ that starts a timer $x$ and of the succession of
timeouts and restarts of $x$, that $i$ induces inside a timed run. Let us
formalize this notion. Consider a \maximal timed run $\rho = (q_0,\kappa_0) ~ d_1 ~ i_1/u_1 ~ \dots ~ d_n ~ i_n/u_n ~ d_{n+1} ~ (q,\kappa)$ of an \AwT. Let $k,k'$
be such that $1 \leq k < k' \leq n$. We say that $i_k$ \emph{triggers}
$i_{k'}$ if there is a timer $x$ such that:
\begin{itemize}
\item $i_k$ (re)starts $x$, that is, $u_k = (x,c)$,
\item $i_{k'}$ is the action $\timeout{x}$, and 
\item there is no $\ell$ with $k < \ell < k'$ such that $i_\ell = \timeout{x}$
or $i_\ell$ discards $x$.
\end{itemize}
Note that $i_{k'}$ may restart $x$ or not, and if it does, $x$ later times out or is discarded. 

\begin{definition}[Block]
Let $\rho = (q_0,\kappa_0) ~ d_1 ~ i_1/u_1 ~ \dots ~ d_n ~ i_n/u_n ~
d_{n+1} ~ (q,\kappa)$ be a \maximal timed run of an \AwT.  A \emph{block} of $\rho$ is a pair
$B = ({k_1}  {k_2}  \dotsc  {k_m}, \timerFate)$ such that $i_{k_1},
i_{k_2},  \dotsc,  i_{k_m}$ is a maximal subsequence of actions of \(\rho\) such
that \(i_{k_1} \in I\), \(i_{k_\ell}\) triggers \(i_{k_{\ell+1}}\) for all
\(1 \leq \ell < m\), and \(\gamma\) is the \emph{timer fate} of \(B\)
defined as:  
\[
  \timerFate = \begin{cases}
    \bot & \text{if \(i_{k_m}\) does not restart $x$,} \\
    \nilTimerKilled & \text{if \(i_{k_m}\) restarts $x$ which is discarded by some $i_\ell$, with $k_m < \ell \leq n$},
    \\ & \text{when its value is zero,}\\
    \nonNilTimerKilled & \text{otherwise.}
  \end{cases}
\]
\end{definition}

In the timer fate definition, consider the case where \(i_{k_m}\) restarts $x$. For the purposes of \Cref{sec:wiggle_run}, it is convenient to know whether $x$ is later discarded or not, and in case it is discarded, whether this occurs when its value is zero ($\timerFate = \nilTimerKilled$). Hence, $\timerFate = \nonNilTimerKilled$ covers both situations where $x$ is discarded with non-zero value, and $x$ is still active in the last configuration $(q,\valuation)$ of $\rho$. Notice that in the latter case, $x$  has also non-zero value in $(q,\kappa)$ as $\rho$ is \maximal.
When no confusion is possible, we denote a block by a sequence of inputs rather
than the corresponding sequence of indices,
that is, \(B = (i_{k_1}  i_{k_2}  \dotsc  i_{k_m}, \timerFate)\).
In the sequel, we use notation $i \in B$ to denote an action $i$ belonging to the sequence of $B$.

By definition of an \AwT, recall that the same timer $x$ is restarted along a block $B$. Hence we also say that $B$ is an \emph{$x$-block}. Note also that the sequence of a block can be composed of a single input \(i \in I\).

As this notion of blocks is not trivial but plays a great role in this paper, let us give several examples illustrating multiple situations.

\begin{example}\label{ex:blocks}
Consider the timed run $\rho$ of Example~\ref{ex:trun} from the \AwT $\A$ depicted in Figure~\ref{fig:AwT}. It has two blocks: an $x_1$-block \(B_1 = (i ~ \timeout{x_1}, \bot)\) and an $x_2$-block \(B_2 = (i ~ \timeout{x_2}, \bot)\), both represented in \Cref{fig:ex:runs:rho}.\footnote{When using the action indices in the blocks, we have \(B_1 = (1 ~3, \bot)\) and \(B_2 = (2 ~ 4, \bot)\).} In this visual representation of the blocks, time flows left to right and is represented by the thick horizontal line.
A \enquote{gap} in that line indicates that the time is stopped, i.e., the delay between two consecutive actions is zero. We draw a vertical line for each input, and join together inputs belonging to a block by a horizontal (non-thick) line.

Consider another timed run $\sigma$ from $\A$:
  \begin{align*}
    \sigma = (q_0, \emptyset)
    &\! \xrightarrow{1} \!                      (q_0, \emptyset)
    \! \xrightarrow[(x_1, 1)]{i} \!             (q_1, x_1 = 1)
    \! \xrightarrow{1} \!                       (q_1, x_1 = 0)
    \! \xrightarrow[(x_1, 1)]{\timeout{x_1}} \! (q_1, x_1 = 1)\\
    &\! \xrightarrow{0} \!                      (q_1, x_1 = 1)
    \! \xrightarrow[(x_2, 2)]{i} \!             (q_2, x_1 = 1, x_2 = 2)
    \! \xrightarrow{1} \!                       (q_2, x_1 = 0, x_2 = 1)\\
    &\! \xrightarrow[\bot]{\timeout{x_1}} \!    (q_3, x_2 = 1)
    \! \xrightarrow{1} \!                       (q_3, x_2 = 0)
    \! \xrightarrow[\bot]{\timeout{x_2}} \!     (q_0, \emptyset)
    \! \xrightarrow{0.5} \!                     (q_0, \emptyset).
  \end{align*}
  This timed run has also two blocks represented in \Cref{fig:ex:runs:sigma}, such that \(B_1 = (i ~
  \timeout{x_1}~ \timeout{x_1}, \bot)\) with \(x_1\) timing out twice.  

\begin{figure}[t]
  \centering
  \begin{subfigure}{.23\textwidth}
    \centering
    \begin{tikzpicture}[xscale=0.6]
    \coordinate (startB1)   at (0.5, 0);
    \coordinate (1toB1)     at (1.7, 0);
    \coordinate (startB2)   at (1.5, 0);
    \coordinate (1toB2)     at (3.7, 0);
    \coordinate (end)       at (4, 0);

    \foreach \s/\t/\h/\l in {startB1/1toB1/0.35/B_1, startB2/1toB2/-0.35/B_2} {
        \draw[block]
            let
                \p{s} = (\s),
                \p{t} = (\t)
            in
                (\p{s})
                    -- node [left=-0.1] {\(\l\)} (\x{s}, \h)
                    -- (\x{t}, \h)
                    -- (\p{t})
        ;
    }

    \draw[timeline]
        (0, 0) -- (startB2)
        (1toB1) -- (1toB2)
        (1toB2) -- (end)
    ;
    \draw[hole]
        (startB2) -- (1toB1)
    ;
\end{tikzpicture}
    \caption{Timed run \(\rho\).}%
    \label{fig:ex:runs:rho}
  \end{subfigure}
  \begin{subfigure}{.23\textwidth}
    \centering
    \begin{tikzpicture}[xscale=0.6]
    \coordinate (startB1)   at (0.5, 0);
    \coordinate (1toB1)     at (1.5, 0);
    \coordinate (2toB1)     at (2.7, 0);
    \coordinate (startB2)   at (1.7, 0);
    \coordinate (1toB2)     at (3.7, 0);
    \coordinate (end)       at (4, 0);

    \foreach \s/\t/\h/\l in {startB1/1toB1/0.35/B_1, 1toB1/2toB1/0.35/,
        startB2/1toB2/-0.35/B_2} {
        \draw[block]
            let
                \p{s} = (\s),
                \p{t} = (\t)
            in
                (\p{s})
                    -- node [left=-0.1] {\(\l\)} (\x{s}, \h)
                    -- (\x{t}, \h)
                    -- (\p{t})
        ;
    }

    \draw[timeline]
        (0, 0) -- (1toB1)
        (startB2) -- (1toB2)
        (1toB2) -- (end)
    ;
    \draw[hole]
        (1toB1) -- (startB2)
    ;
\end{tikzpicture}
    \caption{Timed run \(\sigma\).}%
    \label{fig:ex:runs:sigma}
  \end{subfigure}
    \begin{subfigure}{.23\textwidth}
    \centering
    \begin{tikzpicture}[xscale=0.7]
    \coordinate (startB1)   at (0.5, 0);
    \coordinate (1toB1)     at (1.7, 0);
    \coordinate (startB2)   at (0.7, 0);
    \coordinate (1toB2)     at (2.7, 0);
    \coordinate (startB3)   at (1.7, 0);
    \coordinate (1toB3)     at (2.9, 0);
    \coordinate (2toB3)     at (3.2, 0);
    \coordinate (end)       at (3.2, 0);
    
    \foreach \s/\t/\h/\l in {startB2/1toB2/-0.35/B_2, startB3/1toB3/0.35/B_3} {
        \draw[block]
            let
                \p{s} = (\s),
                \p{t} = (\t)
            in
                (\p{s})
                    -- node [left=-0.1] {\(\l\)} (\x{s}, \h)
                    -- (\x{t}, \h)
                    -- (\p{t})
        ;
    }

    \foreach \s/\t/\h/\type/\l in {
        startB1/1toB1/0.5/nullTimerKilled/B_1,
        1toB3/2toB3/0.35/nonNullTimerKilled/
    } {
        \draw[block]
            let
                \p{s} = (\s),
            in
                (\p{s})
                    -- node [left=-0.1] {\(\l\)} (\x{s}, \h)
        ;
        \draw[blockBeforeKilled]
            let
                \p{s} = (\s),
                \p{t} = (\t),
            in
                (\x{s}, \h)
                    -- (\x{t}, \h)
                node[\type] at (\x{t}, \h) {}
        ;
    }

    \draw[timeline]
        (0, 0) -- (startB1)
        (startB2) -- (1toB2)
        (1toB3) -- (end)
    ;
    \draw[hole]
        (startB1) -- (startB2)
        (1toB2) -- (1toB3)
    ;
\end{tikzpicture}
    \caption{Timed run \(\pi\).}%
    \label{fig:ex:runs:pi}
  \end{subfigure}
  \begin{subfigure}{.23\textwidth}
    \centering
    \begin{tikzpicture}[xscale=0.75]
    \coordinate (startB1)   at (0.5, 0);
    \coordinate (1toB1)     at (1.2, 0);
    \coordinate (startB2)   at (0.7, 0);
    \coordinate (1toB2)     at (2.7, 0);
    \coordinate (startB3)   at (1.2, 0);
    \coordinate (1toB3)     at (2.2, 0);
    \coordinate (end)       at (3, 0);
    
    \foreach \s/\t/\h/\l in {startB2/1toB2/-0.35/B_2, startB3/1toB3/0.35/B_3} {
        \draw[block]
            let
                \p{s} = (\s),
                \p{t} = (\t)
            in
                (\p{s})
                    -- node [left=-0.1] {\(\l\)} (\x{s}, \h)
                    -- (\x{t}, \h)
                    -- (\p{t})
        ;
    }

    \foreach \s/\t/\h/\type/\l in {
        startB1/1toB1/0.5/nonNullTimerKilled/B_1
    } {
        \draw[block]
            let
                \p{s} = (\s),
            in
                (\p{s})
                    -- node [left=-0.1] {\(\l\)} (\x{s}, \h)
        ;
        \draw[blockBeforeKilled]
            let
                \p{s} = (\s),
                \p{t} = (\t),
            in
                (\x{s}, \h)
                    -- (\x{t}, \h)
                node[\type] at (\x{t}, \h) {}
        ;
    }

    \draw[timeline]
        (0, 0) -- (startB1)
        (startB2) -- (1toB2)
        (1toB2) -- (end)
    ;
    \draw[hole]
        (startB1) -- (startB2)
    ;
\end{tikzpicture}
    \caption{Timed run \(\tau\).}%
    \label{fig:ex:runs:tau}
  \end{subfigure}
  \caption{Block representations of four timed runs.}%
  \label{fig:ex:runs}
\end{figure}
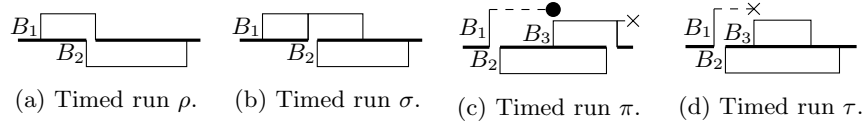

  We conclude this example with two other timed runs, $\pi$ and $\tau$, such that some of their blocks have a timer fate $\timerFate \neq \bot$. Let \(\pi\) and \(\tau\) be the timed runs:
  \begin{gather*}
    \begin{split}
        \pi ={}
        &(q_0, \emptyset)
        \! \xrightarrow{1} \!                       (q_0, \emptyset)
        \! \xrightarrow[(x_1, 1)]{i} \!             (q_1, x_1 = 1)
        \! \xrightarrow{0} \!                       (q_1, x_1 = 1)
        \! \xrightarrow[(x_2, 2)]{i}\!              (q_2, x_1 = 1, x_2 = 2)\\
        &\! \xrightarrow{1} \!                      (q_2, x_1 = 0, x_2 = 1)
        \! \xrightarrow[(x_1, 1)]{i} \!             (q_2, x_1 = 1, x_2 = 1)
        \! \xrightarrow{1} \!                       (q_2, x_1 = 0, x_2 = 0)\\
        &\! \xrightarrow[\bot]{\timeout{x_2}} \!    (q_1, x_1 = 0)
        \! \xrightarrow{0} \!                       (q_1, x_1 = 0)
        \! \xrightarrow[(x_1, 1)]{\timeout{x_1}} \! (q_1, x_1 = 1)
        \! \xrightarrow{0.5} \!                     (q_1, x_1 = 0.5)
    \end{split}\\
    \begin{split}
        \tau ={}
        &(q_0, \emptyset)
        \! \xrightarrow{1} \!                       (q_0, \emptyset)
        \! \xrightarrow[(x_1, 1)]{i} \!             (q_1, x_1 = 1)
        \! \xrightarrow{0} \!                       (q_1, x_1 = 1)
        \! \xrightarrow[(x_2, 2)]{i} \!             (q_2, x_1 = 1, x_2 = 2)\\
        &\! \xrightarrow{0.5} \!                    (q_2, x_1\! =\! 0.5, x_2\! =\! 1.5)
        \! \xrightarrow[(x_1, 1)]{i} \!             (q_2, x_1\! =\! 1, x_2\! =\! 1.5)
        \! \xrightarrow{1} \!                       (q_2, x_1\! =\! 0, x_2\! =\! 0.5)\\
        &\! \xrightarrow[\bot]{\timeout{x_1}} \!    (q_3, x_2 = 0.5)
        \! \xrightarrow{0.5} \!                     (q_3, x_2 = 0)
        \! \xrightarrow[\bot]{\timeout{x_2}} \!     (q_0, \emptyset)
        \! \xrightarrow{0.5} \!                     (q_0, \emptyset).
        \end{split}
  \end{gather*}
 The run \(\pi\) has three blocks \(B_1 = (i, \nilTimerKilled)\) ($x_1$ is started by $i$ and then discarded while its value is zero), \(B_2 = (i ~ \timeout{x_2}, \bot)\), and \(B_3 = (i ~ \timeout{x_1}, \nonNilTimerKilled)\) (\(x_1\) is again started in $B_3$ but $\pi$ ends before \(x_1\) reaches value zero). Those blocks are represented in \Cref{fig:ex:runs:pi}, where we visually represent the timer fate of $B_1$ (resp. $B_3$) by a dotted line finished by $\nilTimerKilled$ (resp. $\nonNilTimerKilled$). Finally, the run $\tau$ has its blocks depicted in \Cref{fig:ex:runs:tau}. This time, \(x_1\) is discarded before reaching zero, i.e., \(B_1 = (i, \nonNilTimerKilled)\).
\end{example}

As illustrated by the previous example, blocks satisfy the following property.

\begin{lemma}
  Let $\rho = (q_0,\kappa_0) ~ d_1 ~ i_1/u_1 ~ \dots ~ d_n ~ i_n/u_n ~ d_{n+1}  ~(q,\kappa)$ be a \maximal timed run of an \AwT. Then, the sequences of the blocks of $\rho$ form a partition of the set of indices $\{1, \dots, n\}$ of the actions of $\rho$.
\end{lemma}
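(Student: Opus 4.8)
The plan is to show that the \enquote{triggers} relation is the graph of two mutually inverse partial functions on action indices, so that the blocks of $\rho$ are exactly the maximal chains under it, and these visibly partition $\{1,\dots,n\}$. Concretely, I would first prove that every timeout-action has a \emph{unique} triggering action. Fix $j$ with $i_j = \timeout{x}$; write the states visited along the discrete transitions of $\rho$ as $q_0, q_1, \dots, q_n$ (so $q_n = q$ and $i_\ell$ leads from $q_{\ell-1}$ to $q_\ell$). Definedness of $\delta(q_{j-1},\timeout{x})$ forces $x \in \chi(q_{j-1})$. A short induction using the $\chi$-constraints of \Cref{def:AwT} shows that a timer absent from $\chi(q_0)=\emptyset$ can enter the active set only along a transition whose update (re)starts it; hence $A = \{k<j : u_k=(x,c) \text{ for some } c\}$ is nonempty, and I take $k=\max A$ as the candidate trigger. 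The first two clauses of \enquote{triggers} then hold trivially, and the third is where the work lies: if some $\ell$ with $k<\ell<j$ had $i_\ell=\timeout{x}$ or discarded $x$, then either $i_\ell$ restarts $x$ (so $\ell\in A$ with $\ell>k$, impossible), or $x$ leaves $\chi$ at step $\ell$ and, since $x\in\chi(q_{j-1})$, must re-enter via a later update in $A$, again contradicting maximality of $k$. Uniqueness is then easy: any other trigger $k'$ lies in $A$, so $k'\le k$, and $k'<k$ is impossible because $i_k$ would be an intermediate (re)start of $x$ strictly between $k'$ and $j$, violating the third clause for $i_{k'}$.

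Dually, I would check that an action $i_k$ with $u_k=(x,c)$ triggers \emph{at most one} action, since a second triggered timeout of $x$ would be an intermediate $\timeout{x}$ forbidden by the third clause. Thus $\mathrm{parent}(\cdot)$ (the trigger of a timeout) and $\mathrm{child}(\cdot)$ (the unique action triggered by a (re)start, when it exists) are mutually inverse partial functions, with $\mathrm{parent}(j)<j$ whenever defined and $\mathrm{dom}(\mathrm{parent})$ equal to the set of timeout-indices. Iterating $\mathrm{parent}$ from any index therefore terminates at an index with no parent, i.e.\ an input-action, which I call $\mathrm{root}(j)$. Grouping indices by their root gives an equivalence relation whose class of an input $i_{k_1}$ is exactly $\{k_1,\mathrm{child}(k_1),\mathrm{child}^2(k_1),\dots\}$ (using that $\mathrm{parent}$ and $\mathrm{child}$ are mutual inverses); listed in increasing order this set is precisely a block --- it begins with an input, consecutive entries trigger one another, and it is maximal since its first element has no parent, its last no child, and any larger admissible subsequence is forced to follow the $\mathrm{child}$-chain --- and conversely every block arises this way. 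Hence the blocks of $\rho$ are exactly these equivalence classes, which partition $\{1,\dots,n\}$.

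I expect the only genuine obstacle to be the bookkeeping in the first step: tracking precisely how a timer enters and leaves $\chi$ along a single transition, so as to conclude that between its last (re)start and the timeout $i_j$ it is neither discarded nor times out. This argument uses every clause of \Cref{def:AwT} relating $\chi(q')$ to $\chi(q)$ (the $u=\bot$ case, the $u=(x,c)$ case, and the dedicated clause for $i=\timeout{x}$), together with the observation that a timer can re-enter the active set only via an update that (re)starts it. Everything else --- the inverse-function bookkeeping and the identification of blocks with chains --- is routine.
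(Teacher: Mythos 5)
The paper states this lemma without proof, treating it as an immediate observation after \Cref{ex:blocks}; there is thus no proof of the paper's to compare against. Your argument is correct. You identify the right structural content: the ``triggers'' relation is the graph of two mutually inverse partial functions (each timeout has a unique triggering (re)start, each (re)start triggers at most one timeout), so the maximal chains rooted at input-actions both partition $\{1,\dots,n\}$ and coincide with the blocks of \(\rho\). Your first-step induction on the \(\chi\)-constraints --- that a timer can only enter the active set via an update that (re)starts it --- is exactly the right supporting fact, and your case analysis of the third clause of ``triggers'' is complete.

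One small spot to tighten: in the uniqueness argument you say ``$i_k$ would be an intermediate (re)start of $x$ strictly between $k'$ and $j$, violating the third clause for $i_{k'}$,'' but the third clause speaks of $i_\ell = \timeout{x}$ or $i_\ell$ \emph{discards} $x$, and the definition of ``discards'' presupposes that $x$ is active in the source state $q_{k-1}$. You should make explicit that this holds: since $k'$ triggers $j$, the third clause for $(k',j)$ forbids any $\ell$ with $k' < \ell < j$ at which $x$ times out or is discarded, hence $x$ remains in $\chi$ throughout $q_{k'},\dots,q_{j-1}$; in particular $x \in \chi(q_{k-1})$, so $i_k$'s (re)start of $x$ is indeed a discard (or $i_k = \timeout{x}$, which is directly forbidden). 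With that one sentence filled in, the proof is complete.
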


Along a timed run of an \AwT \(\A\), it can happen that a timer times out at the same time that another action takes place.
This leads to a sort of \emph{nondeterminism}, as \(\A\) can process those concurrent actions in any order. This situation appears in \Cref{ex:blocks} each time a gap appears in the time lines of \Cref{fig:ex:runs}.
We call these situations \emph{races} that we formally define as follows. 

\begin{definition}[Race]\label{def:race}
Let \(B, B'\) be two blocks of a \maximal timed run \(\rho\) with timer fates \(\timerFate\) and \(\timerFate'\).  We say that
\(B\) and \(B'\) \emph{participate in a race} if:
\begin{itemize}
  \item either there exist actions \(i \in B\) and \(i' \in B'\) such that the sum of the delays between \(i\) and \(i'\) in $\rho$ is equal to zero, i.e., no time elapses between them,
  \item or there exists an action \(i \in B\) that is the first action along \(\rho\) to discard the timer started by the last action \(i' \in B'\) and \(\timerFate' = \nilTimerKilled\), i.e., the timer of \(B'\) (re)started by $i'$ reaches value zero when \(i\) discards it.
\end{itemize}
We also say that the actions $i$ and $i'$ participate in this race.
\end{definition}

The first case of the race definition appears in \Cref{fig:ex:runs:rho}, while the second case appears in \Cref{fig:ex:runs:pi} (see the race in which blocks $B_1$ and $B_3$ participate). The nondeterminism is highlighted in \Cref{fig:ex:runs:rho,fig:ex:runs:sigma} where two actions ($i$ and $\timeout{x}$) occur at the same time but are processed in a different order in each figure. Unfortunately, imposing a particular way of resolving races (i.e.\ imposing a particular action order) may seem arbitrary when modelling real-world systems. It is therefore desirable for the set of sequences of actions along timed runs to be independent to the resolution of races. 

\begin{definition}[\Robust]\label{def:robust}
  An \AwT $\A$ is \emph{\robust} iff for all \maximal timed runs $\rho \in \mtruns{\A}$ with races, there exists some $\rho' \in \mtruns{\A}$ with no races such that $\untime{\rho'} = \untime{\rho}$.
\end{definition}

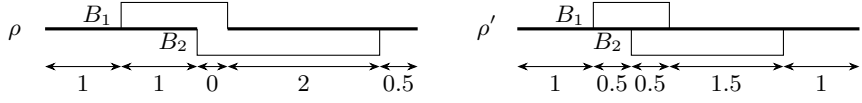
\begin{figure}[t]
    \centering
    \begin{subfigure}{.48\textwidth}
      \centering
      \begin{tikzpicture}
    \node [text centered] at (-0.4, 0) {$\rho$\strut};

    \coordinate (startB1)   at (1, 0);
    \coordinate (1toB1)     at (2.4, 0);
    \coordinate (startB2)   at (2, 0);
    \coordinate (1toB2)     at (4.4, 0);
    \coordinate (end)       at (4.9, 0);

    \foreach \s/\t/\h/\l in {startB1/1toB1/0.35/B_1, startB2/1toB2/-0.35/B_2} {
        \draw[block]
            let
                \p{s} = (\s),
                \p{t} = (\t)
            in
                (\p{s})
                    -- node [left] {\(\l\)} (\x{s}, \h)
                    -- (\x{t}, \h)
                    -- (\p{t})
        ;
    }

    \draw[timeline]
        (0, 0) -- (startB2)
        (1toB1) -- (end)
    ;
    \draw[hole]
        (startB2) -- (1toB1)
    ;

    \coordinate (offset) at (0, -0.5);

    \draw [delay]
        (offset)
            -- node [below] {1}
                ($(startB1) + (offset)$)
    ;
    \draw [delay]
        ($(startB1) + (offset)$)
            -- node [below] {1}
                ($(startB2) + (offset)$)
    ;
    \draw [delay]
        ($(startB2) + (offset)$)
            -- node [below] {0}
                ($(1toB1) + (offset)$)
    ;
    \draw [delay]
        ($(1toB1) + (offset)$)
            -- node [below] {2}
                ($(1toB2) + (offset)$)
    ;
    \draw [delay]
        ($(1toB2) + (offset)$)
            -- node [below] {\(0.5\)}
                ($(end) + (offset)$)
    ;
\end{tikzpicture}
    \end{subfigure}
    \begin{subfigure}{.48\textwidth}
      \centering
      \begin{tikzpicture}
    \node [text centered] at (-0.4, 0) {$\rho'$\strut};

    \coordinate (startB1)   at (1, 0);
    \coordinate (1toB1)     at (2, 0);
    \coordinate (startB2)   at (1.5, 0);
    \coordinate (1toB2)     at (3.5, 0);
    \coordinate (end)       at (4.5, 0);

    \foreach \s/\t/\h/\l in {startB1/1toB1/0.35/B_1, startB2/1toB2/-0.35/B_2} {
        \draw[block]
            let
                \p{s} = (\s),
                \p{t} = (\t)
            in
                (\p{s})
                    -- node [left] {\(\l\)} (\x{s}, \h)
                    -- (\x{t}, \h)
                    -- (\p{t})
        ;
    }

    \draw[timeline]
        (0, 0) -- (end)
    ;

    \coordinate (offset) at (0, -0.5);

    \draw [delay]
        (offset)
            -- node [below] {1}
                ($(startB1) + (offset)$)
    ;
    \draw [delay]
        ($(startB1) + (offset)$)
            -- node [below] {$0.5$}
                ($(startB2) + (offset)$)
    ;
    \draw [delay]
        ($(startB2) + (offset)$)
            -- node [below] {$0.5$}
                ($(1toB1) + (offset)$)
    ;
    \draw [delay]
        ($(1toB1) + (offset)$)
            -- node [below] {$1.5$}
                ($(1toB2) + (offset)$)
    ;
    \draw [delay]
        ($(1toB2) + (offset)$)
            -- node [below] {$1$}
                ($(end) + (offset)$)
    ;
\end{tikzpicture}
    \end{subfigure}
    \caption{Modifying the delays in order to remove a race.}%
    \label{fig:move}\label{fig:rho_prime}
\end{figure}
 
\begin{example}\label{ex:robust}
Let us come back to the timed run $\rho$ of \Cref{ex:trun} that contains a race (see \Cref{fig:ex:runs:rho}).
By moving the second occurrence of action \(i\) slightly earlier in \(\rho\), we obtain the timed run \(\rho'\):
  \begin{align*}
    &\rho'\! ={} \!
    (q_0, \emptyset)
    \! \xrightarrow{1} \!                       (q_0, \emptyset)
    \! \xrightarrow[(x_1, 1)]{i} \!             (q_1, x_1\! =\! 1)
    \! \xrightarrow{0.5} \!                     (q_1, x_1\! =\! 0.5)
    \! \xrightarrow[(x_2, 2)]{i} \!             (q_2, x_1\! =\! 0.5, x_2\! =\! 2)\\
    &\! \xrightarrow{0.5} \!                    (q_2, x_1\! =\! 0, x_2\! =\! 1.5)
    \! \xrightarrow[\bot]{\timeout{x_1}} \!     (q_3, x_2\! =\! 1.5)
    \! \xrightarrow{1.5} \!                     (q_3, x_2\! =\! 0)
    \! \xrightarrow[\bot]{\timeout{x_2}} \!     (q_0, \emptyset)
    \! \xrightarrow{1} \!                       (q_0, \emptyset).
  \end{align*}
Notice that \(\untime{\rho'} = q_0 ~ i ~ q_1 ~ i ~ q_2 ~ \timeout{x_1} ~ q_3 ~ \timeout{x_2} ~ q_0 = \untime{\rho'}\). Moreover, $\rho'$ contains no races as indicated in \Cref{fig:move}.
\end{example}

Notice that several blocks could participate in the same race. The notion of block has been defined for \maximal timed runs only,
as we do not want to consider runs that end \emph{abruptly} during a race (some pending timeouts may not be processed at the end of the run, for instance). Moreover, it is always possible for the first delay to be positive as no timer is active in the initial state. Finally, non-zero delays at the start and the end of the runs allow to move blocks as introduced in \Cref{ex:robust} and further detailed in \Cref{sec:wiggle_run}.

In \Cref{sec:robust}, we study whether it is decidable that an \AwT is \robust,
and how to eliminate races in a \robust \AwT while keeping the same traces.
Before, we study the (classical) reachability problem in \Cref{sec:reachability}.  
 
\section{Reachability}\label{sec:reachability}\label{subsec:regionAut}

The \emph{reachability problem} asks, given an \AwT $\A$ and a state $q$, whether there exists a timed run $\rho \in \truns{\A}$ from the initial configuration
to some configuration $(q,\valuation)$. In this section, we argue that this problem is \PSPACE-complete. 

\begin{restatable}{theorem}{thmReachability}\label{thm:reachability}
    The reachability problem for \AwTs is \PSPACE-complete.
\end{restatable}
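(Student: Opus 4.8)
I would establish membership in \PSPACE\ and \PSPACE-hardness separately; the first by adapting the classical region construction for timed automata to the \emph{decreasing}-timer semantics of \AwTs, the second by simulating a polynomial-space Turing machine.

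\medskip
\noindent\textbf{Membership in \PSPACE.}
Let $c_{\max}$ be the largest constant occurring in an update of $\A$; since updates are written in binary, $\log c_{\max}$ is polynomial in $\lengthOf{\A}$, and along any timed run every active timer has value in $[0,c_{\max}]$, hence integer part in $\{0,\dots,c_{\max}\}$. For $\valuation,\valuation'\in\Val{Y}$ I call $\valuation\regionRelation\valuation'$ when (i) $\lfloor\valuation(x)\rfloor=\lfloor\valuation'(x)\rfloor$ for all $x\in Y$, (ii) the fractional part of $\valuation(x)$ is $0$ iff that of $\valuation'(x)$ is $0$, and (iii) $\{\valuation(x)\}\le\{\valuation(y)\}$ iff $\{\valuation'(x)\}\le\{\valuation'(y)\}$ for all $x,y\in Y$. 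A region (an $\regionRelation$-class) is described by one integer in $\{0,\dots,c_{\max}\}$ per active timer together with an ordered partition of $Y$ recording the fractional-part pattern, hence by $O(\lvert X\rvert\log c_{\max})$ bits, which is polynomial. Let $\regionAutomaton_\A$ be the corresponding region automaton: its states are pairs $(q,r)$ with $r$ a region of $\Val{\chi(q)}$, and its edges mimic the delay and discrete rules~\eqref{Rule1}--\eqref{Rule3} up to $\regionRelation$ (time-successor regions for delays, $\delta$ for discrete steps). This automaton has exponentially many states, but each state, and each edge test, has a polynomial-size description. The key lemma, proved exactly as in the classical case, is that $\regionRelation$ is a time-abstract bisimulation: if $\valuation\regionRelation\valuation'$ then every delay or discrete transition out of $(q,\valuation)$ is matched, up to $\regionRelation$, by one out of $(q,\valuation')$ (a delay transition possibly requiring a different amount of time, obtained because the combinatorial order of the events ``a fractional part hits $0$'', ``two fractional parts swap'', ``a floor decreases'' is the same for $\regionRelation$-equivalent valuations). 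By induction on the length of a timed run, $(q,\valuation)$ is reachable in the timed semantics iff the region of $(q,\valuation)$ is reachable in $\regionAutomaton_\A$; the base case is immediate since $\chi(q_0)=\emptyset$, so the initial region is trivial. Finally, any reachable state of $\regionAutomaton_\A$ is reachable by a path of at most exponential length, so a nondeterministic procedure can guess such a path step by step while storing only the current region and a step counter of polynomially many bits; hence the problem is in nondeterministic polynomial space, i.e., in \PSPACE\ by Savitch's theorem.

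\medskip
\noindent\textbf{\PSPACE-hardness.}
I would reduce from a canonical \PSPACE-complete problem, say acceptance of a deterministic Turing machine $M$ running in space $p(n)$ on an input $w$ (equivalently, \textsc{Qbf}, or a bounded tiling problem). The tape is encoded with timers: for each cell $k\le p(\lengthOf{w})$ use one timer per tape symbol, with the invariant that exactly one of them is active; the finite control of $M$, the head position, and a ``scan index'' $k$ are kept in the polynomial-size control of the \AwT. A dedicated pacemaker timer, together with the cell timers, is restarted so that cells are revisited in the cyclic order $1,2,\dots,p(n),1,2,\dots$; when the scan reaches cell $k$ the \AwT\ performs the local read and write prescribed by $M$ and then passes to $k+1$. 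Because a transition may (re)start at most one timer, the update of a single cell is spread over a constant number of micro-transitions (stop the timer of the old symbol, start the timer of the new one, advance the pacemaker), arranged so that the other cells are left untouched. One then checks that a designated accepting state of the \AwT\ is reachable iff $M$ accepts $w$, and that the construction is polynomial.

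\medskip
\noindent I expect the hardness direction to be the main obstacle. \AwTs\ have no guards: a timeout fires exactly when a timer hits $0$, so the only information a run can exploit is the \emph{order} of timeouts, while the delays between actions are chosen freely. One therefore has to engineer the pacemaker and cell-timer values so that the round-robin scan behaves correctly for \emph{every} admissible choice of real delays --- in effect, reasoning in the region automaton of the gadget rather than with concrete timings --- and at the same time respect the ``at most one timer update per transition'' restriction without corrupting the encoded tape. By contrast, the \PSPACE\ upper bound is routine once the region construction is reformulated for decreasing timers and one verifies that the per-region representation remains polynomial despite the binary encoding of the constants.
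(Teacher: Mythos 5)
Your upper bound is essentially the paper's: you define the region equivalence on decreasing-timer valuations, observe a region is described in polynomial space despite exponentially many regions, show $\regionRelation$ is a time-abstract bisimulation, and conclude by an on-the-fly nondeterministic search in polynomial space followed by $\NPSPACE=\PSPACE$. No issues there.

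The hardness direction has the right high-level shape --- simulate a space-bounded machine by encoding the tape in timers and cycling through the cells with a pacemaker --- but the encoding you choose is different from the paper's and the parts you defer are exactly the parts that are not routine. The paper reduces from LBTM acceptance and uses \emph{one} timer $x_i$ per cell (plus one pacemaker $x$ restarted with value $1$); the symbol $a_j$ stored in cell $i$ is encoded in \emph{when} $x_i$ expires within a $2k{+}2$-unit phase ($x_i$ is kept at value $2j$ modulo the phase length, so $j = \lfloor \clock/2\rfloor$ when $\timeout{x_i}$ fires). You instead propose one timer $x_{k,a}$ per \emph{(cell, symbol)} pair with the invariant that exactly one per cell is active, and to read the tape from \emph{which} timer fires. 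This costs a factor $\lvert\Sigma\rvert$ in the number of timers (still polynomial, so not fatal) but you never explain the mechanism that would make the cell timers expire in a distinguishable, reproducible order, which is what actually lets the \AwT read off the tape; this is where the paper's phase/clock gadget does all the work. Two further issues: (i) your ``stop the timer of the old symbol, start the timer of the new one'' micro-step collides with the \AwT constraint that a $\timeout{x}$-transition may only restart $x$ itself; switching a cell from $a$ to $b$ therefore has to drop $x_{k,a}$ on its timeout (without restart) and start $x_{k,b}$ only on a later input-transition, which must be woven into your phase schedule. (ii) Soundness of the reduction --- that \emph{any} timed run reaching the target state witnesses acceptance, not just the intended one --- needs a sink trap (as in the paper: all ``dishonest'' timeouts and unexpected actions go to $r_\sink$); your note that one must argue ``for every admissible choice of real delays'' gestures at this, but the device that enforces it is missing. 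In short, the idea is in the right family, but the engineering you flag as ``the main obstacle'' is indeed where the proof is, and it is not closed here.
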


For hardness, we reduce from the acceptance problem for linear-bounded Turing machines (LBTM, for short), as done for timed automata, see e.g.~\cite{DBLP:journals/jlp/AcetoL02}.
In short, given an LBTM \(\M\) and a word \(w\) of length \(n\), we construct an \AwT that uses
\(n\) timers \(x_i\), \(1 \leq i \leq n\), such that the timer \(x_i\) encodes
the value of the \(i\)-th cell of the tape of \(\M\).
We also rely on a timer \(x\) that is always (re)started at one, and is used to
synchronize the \(x_i\) timers and the simulation of \(\M\).
The simulation is split into \emph{phases}: The \AwT first seeks the symbol
on the current cell \(i\) of the tape (which can be derived from the moment at
which the timer \(x_i\) times out, using the number of times \(x\) timed out
since the beginning of the phase).
Then, the \AwT simulates a transition of \(\M\) by restarting \(x_i\), reflecting
the new value of the \(i\)-th cell.
Finally, the \AwT can reach a designated state iff \(\M\) is in an accepting state.
Therefore, the reachability problem is \PSPACE-hard.

For membership, we follow the classical argument used to establish that the reachability problem for timed automata is in \PSPACE: We first define \emph{region automata} for \AwTs (which are a simplification of region automata for timed automata) and observe that
reachability in an \AwT reduces to reachability in the corresponding region automaton. The region automaton is of size exponential in the number of timers and polynomial in the number of states of the \AwT. Hence, the reachability problem for \AwTs is in \PSPACE{} via standard arguments.

We define region automata for \AwTs much like they are defined for timed
automata~\cite{alur1999timed,DBLP:journals/tcs/AlurD94,DBLP:books/daglib/0020348}.
Let $\A = (X,I,Q,q_0,\chi,\delta)$ be an \AwT.
For a timer \(x \in X\), \(c_x\) denotes the largest constant to which \(x\) is updated in \(\A\).
Let $C = \max_{x \in X} c_x$.
Two valuations \(\valuation\) and \(\valuation'\) are said
\emph{timer-equivalent}, noted \(\valuation \regionRelation \valuation'\), iff
$\dom{\valuation} = \dom{\valuation'}$ and the following hold for
all $x_1,x_2 \in \dom{\valuation}$:
\iflongversion
\begin{itemize}
\else
\begin{enumerate*}[label={\emph{(\roman*)}}]
\fi
  \item \(\lfloor \valuation(x_1) \rfloor = \lfloor \valuation'(x_1) \rfloor\),
  \item \(\mathrm{frac}(\valuation(x_1)) = 0\) iff
    \(\mathrm{frac}(\valuation'(x_1)) = 0\),
  \item \(\mathrm{frac}(\valuation(x_1)) \leq \mathrm{frac}(\valuation(x_2))\)
    iff \(\mathrm{frac}(\valuation'(x_1)) \leq
    \mathrm{frac}(\valuation'(x_2))\).
\iflongversion
\end{itemize} 
\else
\end{enumerate*} 
\fi
A \emph{timer region} for \(\A\) is an equivalence class of timer valuations
induced by \(\regionRelation\). We lift the relation to configurations:
$(q,\kappa) \regionRelation (q',\kappa')$ iff $\kappa
\regionRelation \kappa'$ and $q = q'$.
Finally, \(\regionClass{(q,\valuation)}\) denotes the equivalence class of $(q,\valuation)$.

We are now able to define a finite automaton called
the \emph{region automaton} of \(\A\) and denoted $\regionAutomaton$.
The alphabet of $\regionAutomaton$ is $\Sigma =  \{\delaySymbol\} \cup
\hat{I}$ where \(\delaySymbol\) is a special symbol used in non-zero delay transitions.
Formally, $\regionAutomaton$ is the finite automaton \((\Sigma, S, s_0, \Delta)\) where:
\begin{itemize}
  \item \(S = {\{(q,\valuation) \mid q \in Q, \valuation \in
    \Val{\chi(q)}\}}_{/\regionRelation}\), i.e., the quotient of the
    configurations by $\regionRelation$, is the set of states,
  \item \(s_0 = (q_0, \regionClass{\valuation_0})\) with
    \(\valuation_0\) the empty valuation, is the initial state,
  \item the set of transitions \(\Delta \subseteq S \times \Sigma \times S\)
    includes
    $(\regionClass{(q,\valuation)},\delaySymbol,\regionClass{(q,\valuation')})$
    if $(q,\valuation) \xrightarrow{d} (q,\valuation')$ in $\A$ whenever \(d > 0\),
    and
    $(\regionClass{(q,\valuation)},i,\regionClass{(q',\valuation')})$
    if $(q,\valuation) \xrightarrow[u]{i} (q',\valuation')$ in $\A$.
\end{itemize}

\noindent
It is easy to check that the timer-equivalence relation on configurations is a
\emph{(strong) time-abstracting
bisimulation}~\cite{DBLP:reference/mc/2018,DBLP:journals/fmsd/TripakisY01}.
That is, for all $(q_1,\kappa_1) \regionRelation (q_2,\kappa_2)$ the following
holds:
\begin{itemize}
  \item if $(q_1,\kappa_1) \xrightarrow[u]{i} (q'_1,\kappa'_1)$, then there is 
    $(q_2,\kappa_2) \xrightarrow[u]{i} (q'_2,\kappa'_2)$ with $(q'_1,\kappa'_1) \regionRelation (q'_2,\kappa'_2)$,
  \item if $(q_1,\kappa_1) \xrightarrow{d_1} (q_1,\kappa'_1)$, then there exists 
    $(q_2,\kappa_2) \xrightarrow{d_2} (q_2,\kappa'_2)$ where $d_1$,
    $d_2 > 0$ may differ such that $(q_1,\kappa'_1) \regionRelation (q_2,\kappa'_2)$, and
  \item the above also holds if $(q_1,\kappa_1)$ and $(q_2,\kappa_2)$ are swapped.
\end{itemize}
Using this property, we can prove the following about $\regionAutomaton$.
\begin{restatable}{lemma}{lemRegionAuto}\label{lem:region-auto}
  Let $\A = (X,I,Q,q_0,\chi,\delta)$ be an \AwT and $\regionAutomaton$ be its
  region automaton.
  \begin{enumerate}
    \item The size of $\regionAutomaton$ is linear in $|Q|$ and exponential in
      $|X|$. That is, $|S|$ is smaller than or equal to $|Q| \cdot |X|! \cdot
      2^{|X|} \cdot {(C+1)}^{|X|}$.
    \item There is a timed run $\rho$ of $\A$ that begins in $(q,\valuation)$ and ends in
      $(q',\valuation')$ iff there is a run $\rho'$ of $\regionAutomaton$ that
      begins in $\regionClass{(q,\valuation)}$ and ends in
      $\regionClass{(q',\valuation')}$.
  \end{enumerate}
\end{restatable}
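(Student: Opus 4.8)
The plan is to prove the two items separately, with the first being a direct counting argument and the second a back-and-forth induction using the time-abstracting bisimulation property stated just above the lemma.

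For item~1, I would bound the number of equivalence classes of configurations. A state of $\regionAutomaton$ is a pair $(q, \regionClass{\valuation})$ with $\valuation \in \Val{\chi(q)}$, so I would first fix $q$ and count timer regions over the timer set $Y = \chi(q)$ with $|Y| \le |X|$. A timer region over $Y$ is determined by: (i) the integral part $\lfloor \valuation(x) \rfloor$ of each $x \in Y$, which lies in $\{0,\dots,C\}$ unless we also allow recording that a timer exceeds $C$ --- here, since timers are only ever started at values $\le C$ and only decrease, all values stay in $[0,C]$, giving at most $(C+1)^{|Y|}$ choices; (ii) for each $x \in Y$, a bit saying whether $\mathrm{frac}(\valuation(x)) = 0$, and (iii) an ordering of the timers by fractional part. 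Items (ii) and (iii) together are captured by an ordered partition of $Y$ refined by ``zero-fraction or not'', which is crudely bounded by $|Y|! \cdot 2^{|Y|}$ (order the timers, then for each decide a zero/non-zero flag; this over-counts but suffices). Multiplying and summing over the at most $|Q|$ states gives $|S| \le |Q| \cdot |X|! \cdot 2^{|X|} \cdot (C+1)^{|X|}$, as claimed. I would remark that this is linear in $|Q|$ and exponential in $|X|$ since $C$ is at most the largest constant appearing in $\A$, hence at most exponential in the size of the encoding but fixed once $\A$ is fixed.

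For item~2, I would prove both directions by induction on the length of the run. For the forward direction, suppose $\rho$ is a timed run of $\A$ from $(q,\valuation)$ to $(q',\valuation')$. I decompose $\rho$ into its alternating delay and discrete transitions. A discrete transition $(q_1,\kappa_1) \xrightarrow[u]{i} (q_2,\kappa_2)$ maps to the edge $(\regionClass{(q_1,\kappa_1)}, i, \regionClass{(q_2,\kappa_2)})$ of $\Delta$ directly by definition. A delay transition $(q_1,\kappa_1)\xrightarrow{d}(q_1,\kappa_1')$ with $d>0$ maps to a single $\delaySymbol$-edge; a delay transition with $d = 0$ maps to the empty path since $\kappa_1 = \kappa_1'$ and the region is unchanged. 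Concatenating yields a run $\rho'$ of $\regionAutomaton$ from $\regionClass{(q,\valuation)}$ to $\regionClass{(q',\valuation')}$.

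For the converse, suppose $\rho'$ is a run of $\regionAutomaton$ from $\regionClass{(q,\valuation)}$ to $\regionClass{(q',\valuation')}$. I build a timed run of $\A$ by induction, maintaining the invariant that after processing a prefix of $\rho'$ ending in region $\regionClass{(p,\mu)}$, I have constructed a timed run of $\A$ from $(q,\valuation)$ ending in some configuration $(p, \hat\mu)$ with $\hat\mu \regionRelation \mu$. For a $\delaySymbol$-edge $(\regionClass{(p,\mu)}, \delaySymbol, \regionClass{(p,\nu)})$, the bisimulation property gives a delay $d>0$ from $(p,\hat\mu)$ to some $(p,\hat\nu)$ with $\hat\nu \regionRelation \nu$; for an $i$-edge, it gives a discrete transition from $(p,\hat\mu)$ with the matching action to some $(p'',\hat\nu)$ in the right region. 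Chaining these and, if necessary, inserting a zero delay at the start so that the run begins and ends with a delay transition in accordance with the definition of $\truns{\A}$, I obtain the desired timed run. The main obstacle here is the bookkeeping in the converse direction: I must be careful that the bisimulation property is applied to the \emph{concrete} configuration I have constructed (not to $(p,\mu)$, which may not literally be a configuration of $\A$), and that the region the edge points to is matched exactly --- this is where the ``$\regionRelation$ is a time-abstracting bisimulation'' claim does all the work, so I would state and (briefly) justify that claim carefully before the induction, checking the three defining conditions of $\regionRelation$ are preserved by both delays (elapsing the same amount of time preserves relative order of fractional parts and the set of timers at integral values, up to adjusting $d$) and discrete steps (updates either leave a timer untouched, set it to an integer constant $c$, or remove it, all of which respect $\regionRelation$).
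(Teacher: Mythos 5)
Your proof is correct and follows essentially the same route as the paper: item~1 is the same counting argument (integral parts, zero/non-zero fractional parts, ordering of fractional parts, times $|Q|$), and item~2 is a detailed elaboration of the back-and-forth argument the paper compresses into the single remark that the claim ``follows from the definition of the region automaton'' together with the observation that zero-delay transitions of $\A$ vanish in $\regionAutomaton$. Your worked-out induction using the time-abstracting bisimulation for the converse direction is exactly what the paper is implicitly invoking, and your handling of zero delays in the forward direction matches the paper's parenthetical remark.
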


\begin{example}\label{ex:timed-auto}
  Let us consider the \AwT \(\A\) of \Cref{fig:AwT} and the timed run \(\pi\) given in \Cref{ex:blocks}. The corresponding run $\pi'$ in the region automaton \(\regionAutomaton\) is
  \begin{align*}
    &(q_0, \regionClass{\emptyset})
    \! \xrightarrow{\delaySymbol} \!    (q_0, \regionClass{\emptyset})
    \! \xrightarrow{i} \!               (q_1, \regionClass{x_1 = 1})
    \! \xrightarrow{i} \!               (q_2, \regionClass{x_1 = 1, x_2 = 2})\\
    &\! \xrightarrow{\tau} \!           (q_2, \regionClass{x_1 = 0, x_2 = 1})
    \! \xrightarrow{i} \!               (q_2, \regionClass{x_1 = 1, x_2 = 1})
    \! \xrightarrow{\tau} \!           (q_2, \regionClass{x_1 = 0, x_2 = 0})\\
    &\! \xrightarrow{\timeout{x_2}} \!   (q_1, \regionClass{x_1 = 0})
    \! \xrightarrow{\timeout{x_1}} \!   (q_1, \regionClass{x_1 = 1})
    \! \xrightarrow{\tau} \!            (q_1, \regionClass{0 < x_1 < 1}).
  \end{align*}
  Notice that the transitions with delay zero of $\pi$ do not appear in $\pi'$.
\end{example}

\section{\Robust \AwTs}\label{sec:robust}

In this section, we study whether an \AwT \(\A\) is \robust, i.e., whether for
any \maximal timed run $\rho$ of $\A$ with races, there exists another run $\rho'$
with no races such that $\untime{\rho} = \untime{\rho'}$.
We are able to prove the next theorem.

\begin{restatable}{theorem}{thmRobust}\label{thm:robust}
Deciding whether an \AwT is \robust is \PSPACE-hard and in \THREEEXP{}. It is in \PSPACE{} if the sets of actions $I$ and of timers $X$ are fixed.
\end{restatable}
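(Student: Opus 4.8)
The plan is to prove the three assertions separately, reusing the region‑automaton machinery of \Cref{lem:region-auto} and the block/race vocabulary of \Cref{subsec:blocks}.

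\textbf{\PSPACE-hardness.} I would reduce from the reachability problem for \AwTs, which is \PSPACE-complete by \Cref{thm:reachability} (and since $\PSPACE = \mathrm{co}\PSPACE$, its complement is \PSPACE-hard as well). Given an \AwT $\A$ and a target state $q$, I would build an \AwT $\A'$ that simulates $\A$ and, on reaching $q$, branches into a small fixed gadget which forces two timers to time out simultaneously and then accepts exactly one resolution order of the resulting race, i.e.\ a gadget whose untimed trace is realizable \emph{only} when that race occurs. The delicate point is that the simulating part of $\A'$ must not itself contain \emph{unavoidable} races, so rather than copying $\A$ verbatim one adapts the LBTM‑based construction behind \Cref{thm:reachability}, keeping a single ``pacemaker'' timer restarted to $1$ at each step and choosing all other updates so that no two simulation timeouts can coincide (equivalently, all simulation actions are strictly spaced, hence trivially race‑free). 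Then $\A'$ is \robust iff $q$ is not reachable in $\A$, which gives the hardness.

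\textbf{Membership in \THREEEXP.} This rests on the effective characterization of race‑avoidingness that we develop next, phrased via blocks and the associated \emph{block graph} $\blockGraph$. The key observation is that the untimed trace of a \maximal timed run $\rho$ already determines which actions are input‑ versus timeout‑actions, which timer each timeout concerns, all updates, hence the decomposition of $\rho$ into blocks together with their timer fates; so realizing a fixed untimed trace as a timed run amounts to choosing delays $d_1,\dots,d_{n+1}$ satisfying a system of linear (in)equalities — all delays nonnegative, $d_1,d_{n+1}>0$, each timeout fired exactly when its timer hits $0$, each timer still active at the end being positive there — and imposing race‑freeness adds a finite bundle of strict inequalities (no two actions of distinct blocks separated by zero total delay; for each fate $\nilTimerKilled$, the discarding action strictly precedes the zero‑crossing). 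Thus a \maximal timed run with a race witnesses non‑robustness iff its untimed trace is realizable but the race‑free strengthening of its constraint system is infeasible. The decision procedure then enumerates all ``race patterns'' — finite combinatorial objects consisting of a run of the region automaton together with the induced block‑graph/race structure — and for each checks (i) realizability as an untimed trace (a reachability query in the region automaton, \Cref{lem:region-auto}) and (ii) infeasibility of the race‑free strengthening (a linear‑feasibility query, in \P); $\A$ is \robust iff no pattern is simultaneously realizable and race‑free‑infeasible. Since the region automaton is of exponential size and abstracting its runs by block graphs/race patterns costs one further exponential, a careful count bounds the number of relevant patterns triply‑exponentially in $|\A|$, and each of the two checks per pattern stays within the \THREEEXP{} budget.

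\textbf{The main obstacle} is exactly the finiteness/bounding argument underlying step~(ii): proving a small‑witness property for non‑robustness, namely that if some (arbitrarily long) \maximal timed run witnesses non‑robustness then so does one whose block graph and race structure come from a bounded family, and then quantifying that family triply‑exponentially. Everything else — the hardness gadget, the reduction to linear feasibility, the region‑automaton reachability queries — is comparatively routine.

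\textbf{\PSPACE{} when $I$ and $X$ are fixed.} In \Cref{lem:region-auto} the factors $|X|!$ and $2^{|X|}$ and the exponent $|X|$ of $(C+1)^{|X|}$ are now constants, so the region automaton has size polynomial in $|Q|$ and in $C$; as $C$ is encoded in binary, a region is still representable in polynomial space (store each integer part in binary and the ordering of the fractional parts), so region‑automaton reachability stays in \PSPACE. Consequently every exponential blow‑up in the \THREEEXP{} procedure above collapses to a polynomial one in this regime, and by performing the reachability and linear‑feasibility checks on the fly — never materializing the block graph — the whole decision procedure runs in polynomial space, which yields the \PSPACE{} membership.
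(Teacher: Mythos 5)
Your \PSPACE-hardness sketch follows the same path as the paper: reduce from the LBTM construction behind \Cref{thm:reachability}, massage it so that every run of the main simulation is wigglable (the paper does this by forcing distinct tape-cell symbols and inserting a ``pacemaker'' timer $y$ as buffers around each phase boundary, exactly the sort of strict spacing you describe), and then bolt on a widget after $r_\done$ whose only \maximal extension is forced to contain an unavoidable race. That part of your proposal is essentially a paraphrase of the paper's proof.

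For the upper bounds you take a genuinely different route, and this is where a real gap opens. The paper first proves the combinatorial characterization (\Cref{prop:RobustCharacterization}) that $\A$ is \robust iff the block graph $\blockGraph[\rho]$ of every \maximal timed run is acyclic, then encodes ``$\blockGraph[\rho]$ has a cycle of the shape in \Cref{cor:cycle}'' as an MSO sentence $\nowigglingMSO$ over words of a modified region automaton, uses B\"uchi--Elgot--Trakhtenbrot to turn that sentence (three quantifier alternations, hence triple exponential) into a finite automaton, intersects with $\regionAutomaton$, and checks emptiness. The finiteness/``small-witness'' issue never arises, because the MSO-to-automaton translation already yields a finite automaton recognizing exactly the bad words, of any length. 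Your proposal instead tries to enumerate ``race patterns'' and decide each by a linear-feasibility query, and you yourself flag as ``the main obstacle'' the fact that runs (and hence block graphs) are unbounded and you would need a small-witness theorem bounding the family of patterns triply exponentially. That obstacle is precisely the content that is missing: you assert the bound but give no argument, and without it the procedure is not even a decision procedure, let alone one in $\THREEEXP$. A similar gap infects the fixed-$I,X$ case: even granting that each region is representable in polynomial space, your procedure still quantifies over an a priori unbounded set of race patterns, so the claim that everything ``collapses to polynomial space'' is unsupported. (The paper gets \PSPACE\ here because with $I,X$ fixed the MSO sentence has constant size, so the B\"uchi automaton is of exponential, not triple-exponential, size, and emptiness of the product with $\regionAutomaton$ can be checked on the fly.)

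One smaller inaccuracy: you claim the untimed trace of a \maximal run determines the blocks \emph{together with their timer fates}. It determines the block decomposition, but the fate $\nilTimerKilled$ versus $\nonNilTimerKilled$ depends on the actual delays, not on the untimed trace; indeed the paper's proof of \Cref{prop:RobustCharacterization} explicitly uses that the same untimed trace can yield $\nilTimerKilled$ in one realization and $\nonNilTimerKilled$ in another. This does not sink the linear-programming viewpoint (fates become part of what you solve for), but the assertion as written is wrong and should not be used as a lemma.
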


Our approach is, given $\rho \in \mtruns{\A}$, to study how to slightly move blocks along the time line of $\rho$ in a way to get another $\rho' \in \mtruns{\A}$ where the races are eliminated while keeping the actions in the same order as in $\rho$. We call this action \emph{wiggling}. Let us first give an example and then formalize this notion.

\begin{example}\label{ex:wiggling}
We consider again the \AwT of \Cref{fig:AwT}. We have seen in \Cref{ex:robust} and \Cref{fig:move} that the block $B_2$ of $\rho$ can be slightly moved to the left to obtain the timed run $\rho'$ with no race such that $\untime{\rho} = \untime{\rho'}$. \Cref{fig:move} illustrates how to move \(B_2\) by changing some of the delays.

In contrast, this is not possible for the timed run $\pi$ of \Cref{ex:blocks}. Indeed looking at \Cref{fig:ex:runs:pi}, we see that it is impossible to move block $B_2$ to the left due to its race with $B_1$ (remember that we need to keep the same action order). It is also not possible to move it to the right due to its race with $B_3$. Similarly, it is impossible to move $B_1$ neither to the right (due to its race with $B_2$), nor to the left (otherwise its timer will time out instead of being discarded by $B_3$). Finally, one can also check that block $B_3$ cannot be moved.   
\end{example}

Given a \maximal timed run $\rho = (q_0,\kappa_0) ~ d_1\! ~ i_1/u_1~\dots ~ d_{n} ~ i_n/u_n ~ d_{n+1} (q,\kappa) \in \mtruns{\A}$ and a block $B = (k_1 \ldots k_m,\timerFate)$ of $\rho$ participating in a race, we say that we can \emph{wiggle} $B$ if for some $\epsilon$, we can move \(B\) to the left (by $\epsilon < 0$) or to the right (by $\epsilon > 0$), and obtain a run \(\rho' \in \mtruns{\A}\) such that \(\untime{\rho} = \untime{\rho'}\) and \(B\) no longer participates in any race. More precisely, we have $\rho' = (q_0,\kappa_0) ~ d'_1 ~ i_1/u_1~\dots ~ d'_{n} ~ i_n/u_n ~ d'_{n+1} ~ (q,\kappa')$ such that 
\begin{itemize}
\item for all $i_{k_\ell} \in B$ with ${k_\ell} > 1$, if $i_{k_\ell - 1} \not\in B$ (the action before $i_{k_\ell}$ in $\rho$ does not belong to $B$), then $d'_{k_\ell} = d_{k_\ell} + \epsilon$,
\item if there exists $i_{k_\ell} \in B$ with $k_{\ell} = 1$ (the first action of $B$ is the first action of $\rho$), then $d'_1 = d_1 + \epsilon$,
\item for all $i_{k_\ell} \in B$ with ${k_\ell} < n$, if $i_{k_\ell + 1} \not\in B$ (the action after $i_{k_\ell}$ in $\rho$ does not belong to $B$), then $d'_{k_\ell+1} = d_{k_\ell+1} - \epsilon$,
\item if there exists $i_{k_\ell} \in B$ with $k_{\ell} = n$ (the last action of $B$ is the last action of $\rho$), then $d'_{n+1} = d_{n+1} - \epsilon$,
\item for all other $d'_k$, we have $d'_k = d_k$.
\end{itemize}
As $\rho' \in \mtruns{\A}$ and $\untime{\rho} = \untime{\rho'}$, we must have $d'_k \geq 0$ for all $k$ and $d'_1, d'_{n+1} > 0$.

We say that we can wiggle $\rho$, or that $\rho$ is \emph{wigglable}, if it is possible to wiggle its blocks, \emph{one block at a time}, to obtain $\rho' \in \mtruns{\A}$ with no races such $\untime{\rho} = \untime{\rho'}$. Hence if all \maximal timed runs with races of an \AwT $\A$ are wigglable, then $\A$ is \robust.

In the next sections, we first associate a graph with any $\rho \in \mtruns{\A}$ in a way to characterize when $\rho$ is wigglable thanks to this graph. We then state the equivalence between the \robust characteristic of an \AwT and the property that all $\rho \in \mtruns{\A}$  can be wiggled (\Cref{prop:RobustCharacterization}).
This allows us to provide logic formulas to determine whether an AT has an
unwigglable run, and then to prove the upper bound of \Cref{thm:robust}. We also discuss its lower bound. Finally, we discuss some sufficient hypotheses for a \robust AT\@.

\subsection{Wiggling a run}\label{sec:wiggle_run}

In this section, given an \AwT $\A$ and a \maximal timed run $\rho \in \mtruns{\A}$, we study the conditions required to be able to wiggle $\rho$. For this purpose, we define the following graph $G_\rho$ associated with $\rho$.
When two blocks \(B\) and \(B'\) of $\rho$ participate in a race, we write \(B \prec B'\) if there exist actions \(i \in B\) and \(i' \in B'\) such that $i,i'$ participate in this race and, according to \Cref{def:race}: 
\begin{itemize}
\item either $i$ occurs before $i'$ along $\rho$ and the total delay between $i$ and $i'$ is zero.
\item or the timer of \(B'\) (re)started by $i'$ reaches value zero when \(i\) discards it.
\end{itemize}
We define the \emph{block graph} \(\blockGraph[\rho] = (V, E)\) of \(\rho\) where \(V\) is the set of blocks of \(\rho\) and $E$ has an edge \((B, B')\) iff \(B \prec B'\).

\begin{example}\label{ex:block_graph}
  \begin{figure}[t]
    \centering
    \begin{subfigure}{.45\textwidth}
      \centering
      \begin{tikzpicture}[
    blockGraph,
    block/.style = {
        state,
        inner sep = 0pt,
        minimum size = 0pt,
    }
]
    \node [block]               (B2)    {\(B_2\)};
    \node [block, right=of B2]  (B1)    {\(B_1\)};

    \path
        (B2)    edge    (B1)
    ;
\end{tikzpicture}
      \caption{Graph $\blockGraph[\rho]$.}%
      \label{fig:ex:block_graph:rho}
    \end{subfigure}
    \begin{subfigure}{.45\textwidth}
      \centering
      \begin{tikzpicture}[
    blockGraph,
    block/.style = {
        state,
        inner sep = 0pt,
        minimum size = 0pt,
    },
]
    \node [block]                       (B1)    {\(B_1\)};
    \node [block, right=of B1]          (B2)    {\(B_2\)};
    \node [block, right=of B2]          (B3)    {\(B_3\)};
 
    \path
        (B1)    edge    (B2)
        (B2)    edge    (B3)
        (B3)    edge [bend right=18]    (B1)
    ;
\end{tikzpicture}
      \caption{Graph $\blockGraph[\pi]$.}%
      \label{fig:ex:block_graph:pi}
    \end{subfigure}
    \caption{Block graphs of the timed runs \(\rho\) and \(\pi\) of \Cref{ex:blocks}.}%
    \label{fig:ex:block_graph}
  \end{figure}
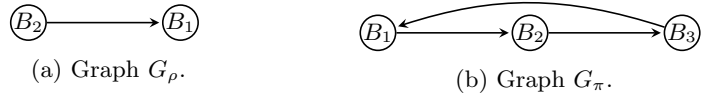

  Let \(\A\) be the \AwT from \Cref{fig:AwT}, and \(\rho\) and \(\pi\) be the timed runs from \Cref{ex:blocks}, whose block decomposition is represented in \Cref{fig:ex:runs:rho,fig:ex:runs:pi}. For the run \(\rho\), it holds that \(B_2 \prec B_1\) leading to the block graph \(\blockGraph[\rho]\) depicted in \Cref{fig:ex:block_graph:rho}. For the run $\pi$, we get the block graph \(\blockGraph[\pi]\) depicted in~\Cref{fig:ex:block_graph:pi}.

  Notice that \(\blockGraph[\rho]\) is acyclic while \(\blockGraph[\pi]\) is cyclic. By the following proposition, this difference is enough to characterize
  that \(\rho\) can be wiggled and \(\pi\) cannot.
\end{example}

\begin{restatable}{proposition}{propWigglableRun}\label{prop:wigglable_run}
  Let $\A$ be an \AwT and \(\rho \in \mtruns{\A}\) be a \maximal timed run with races. Then, $\rho$ can be wiggled iff \(\blockGraph[\rho]\) is acyclic.
\end{restatable}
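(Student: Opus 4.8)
The plan is to recast wiggling in terms of \emph{block displacements}. A single wiggle of a block preserves $\untime{\rho}$, hence also the updates along the run (which are determined by the states and actions) and therefore the decomposition of $\rho$ into blocks \emph{as sequences of indices}; only timer fates may change. Consequently, any run obtained from $\rho$ by a sequence of single-block wiggles is a \emph{displacement run} of $\rho$: it is entirely described by one real shift $\delta_B$ per block $B$, the delays being $d'_\ell = d_\ell + \delta_{B(i_\ell)} - \delta_{B(i_{\ell-1})}$ for $2 \le \ell \le n$, $d'_1 = d_1 + \delta_{B(i_1)}$ and $d'_{n+1} = d_{n+1} - \delta_{B(i_n)}$ (shifts of a common block cancel, so the formula from the wiggling definition composes), where $B(i)$ denotes the block of action $i$; by telescoping, the total delay between any two actions $i \in B$ and $i' \in B'$ changes by $\delta_{B'} - \delta_B$. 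The key observation I would isolate first is: \emph{if a displacement run $\rho'$ lies in $\mtruns{\A}$, has untimed trace $\untime{\rho}$, and contains no race, then $\delta_{B'} > \delta_B$ for every edge $B \prec B'$ of $\blockGraph[\rho]$}. For an edge of the first kind the total delay between its two witnessing actions is $0$ in $\rho$, hence equals $\delta_{B'}-\delta_B$ in $\rho'$; it is $\ge 0$ because the order of actions is unchanged and $\ne 0$ because $\rho'$ is race-free, so it is $>0$. For an edge of the second kind the timer $x$ of $B'$ restarted by its last action reaches value $\delta_{B'}-\delta_B$ exactly when the discarding action of $B$ occurs; this must be $\ge 0$ for $x$ not to time out strictly earlier in $\rho'$ (which would violate $\untime{\rho'}=\untime{\rho}$), and $\ne 0$ because otherwise $B'$ has fate $\nilTimerKilled$ in $\rho'$ and the two blocks race, so again it is $>0$.

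Granting this, the direction ``$\rho$ wigglable $\Rightarrow \blockGraph[\rho]$ acyclic'' is immediate: a sequence of single-block wiggles ending in a race-free $\rho' \in \mtruns{\A}$ with $\untime{\rho'} = \untime{\rho}$ realizes precisely such a displacement run, so a cycle $B_1 \prec B_2 \prec \dots \prec B_r \prec B_1$ would force $\delta_{B_1} < \delta_{B_2} < \dots < \delta_{B_r} < \delta_{B_1}$, a contradiction. More abstractly: a finite digraph is acyclic iff its vertices admit a real labelling that strictly increases along every edge, and the key observation says a race-free displacement run provides such a labelling for $\blockGraph[\rho]$.

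For the converse, assume $\blockGraph[\rho]$ is acyclic and fix a topological order $B^{(1)}, \dots, B^{(N)}$ of its blocks (so $B \prec B'$ implies $B$ precedes $B'$). Choose $\eta>0$ sufficiently small — in particular below (a suitable fraction of) the least positive delay occurring between two actions of distinct blocks of $\rho$, the delays $d_1$ and $d_{n+1}$, every positive value at which a timer is discarded along $\rho$, and every value $\kappa(x)$ for $x$ active in the last configuration of $\rho$ — and set $\delta_{B^{(k)}} = k\eta$. I would then wiggle the blocks one at a time in \emph{reverse} topological order $B^{(N)}, B^{(N-1)}, \dots, B^{(1)}$, each time moving the current block to the right by its target $k\eta$; writing $\rho_j$ for the run once $B^{(N)}, \dots, B^{(j+1)}$ have been moved (so $\rho_N = \rho$ and $\rho_0$ is the run with every block at its target), the facts to check are: (a) each $\rho_j$ lies in $\mtruns{\A}$ with untimed trace $\untime{\rho}$; (b) $\rho_{j-1}$ is $\rho_j$ with $B^{(j)}$ moved right by $j\eta$, hence a single-block wiggle whose legality follows from (a) for $\rho_{j-1}$ together with (c); and (c) $B^{(j)}$ participates in no race in $\rho_{j-1}$. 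For (a): in every $\rho_j$ a block sits at its target or at $0$, and since the targets increase along the topological order one has $\delta_{B'}\ge \delta_B$ for every edge $B \prec B'$; a delay that is $0$ in $\rho$ sits between two different blocks (consecutive actions of one block are separated by the timer value, hence a positive delay) and thus spans an edge, so it stays $\ge 0$, while positive delays stay positive since $\eta$ is tiny, and the analogous checks on timer values use that $\eta$ lies below every relevant positive quantity (so no timer goes negative, no timeout appears or disappears, each action stays the first discarder of its timer, and the run stays \maximal). For (c): every race of $B^{(j)}$ in $\rho$ corresponds to an edge $B^{(j)} \prec B^{(k)}$ with $k>j$ (then $B^{(k)}$ is already at its target and the gap is $(k-j)\eta>0$) or $B^{(m)} \prec B^{(j)}$ with $m<j$ (then $B^{(m)}$ is still at $0$ and the gap is $j\eta>0$), and moving a block affects only the delays and timer values of pairs involving that block, so the move resolves all races of $B^{(j)}$ and cannot revive a race of an already-processed block. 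Finally, in $\rho_0$ every edge $B \prec B'$ has $\delta_{B'} > \delta_B$ strictly, so by the case analysis of the key observation (now with strict inequalities) $\rho_0$ is race-free, exhibiting $\rho$ as wigglable.

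The main obstacle is the technical content hidden in (a) and in the key observation: showing that a small, structured displacement of $\rho$ really yields a run of $\A$ with the \emph{same} untimed trace — equivalently, that ``$\rho'$ is a displacement run in $\mtruns{\A}$ with $\untime{\rho'} = \untime{\rho}$'' is exactly the conjunction of the natural sign conditions on displaced delays and on displaced timer values at their first discard. This requires a careful induction along $\rho'$ tracking each timer's value and arguing that the set of actions between the start and the first discard of a timer, and their effect on it, is unchanged. Once this bookkeeping is in place, both directions collapse to the slogan ``the edge $B \prec B'$ is resolved iff $\delta_{B'} > \delta_B$'' plus the elementary acyclicity criterion.
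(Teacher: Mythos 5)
Your proof is correct in substance and takes a genuinely different route from the paper. The paper first proves a local lemma (a block $B$ cannot be wiggled in $\rho$ iff it has both a predecessor and a successor in $\blockGraph[\rho]$), then argues: a simple cycle makes every block on it unwigglable, so cyclic $\Rightarrow$ not wigglable; and for the converse it greedily wiggles, at each step, a block with predecessors but no successors (topologically maximal among racing blocks), which isolates it, and iterates. You instead pass to a \emph{global} description: any run reachable by a sequence of single-block wiggles is a ``displacement run'' determined by one shift $\delta_B$ per block, and your key observation is that a race-free displacement run with the same untimed trace forces $\delta_{B'} > \delta_B$ strictly along every edge $B \prec B'$ of the original block graph — i.e., the shifts form a strictly monotone vertex labelling, which exists iff the digraph is acyclic. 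This makes the $\Rightarrow$ direction an immediate cycle-of-inequalities contradiction (and is arguably more careful than the paper's somewhat informal ``it is impossible to resolve the races''), while the $\Leftarrow$ direction builds the labelling $\delta_{B^{(k)}} = k\eta$ upfront and realizes it wiggle by wiggle in reverse topological order. What your approach buys: a uniform reformulation that makes both directions fall out of one slogan, and a cleaner forward direction; what it costs: you have to argue once and for all that small structured displacements preserve membership in $\mtruns{\A}$ and the untimed trace, which the paper instead handles piecemeal inside the lemma for the single-block case. (Your ``\elapsed'' bookkeeping is essentially the same device the paper uses later, in the proof of \Cref{prop:RobustCharacterization}.)

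One genuine, if easily repaired, deviation from the paper's definitions: a legal wiggle, as defined in \Cref{sec:robust}, can only be applied to a block that \emph{currently} participates in a race. In your reverse-topological schedule, when you reach a source block $B^{(j)}$ of $\blockGraph[\rho]$ (no predecessors), all of its successors have already been moved right, so in $\rho_j$ it no longer participates in any race — moving it is then not a wiggle. You should simply skip such blocks (leaving $\delta_{B^{(j)}}=0$); since a block is non-racing in $\rho_j$ exactly when it has no predecessor in $\blockGraph[\rho]$, every edge out of it still goes to a block with strictly positive target shift, so the strict-increase property along edges, and hence race-freeness of the final run, is preserved. Relatedly, your item (b) cites ``legality from (a) and (c)'' but never verifies that $B^{(j)}$ races in $\rho_j$, which is part of what makes the move a wiggle; the same patch (only move blocks that still race, equivalently blocks with at least one predecessor) closes that too. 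Beyond this, the detailed checks you defer in (a) — that the chosen $\eta$ keeps all intermediate runs in $\mtruns{\A}$ with the same untimed trace, that no timeout appears or disappears, and that fates do not switch unexpectedly — are exactly the content of the paper's \Cref{lem:wiggle_one_block}; factoring out an analogous one-block displacement lemma would make your write-up self-contained.
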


Intuitively, a block cannot be moved to the left (resp.\ right) if it has a
predecessor (resp.\ successor) in the block graph, due to the races in which it participates.
Hence, if a block has both a predecessor and a successor, it cannot be wiggled
(see \Cref{fig:ex:runs:pi,fig:ex:block_graph:pi} for instance).
Then, the blocks appearing in a cycle of the block graph cannot be wiggled.
The other direction holds by observing that we can do a topological sort of the
blocks if the graph is acyclic. We then wiggle the blocks, one by one, according to that sort.

The next corollary will be useful in the following sections. It is illustrated by \Cref{fig:races} with the simple cycle \((B_0,B_1,\allowbreak B_2,B_3,B_4,B_0)\).

\begin{restatable}{corollary}{corCycle}\label{cor:cycle}
  Let $\A$ be an \AwT and \(\rho \in \mtruns{\A}\) be a \maximal timed run with races. Suppose that \(\blockGraph[\rho]\) is cyclic. Then there exists a cycle $\mathcal C$ in \(\blockGraph[\rho]\) such that
  \begin{itemize}
  \item any block of $\mathcal C$ participate in exactly two races described by this cycle,
  \item for any race described by $\mathcal C$, exactly two blocks participate in the race,
  \item the blocks $B = (k_1 \ldots k_m,\timerFate)$ of $\mathcal C$ satisfy either $m \geq 2$, or $m=1$ and $\timerFate = \nilTimerKilled$.
  \end{itemize}
\end{restatable}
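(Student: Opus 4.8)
The plan is to extract from an arbitrary cycle in $\blockGraph[\rho]$ a "minimal" sub-cycle with the three stated properties, by a combination of shortcutting (to enforce the first two bullets) and an analysis of the race structure (to enforce the third). Throughout I would use the two types of races from \Cref{def:race}: a \emph{type-1} race, where two actions $i \in B$, $i' \in B'$ have total delay zero between them, and a \emph{type-2} race, where an action $i \in B$ is the first action to discard the (zero-valued) timer restarted by the last action $i'$ of $B'$ with $\timerFate' = \nilTimerKilled$. Note that an edge $B \prec B'$ coming from a type-2 race forces $B'$ to have timer fate $\nilTimerKilled$; and if additionally $B'$ consists of a single input, that is exactly the case $m=1$, $\timerFate = \nilTimerKilled$ allowed by the third bullet. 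If $B'$ instead has $m \geq 2$, the third bullet is again satisfied. So the third bullet is really only at risk for blocks $B'$ with $m=1$ and $\timerFate \neq \nilTimerKilled$; I must show such blocks can be avoided.

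\textbf{Step 1: a single-input block with $\timerFate \neq \nilTimerKilled$ has no incoming edge.} Let $B' = (k_1, \timerFate)$ with $m = 1$ and $\timerFate \in \{\bot, \nonNilTimerKilled\}$, and suppose $B \prec B'$. A type-2 race would force $\timerFate = \nilTimerKilled$, contradiction; so the edge comes from a type-1 race with actions $i \in B$, $i' \in B'$ and zero total delay between them. But the only action of $B'$ is the single input $i_{k_1}$, so $i' = i_{k_1}$. Here I would need the following observation about how $\prec$ is oriented in the type-1 case: by the definition preceding \Cref{ex:block_graph}, $B \prec B'$ in a type-1 race means $i$ occurs \emph{before} $i'$ along $\rho$ with zero total delay — but then, symmetrically, $B' \prec B$ holds as well only if some action of $B'$ occurs before some action of $B$; since $i_{k_1}$ is the sole (and hence last) action of $B'$, and an edge \emph{out} of $B'$ needs an action of $B'$ to precede an action of the target, $B'$ can still have outgoing edges. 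The point I actually want is different: I want to argue we can always \emph{route around} such a $B'$. If $B$ is a predecessor and $B''$ a successor of $B'$ in the cycle, I claim $B \prec B''$ or some shorter connection exists, because the actions witnessing $B \prec B'$ and $B' \prec B''$ all sit at the same instant as $i_{k_1}$ (all delays between them vanish), hence $i \in B$ and the witnessing action of $B''$ have zero total delay between them, giving a type-1 race between $B$ and $B''$; orienting it correctly yields an edge that shortcuts $B'$ out of the cycle. Iterating removes all single-input non-$\nilTimerKilled$ blocks. \textbf{This shortcutting-around-degenerate-blocks argument is the main obstacle}: I must be careful that the "same instant" reasoning is valid (total delays are additive and nonnegative, so a chain of zero delays is zero) and that the resulting type-1 race is oriented so as to keep a genuine directed cycle rather than collapsing it.

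\textbf{Step 2: reduce to a simple cycle and enforce the first two bullets.} From the cycle produced by Step 1, take any simple directed cycle $\mathcal{C} = (B_0, B_1, \dots, B_{r-1}, B_0)$ inside it (a cyclic directed graph contains a simple cycle, and the subgraph induced on the vertices of Step 1's cycle still has the property that every single-input block on it is $\nilTimerKilled$, since we only deleted vertices). Each $B_j$ on $\mathcal{C}$ has exactly one predecessor $B_{j-1}$ and one successor $B_{j+1}$ on $\mathcal{C}$, witnessed by one race each; if these two races happened to coincide — i.e. a single race accounts for both the edge into and the edge out of $B_j$ — then the shared third block would give a chord, contradicting simplicity after one more shortcut, or collapses $r \leq 2$ which I handle separately (a $2$-cycle $B_0 \prec B_1 \prec B_0$: the two edges come from (at most) two races, each involving exactly these two blocks, so the bullets hold trivially). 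Hence each block of $\mathcal{C}$ participates in \emph{exactly two} races named by $\mathcal{C}$, one per incident edge — the first bullet. For the second bullet, suppose a race named by $\mathcal{C}$ (say the one witnessing $B_{j-1} \prec B_j$) also involves a third block $B' \neq B_{j-1}, B_j$ of $\mathcal{C}$; then $B'$ shares an instant (type-1) or the discarding event (type-2) with this race, producing an edge between $B'$ and $B_{j-1}$ or $B_j$ that chords $\mathcal{C}$, so we reroute $\mathcal{C}$ through the chord to get a strictly shorter cycle and repeat. Since $\mathcal{C}$ is finite this terminates; the output is a simple cycle satisfying bullets one and two, and Step 1 already guarantees bullet three because every single-input block still present has $\timerFate = \nilTimerKilled$ (rerouting only deletes vertices). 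This completes the proof.
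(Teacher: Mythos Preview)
Your approach---iteratively shortcut a cycle in $\blockGraph[\rho]$ until it is simple and free of ``degenerate'' single-action blocks---is essentially the paper's own argument, which phrases it as ``take a cycle of minimal length''. Your type-1 analysis is fine. However, there is a genuine gap in Step~1 that you flag but do not resolve: when the \emph{outgoing} edge $B' \prec B''$ from the single-action block $B' = (k_1,\timerFate)$ with $\timerFate \neq \nilTimerKilled$ is a type-2 race (i.e.\ $i_{k_1}$ discards the zero-valued timer of $B''$), the witnessing action of $B''$ is its \emph{last} action, which occurred strictly earlier than $i_{k_1}$ by the positive restart value of that timer. So your claim that ``the actions witnessing $B \prec B'$ and $B' \prec B''$ all sit at the same instant as $i_{k_1}$'' fails here, and no chord $B \prec B''$ need exist. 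Your Step~2 inherits the same difficulty for the second bullet: the third participant in a race may be there only via type-2, contributing no action at the shared instant, hence no chord.

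Concretely: take a \maximal run with input $c$ (starting $x_2$ with value $2$) immediately followed by input $a$ (starting $x_1$ with value $2$); delay $2$; then $\timeout{x_1}$ immediately followed by an input $b$ that stops $x_2$ (update $\bot$). The blocks are $B_2=(c,\nilTimerKilled)$, $B_1=(a~\timeout{x_1},\bot)$, $B=(b,\bot)$. The only edges in $\blockGraph[\rho]$ are $B_2 \prec B_1$, $B_1 \prec B$ (both type-1) and $B \prec B_2$ (type-2), so the unique cycle is the $3$-cycle, which has no chords, and $B$ has $m=1$, $\timerFate=\bot$---violating the third bullet. No shortcutting is possible. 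For what it is worth, the paper's own proof shares this gap: its final step asserts that $B_1,B,B_2$ ``participate in the same race'' and invokes the earlier three-blocks-in-a-race shortcut, but that shortcut produces a chord only when all three blocks contribute an \emph{action} to the race, which a type-2 participant does not.
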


\begin{figure}[t]
  \centering
  \begin{tikzpicture}
    \coordinate (startB1)   at (0.9, 0);
    \coordinate (1toB1)     at (5.1, 0);

    \coordinate (startB2)   at (1.1, 0);
    \coordinate (1toB2)     at (7.9, 0);

    \coordinate (startB3)   at (8.1, 0);
    \coordinate (1toB3)     at (10, 0);

    \coordinate (startB4)   at (2.4, 0);
    \coordinate (1toB4)     at (10, 0);

    \coordinate (startB5)   at (2.6, 0);
    \coordinate (1toB5)     at (4.9, 0);
    
    \foreach \s/\t/\h/\l in {
        startB1/1toB1/-0.6/B_0,
        startB2/1toB2/0.6/B_1,
        startB3/1toB3/-0.4/B_2,
        startB5/1toB5/-0.4/B_4
    } {
        \draw[block]
            let
                \p{s} = (\s),
                \p{t} = (\t)
            in
                (\p{s})
                    -- node [left] {\(\l\)} (\x{s}, \h)
                    -- (\x{t}, \h)
                    -- (\p{t})
        ;
    }

    \foreach \s/\t/\h/\l in {startB4/1toB4/0.4/B_3} {
        \draw[block]
            let
                \p{s} = (\s),
            in
                (\p{s})
                    -- node [left] {\(\l\)} (\x{s}, \h)
        ;
        \draw[blockBeforeKilled]
            let
                \p{s} = (\s),
                \p{t} = (\t),
            in
                (\x{s}, \h)
                    -- (\x{t}, \h)
                node[nullTimerKilled] at (\x{t}, \h) {}
        ;
    }

    \draw[timeline]
        (0, 0) -- (startB1)
        (startB2) -- (startB4)
        (startB5) -- (1toB5)
        (1toB1) -- (1toB2)
        (startB3) -- (11, 0)
    ;
    \draw[hole]
        (startB1) -- (startB2)
        (startB4) -- (startB5)
        (1toB5) -- (1toB1)
        (1toB2) -- (1toB3)
    ;
\end{tikzpicture}
  \caption{Races of a \maximal timed run \(\rho\) with $B_\ell \prec B_{\ell + 1 \bmod 5}$, $0 \leq \ell \leq 4$.}%
  \label{fig:races}
\end{figure}
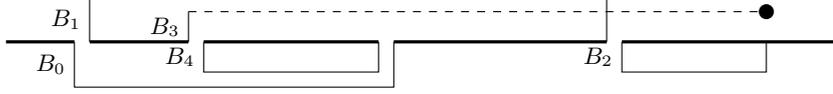

From the definition of wiggling, we know that if all \maximal timed runs with races of an \AwT $\A$ are wigglable, then $\A$ is \robust. The converse also holds as stated in the next theorem. By \Cref{prop:wigglable_run}, this means that an \AwT is \robust iff the block graph of all its \maximal timed run is acyclic.

\begin{restatable}{theorem}{propRobustCharacterization}\label{prop:RobustCharacterization}
  An \AwT $\A$ is \robust
  \begin{itemize}
  \item iff any \maximal timed run $\rho \in \mtruns{\A}$ with races can be wiggled,
  \item iff for any \maximal timed run $\rho \in \mtruns{\A}$, its graph $\blockGraph[\rho]$ is acyclic.
  \end{itemize}
\end{restatable}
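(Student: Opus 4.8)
The plan is to route both equivalences through the intermediate property $(\star)$: \emph{$\blockGraph[\rho]$ is acyclic for every $\rho \in \mtruns{\A}$}. First, I would show that the wiggling property (every $\rho \in \mtruns{\A}$ with races is wigglable) is equivalent to $(\star)$: for a $\rho$ with races this is exactly \Cref{prop:wigglable_run}, while for a $\rho$ without races the graph $\blockGraph[\rho]$ has no edges at all --- edges arise only from pairs of blocks participating in a race --- hence is trivially acyclic, so such runs constrain neither side. Second, that $(\star)$ (equivalently, the wiggling property) implies that $\A$ is \robust is immediate and already noted before the theorem: given $\rho$ with races, wiggling its blocks one at a time yields $\rho' \in \mtruns{\A}$ with no races and $\untime{\rho'} = \untime{\rho}$, and for a race-free $\rho$ one takes $\rho' = \rho$. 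So it remains only to prove the genuinely new implication: if $\A$ is \robust then $(\star)$ holds.

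For this I would argue the contrapositive. Suppose $\blockGraph[\rho]$ contains a cycle $B_0 \prec B_1 \prec \dots \prec B_{r-1} \prec B_0$ for some $\rho \in \mtruns{\A}$; I claim no race-free $\rho' \in \mtruns{\A}$ has $\untime{\rho'} = \untime{\rho}$, which shows $\A$ is not \robust. The crux is a \emph{rigidity} observation. If $\untime{\rho'} = \untime{\rho}$, then $\rho'$ and $\rho$ share the same sequence of states, actions and --- $\A$ being deterministic --- updates, hence the same block decomposition. Inside a block, each successive timeout fires precisely when its timer hits $0$, so in \emph{any} timed run realizing $\untime{\rho}$ the time of every action of a block $B$ equals the time of $B$'s first action plus an offset fixed by the restart constants recorded along $\untime{\rho}$; likewise the expiry time of the timer restarted by $B$'s last action, when there is one, is that first-action time plus a fixed offset. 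Consequently such a run is pinned down, modulo the timing constraints, by a single real shift $\Delta_B$ per block $B$, namely its displacement relative to its position in $\rho$.

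Now I would traverse the cycle. By the definition of $\prec$ together with \Cref{def:race}, each edge $B_\ell \prec B_{\ell+1}$ has one of two origins. (a) There are actions $i \in B_\ell$ and $i' \in B_{\ell+1}$ with $i$ before $i'$ along $\rho$ and zero total delay between them; in $\rho$ they are simultaneous, but in the race-free $\rho'$ they must be strictly ordered ($i$ before $i'$, and they cannot coincide, for that would be a race between $B_\ell$ and $B_{\ell+1}$ in $\rho'$). (b) $B_{\ell+1}$ has timer fate $\nilTimerKilled$ in $\rho$ and its restarted timer reaches $0$ exactly as some $i \in B_\ell$ discards it; but $\rho'$, being race-free, contains no block with fate $\nilTimerKilled$ (such a fate forces a race between the expiring block and the block containing the first action that discards that timer occurrence --- an action determined by $\untime{\rho}$), so in $\rho'$ that same action discards the timer while it is still strictly positive, i.e., strictly before its expiry. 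In either case, subtracting the equality that holds in $\rho$ from the strict inequality that holds in $\rho'$ and invoking rigidity yields $\Delta_{B_\ell} < \Delta_{B_{\ell+1}}$. Chaining these inequalities around the cycle produces $\Delta_{B_0} < \Delta_{B_0}$, a contradiction.

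The main obstacle is making the rigidity claim fully precise and making sure every edge inequality is oriented the same way around the cycle --- this is exactly where the asymmetry built into $\prec$ (which of $i,i'$ comes first, and which block's timer expires) is essential, and a sign slip there would break the whole argument. The supporting facts to pin down are that the restart constants and ``the first action discarding a given timer occurrence'' are invariants of the untimed trace, and that a race-free run cannot contain a block whose timer fate is $\nilTimerKilled$; the degenerate cases (a block that is a lone input, or a cycle block carrying the run's first or last action) should be checked but do not affect the shift computation.
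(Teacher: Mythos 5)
Your proposal is correct, follows the paper's overall strategy, but streamlines the core contradiction argument. Like the paper, you reduce both equivalences to the acyclicity criterion via \Cref{prop:wigglable_run} (with the additional useful remark that race-free runs have edgeless block graphs, which closes a small gap since that proposition is only stated for runs with races), observe the easy direction is immediate, and for the substantive direction prove the contrapositive that a cycle in $\blockGraph[\rho]$ is an obstruction to the existence of a race-free $\rho'$ with $\untime{\rho'}=\untime{\rho}$. Where you diverge is in how the contradiction is extracted from the cycle. The paper first defines \emph{extended runs} $\ext{\rho}$ (making timer discards explicit as pseudo-actions $\nilTimerKilled$/$\nonNilTimerKilled$), introduces the signed \elapsed $\distance$, proves the cycle-sum lemma, and crucially invokes \Cref{cor:cycle} to normalize the cycle so that a well-defined cyclic action sequence $\mathcal{S}_1$ can be threaded through it; it then compares $\distance_{\rho_1}(\mathcal{S}_1)=0$ with $\distance_{\rho_2}(\mathcal{S}_2)$, tracking term by term how within-block and between-block contributions change. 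Your per-block shift $\Delta_B$ packages the same rigidity fact (same untimed trace $\Rightarrow$ same transitions and restart constants $\Rightarrow$ within-block timing offsets, including the rigid expiry time of the restarted timer, are fixed) into a single real per block, and then each edge $B_\ell\prec B_{\ell+1}$ immediately yields $\Delta_{B_\ell}<\Delta_{B_{\ell+1}}$ in both cases of \Cref{def:race}; chaining around any cycle gives $\Delta_{B_0}<\Delta_{B_0}$. This is cleaner: you need neither \Cref{cor:cycle} (an arbitrary cycle suffices, since you never need to extract a cyclic action sequence), nor the extended-run machinery (the $\nilTimerKilled$ case is handled via the rigid expiry time, not a marker action), and the sign-orientation issue you flag does check out in both cases. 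What the paper's route buys is that \Cref{cor:cycle} and the extended-run encoding are reused elsewhere (in \Cref{lem:encoding} and the MSO construction), so the authors amortize that machinery; for a self-contained proof of this theorem, your version is more economical.
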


Let us sketch the missing proof. By modifying \(\rho \in \mtruns{\A}\) to explicitly encode when a timer is discarded, one can show the races of $\rho$ cannot be avoided if the block graph of $\rho$ is cyclic as follows. Given two actions \(i, i'\) of this modified run, it is possible to define the \emph{\elapsed} between \(i\) and \(i'\), noted \(\distance(i, i')\), from the sum \(d\) of all delays between \(i\) and \(i'\): if \(i\) occurs before \(i'\),
then \(\distance(i, i') = d\), otherwise \(\distance(i, i') = -d\).
Lifting this to a sequence of actions from $\rho$ is defined naturally.
Then, one can observe that the \elapsed of a cyclic sequence of actions is zero,
i.e., \(\distance(i_1, i_2, \dotsc, i_k, i_1) = 0\).
Finally, from a cycle of $\blockGraph[\rho]$ as described in \Cref{cor:cycle}, we extract a cyclic sequence of actions and prove, thanks to the concept of \elapsed, that any run \(\rho'\) such that \(\untime{\rho} = \untime{\rho'}\) must contain some races.

\subsection{Existence of an unwigglable run}\label{subsec:MSO}

In this section, we give the intuition for the announced complexity bounds for the
problem of deciding whether an \AwT \(\A\) is \robust (\Cref{thm:robust}). 

Let us begin with the \THREEEXP{}-membership. The crux of our approach is to use the characterization of the \robust property given in
\Cref{prop:RobustCharacterization}, to work with a slight modification of the
region automaton $\regionAutomaton$ of $\A$,
and to construct a finite-state automaton whose language is the set of runs of \(\regionAutomaton\) whose block graph is cyclic. Hence deciding whether $\A$ is \robust amounts to deciding whether the language accepted by the latter automaton is empty. To do so, we construct a \emph{monadic second-order} (MSO, for short; see~\cite{gradel2003automata,DBLP:reference/hfl/Thomas97} for an introduction) formula that is satisfied by words \(w\) labeling a run \(\rho\) of \(\regionAutomaton\) iff the block graph of $\rho$ is cyclic.

Our \emph{modification} of $\regionAutomaton$ is best seen as additional annotations
on the states and transitions of $\regionAutomaton$. We extend the
alphabet $\Sigma$ as follows:
\begin{enumerate*}[itemjoin={{; }}, label={\emph{(\roman*)}}]
  \item we add a timer to each action $i \in \hat{I}$ to remember the updated
    timers
  \item we also use new symbols \(\killSymbol{x}\), $x \in X$, with the intent of explicitly encoding in $\regionAutomaton$ when the timer \(x\) is  discarded while its value is zero.
\end{enumerate*}
Therefore, the modified alphabet is $\Sigma =  \{\delaySymbol\} \cup (\hat{I} \times
\hat{X}) \cup \{\killSymbol{x} \mid x \in X \}$ where \(\hat{X} = X \cup
\{\bot\}\). As a transition in $\A$ can discard more than one timer, we store
the set $D$ of discarded timers in the states of $\regionAutomaton$, as well as outgoing
transitions labeled by $\killSymbol{x}$, for all discarded timers $x$. For
this, the states of $\regionAutomaton$ become \(S = \{(q, \regionClass{\valuation}, D) \mid q \in Q, \valuation \in \Val{\chi(q)}, D \subseteq X \}\) and $\Delta$ is modified in
the natural way so that $D$ is updated as required. Note that the size of this
modified  $\regionAutomaton$ is only larger than what is stated in
\Cref{lem:region-auto} by a factor of $2^{|X|}$.

Note that any $x$-block $(i_{k_1}, \ldots, i_{k_m},\timerFate)$ of a timed run $\rho$ in $\A$ is translated into the sequence of symbols $(i'_{k_1},\ldots,i'_{k_m},\timerFate')$ in the corresponding run $\rho'$ of the modified $\regionAutomaton$ with an optional symbol $\timerFate'$ such that:
\begin{itemize}
    \item $i'_{k_\ell} = (i_{k_\ell},x)$, for $1 \leq \ell < m$,
    \item $i'_{k_m} = (i_{k_m},\bot)$ if $\timerFate = \bot$, and $(i_{k_m},x)$ otherwise,
    \item $\timerFate' = \killSymbol{x}$ if $\timerFate = \nilTimerKilled$, and $\timerFate'$ does not exist otherwise.
\end{itemize}
It follows that, instead of considering \maximal timed runs $\rho \in \mtruns{\A}$ and their block graph $\blockGraph[\rho]$, we work with their corresponding (padded) runs, blocks, and block
graphs in the modified region automaton $\regionAutomaton$ of $\A$.

\begin{example}\label{ex:timed-auto:modified}
  The run \(\pi'\) of \Cref{ex:timed-auto} becomes
  \begin{align*}
    (q_0&, \regionClass{\emptyset}, \emptyset)
    \! \xrightarrow{\delaySymbol} \!            (q_0, \regionClass{\emptyset}, \emptyset)
    \! \xrightarrow{(i, x_1)} \!                (q_1, \regionClass{x_1\! =\! 1}, \emptyset)
    \! \xrightarrow{(i, x_2)} \!                (q_2, \regionClass{x_1\! =\! 1, x_2\! =\! 2}, \emptyset)\\
    &\! \xrightarrow{\tau} \!                   (q_2, \regionClass{x_1\! =\! 0, x_2\! =\! 1}, \emptyset)
    \! \xrightarrow{(i, x_1)} \!                (q_2, \regionClass{x_1\! =\! 1, x_2\! =\! 1}, \{x_1\})\\
    &\! \xrightarrow{\killSymbol{x_1}} \!       (q_2, \regionClass{x_1\! =\! 1, x_2\! =\! 1}, \emptyset)
    \! \xrightarrow{\tau} \!                    (q_2, \regionClass{x_1\! =\! 0, x_2\! =\! 0}, \emptyset)\\
    &\! \xrightarrow{(\timeout{x_2}, \bot)} \!  (q_1, \regionClass{x_1\! =\! 0}, \emptyset)
    \! \xrightarrow{(\timeout{x_1}, x_1)} \!    (q_1, \regionClass{x_1\! =\! 1}, \emptyset)
    \! \xrightarrow{\tau} \!                    (q_1, \regionClass{0\! <\! x_1\! <\! 1}, \emptyset).
  \end{align*}
  The transition with label $\killSymbol{x_1}$ indicates that the timer $x_1$ is discarded in the original timed run while its value equals zero (see the race in which blocks $B_1$ and $B_3$ participate in \Cref{fig:ex:runs:pi}).
  The three blocks of $\pi$ become $B'_1 = ((i, x_1),\killSymbol{x_1})$, $B'_2 = ((i, x_2),(\timeout{x_2}, \bot))$, and $B'_3 = ((i, x_1),(\timeout{x_1}, x_1))$ in $\pi'$. The fact that in $\pi$, $B_1$ and $B_2$ participate in a race (with a zero-delay between their respective actions $i$ and $i$), is translated in $\pi'$ with the non existence of the $\tau$-symbol between the symbols $(i, x_1)$ and $(i, x_2)$ in $B'_1$ and $B'_2$ respectively.
\end{example}

\begin{restatable}{lemma}{lemEncoding}\label{lem:encoding}
  Let \(\A\) be an \AwT and \(\regionAutomaton\) be its modified region automaton. We can construct an MSO formula \(\nowigglingMSO\) of size linear in $I$ and $X$ such that a word labeling a run $\rho$ of $\regionAutomaton$ satisfies $\nowigglingMSO$ iff $\rho$ is a \maximal run that cannot be wiggled. Moreover, the formula $\nowigglingMSO$, in prenex normal form, has three quantifier alternations.
\end{restatable}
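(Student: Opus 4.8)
The plan is to build the MSO formula $\nowigglingMSO$ compositionally, mirroring the characterization in \Cref{prop:RobustCharacterization}: a run $\rho$ of the modified region automaton cannot be wiggled iff $\rho$ is a \maximal run whose block graph $\blockGraph[\rho]$ is cyclic. The word $w$ over $\Sigma$ labeling a run of $\regionAutomaton$ already encodes, at each position, the action, the timer updated (if any), and the $\killSymbol{x}$ markers; since $\regionAutomaton$ is fixed, first-order quantification over positions of $w$ together with monadic second-order quantification over sets of positions suffices to talk about blocks and races. First I would write auxiliary first-order formulas: $\firstMSO(p)$, $\lastMSO(p)$ (the first/last non-$\tau$ position), a successor-within-a-block relation $\nextMSO(p,p')$ expressing that the action at $p$ triggers the one at $p'$ (same timer $x$, $p'$ is a $\timeout{x}$, and no intermediate $\timeout{x}$ or discard of $x$ — all of which are local, bounded conditions readable from the symbols, using the $\killSymbol{x}$ markers to detect zero-valued discards and the $D$-component implicitly via those markers), and a predicate $\blockMSO(S,p)$ saying that the set $S$ is exactly the chain of positions of the block started at input position $p$ (this uses one set quantifier and the transitive closure of $\nextMSO$, expressible in MSO by the standard "$S$ is closed under $\nextMSO$, contains $p$, every element is $\nextMSO$-reachable from $p$" trick). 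I would also need $\raceMSO(p,p')$ capturing the two race cases of \Cref{def:race}: the first case (no time elapses between $p$ and $p'$) is "there is no $\tau$-position strictly between $p$ and $p'$", and the second case is "$p$ carries $\killSymbol{x}$ and $p'$ is the last action of an $x$-block whose timer $p$ discards" — again MSO-definable, and from it the oriented edge relation $B\prec B'$ of the block graph.

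Next I would assemble $\nowigglingMSO$ as the conjunction of (a) a formula saying $w$ labels a \maximal run, i.e. the first and last delays are non-zero ($\tau$ occurs before the first action and after the last action) and no timer times out in the last configuration (readable from the state reached, or equivalently from the absence of pending zero-valued timeouts), and (b) a formula saying $\blockGraph[\rho]$ is cyclic. For cyclicity, rather than existentially guessing a cycle as a sequence (which would need an ordering and hence more alternations), I would use the classical MSO characterization of the existence of a cycle in the edge relation $E$ on blocks: a directed graph has a cycle iff there is a nonempty set of vertices each of which has an $E$-successor inside the set. Concretely, I would quantify existentially over a set $Z$ of positions, require that $Z$ is a nonempty union of blocks, and require that for every block $B$ meeting $Z$ there is a block $B'$ meeting $Z$ with $B \prec B'$. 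Expanding "$B$", "$B'$" via the $\blockMSO$ predicate costs two universally/existentially quantified set variables inside, and "$B \prec B'$" unfolds to an existential over the two race-participating positions. Arranging everything in prenex form, the outermost $\exists Z$ (second-order), then the $\forall$ over blocks-in-$Z$ (which, after Skolemizing the block-defining set, is $\forall$ set plus $\forall$ first-order), then the $\exists$ for the successor block and its race witnesses (second-order set plus first-order), then the innermost $\forall$ hidden in the maximality/triggering subformulas (to say "no intermediate discard", "no $\tau$ between"): this gives exactly the claimed three quantifier alternations $\exists\forall\exists\forall$. The linear size in $|I|$ and $|X|$ follows because the only places the alphabet enters are disjunctions over $i \in \hat I$ and over $x \in X$ in the local predicates, each appearing a constant number of times.

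The main obstacle I anticipate is controlling the quantifier alternation depth while still correctly expressing both the block structure (which intrinsically needs set quantification for the transitive closure of $\nextMSO$) and the cyclicity test (which needs another set), without the nesting of these set quantifiers inside universal and existential scopes pushing the alternation count above three. The delicate bookkeeping is to verify that $\nextMSO$, $\blockMSO$, the maximality predicate, and $\raceMSO$ can all be written with \emph{at most one} alternation each (ideally $\Pi_1$ or $\Sigma_1$ in the first-order part, with a leading set quantifier that can be absorbed), so that when plugged into the $\exists Z\, \forall B\, \exists B'$ skeleton the total collapses to $\exists^2 \forall^{*} \exists^{*} \forall^{*}$, i.e. three alternations. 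A secondary subtlety is making sure the $\killSymbol{x}$-annotation in $\regionAutomaton$ genuinely lets us distinguish $\timerFate = \nilTimerKilled$ from $\timerFate = \nonNilTimerKilled$ at the MSO level — this is exactly why the modified region automaton was introduced, and I would lean on \Cref{ex:timed-auto:modified} to justify that the translation of blocks into annotated symbol sequences is faithful, so that "cyclic block graph of $\rho$" and "cyclic block graph of the padded run $\rho'$" coincide. Once those local predicates are pinned down, the correctness of $\nowigglingMSO$ is immediate from \Cref{prop:wigglable_run} and \Cref{prop:RobustCharacterization}, and the size bound is a direct count.
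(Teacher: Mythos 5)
Your plan is correct in outline but takes a genuinely different route from the paper for the crucial cyclicity piece. The paper does \emph{not} quantify over block-sets and test the generic ``nonempty set closed under $\prec$-successor'' criterion. Instead, it exploits the structure of the special cycle guaranteed by \Cref{cor:cycle}: it existentially guesses a set $P = P_1 \uplus P_2$ of \emph{race-endpoint positions}, requires $P_1$ and $P_2$ to alternate along $P$ (the $\partition$ predicate), requires each $\nextMSO$-consecutive $p \in P_1,\,q \in P_2$ pair to satisfy $\raceMSO(p,q)$, and requires every $q \in P_2$ to share a block with some $p' \in P_1$ (the $\mathrm{InBlock}$ predicate). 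This sidesteps ever needing a unary predicate ``$B$ \emph{is} a block''; only the binary ``$p$ and $q$ lie in the same $x$-block, witnessed by $P$'' predicate $\blockMSO_x(p,q,P)$ is used. Your approach, by contrast, needs to say both ``$Z$ is a union of blocks'' and ``for every block $B$ meeting $Z$ there is a block $B'$ meeting $Z$ with $B \prec B'$'', which forces you either to quantify over the block-sets $B,B'$ themselves --- and thus to pin down a formula asserting that a set \emph{is exactly} a block, a non-trivial maximality condition you have not written out --- or to replace block-quantification by position-quantification using a ``$p$ and $q$ in the same block'' predicate (which would bring you much closer to the paper's own encoding). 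Both routes can in principle be made to reach the $\exists^*\forall^*\exists^*\forall^*$ prenex form, but your alternation accounting is sketchy in exactly the places that matter: the term ``Skolemizing the block-defining set'' is not an operation available in MSO, and the way you propose to merge the $\forall$-block quantifier with the first-order quantifiers hidden in ``$B$ is a block'' and in the race/trigger subformulas is where an extra alternation could sneak in if done carelessly. The paper's tailored encoding, by only ever quantifying positions and the three fixed sets $P,P_1,P_2$, makes that count much easier to verify, which is the real payoff of routing the construction through \Cref{cor:cycle}. Your remarks on the $\killSymbol{x}$ annotation and the maximality conjunct are consistent with the paper (the paper simply checks that the first and last symbols are $\delaySymbol$), and the linear size bound argument is the same. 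So: same spirit, but a different and slightly riskier path to the cyclicity test, with the one genuine gap being the unverified ``$B$ is a block'' predicate and its effect on the alternation count.
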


The formula $\nowigglingMSO$ of this lemma describes the existence of a cycle $\mathcal C$ of blocks \(B_0, B_1, \dotsc, B_{k-1}\) such that \(B_{\ell} \prec B_{\ell + 1 \bmod k}\) for any \(0\leq\ell\leq k-1\), as described in~\Cref{cor:cycle} (see \Cref{fig:races}). To do so, we consider the actions (i.e., symbols of the alphabet $\Sigma$ of $\regionAutomaton$) participating in the races of $\mathcal C$: \(i_0, i_1, \dotsc, i_{k-1}\) and \(i'_0, i'_1, \dotsc, i'_{k-1}\) such that for all $\ell$, \(i_\ell, i'_\ell\) belong to the same block, and $i'_\ell,i_{\ell + 1 \bmod k}$ participate in a race. One can write MSO formulas expressing that two actions participate in a race (there is no $\tau$ transition between them), that two actions belong to the same block, and, finally, the existence of these two action sequences.

From the formula $\nowigglingMSO$ of \Cref{lem:encoding}, by the B\"uchi-Elgot-Trakhtenbrot theorem, we can construct a finite-state automaton whose language is the set of all words satisfying $\nowigglingMSO$. Its size is triple-exponential. We then compute the intersection $\N$ of this automaton with \(\regionAutomaton\) --- itself exponential in size. Finally, the language of $\N$ is empty iff each \maximal timed run of $\A$ can be wiggled, and this can be checked in polynomial time with respect to the triple-exponential size of $\N$, thus showing the \THREEEXP{}-membership of \Cref{thm:robust}. Notice that when we fix the sets of inputs $I$ and of timers $X$, as the formula \(\nowigglingMSO\) becomes of constant size, the automaton $\N$ has now exponential size. Checking its emptiness can be done \enquote{on the fly}, yielding a nondeterministic decision procedure which requires a polynomial space only. We thus obtain that, under fixed inputs and timers, deciding whether an \AwT is \robust is in \PSPACE.

The complexity lower bound of \Cref{thm:robust} follows from the \PSPACE-hardness of the reachability problem for \AwTs (see the intuition given in \Cref{sec:reachability}).
We can show that any run in the \AwT constructed from the given LBTM and word $w$ can be wiggled.
Once the designated state for the reachability reduction is reached, we add a
widget that forces a run that cannot be wiggled.
Therefore, as the only way of obtaining a run that cannot be wiggled is to
reach a specific state (from the widget), the problem whether an \AwT is \robust is \PSPACE-hard. Notice that this hardness proof is no longer valid if we fix the sets $I$ and $X$.

\subsection{Sufficient hypotheses}

Let us discuss some sufficient hypotheses for an AT $\A$ to be \robust.
\begin{enumerate}
    \item If every state in $\A$ has at most one active timer, then $\A$ is \robust. Up to renaming the timers, we actually have a single-timer AT in this case.
    \item If we modify the notion of timed run by imposing non-zero delays everywhere in the run, then $\A$ is \robust. Indeed, the only races that can appear are when a zero-valued timer is discarded, and it is impossible to form a cycle in the block graph with only this kind of races. Imposing a non-zero delay before a timeout is debateable. Nevertheless, imposing a non-zero delay before inputs only is not a sufficient hypothesis (we have counter-examples). 
    \item Let us fix a total order $<$ over the timers, and modify the semantics of an AT to enforce that, in a race, any action of an \(x\)-block is processed before an action of a \(y\)-block, if $x < y$ ($x$ is preemptive over $y$). Then the AT is \robust. Towards a contradiction, assume there are blocks \(B_0, B_1, \dotsc, B_{k-1}\) forming a cycle as described in \Cref{cor:cycle}, where each \(B_i\) is an \(x_i\)-block. By the order and the races, we get \(x_0 \leq x_1 \leq \dotsc \leq x_{k-1} \leq x_0\), i.e., we have a single timer (as in the first hypothesis). Hence, it is always possible to wiggle, which is a contradiction.
\end{enumerate}

\section{Conclusion and future work}
In this paper, we studied automata with timers. We proved that the reachability problem for \AwTs is \PSPACE-complete. Moreover, given a \maximal timed run in an \AwT, we defined a decomposition of its actions into blocks, and provided a way to remove races (concurrent actions) inside the run by wiggling blocks one by one. We also proved that this notion of wiggling is necessary and sufficient to decide whether an \AwT is \robust. Finally, we showed that deciding whether an \AwT is \robust is in \THREEEXP{} and \PSPACE-hard.

For future work, it may be interesting to tighten the complexity bounds for the latter decision problem, both when fixing the sets $I$ and $X$ and when not. A second important direction, which we plan to pursue, is
to work on a learning algorithm for \AwTs, as initiated in~\cite{DBLP:conf/lata/VaandragerB021} with Mealy machines with one timer. This would allow us to construct \AwTs from real-world systems, such as network protocols, in order to verify that these systems behave as expected.

\bibliographystyle{splncs04}
\bibliography{refs}

\clearpage
\appendix

\section{Proof of PSPACE lower bound of \Cref{thm:reachability}}\label{sec:reachability:lower}

\thmReachability*

A \emph{linear-bounded Turing machine} (LBTM, for short) $\M = (\Sigma,Q,q_0,q_F,T)$ is a nondeterministic Turing machine which can only use $|w|$ cells of the tape to determine whether an input $w$ is accepted. Formally, $\Sigma$ is a finite alphabet, $Q$ is a finite set of states, $q_0$ and $q_F$ are the initial and final states respectively, and $T \subseteq Q \times \Sigma \times \Sigma \times \{L,R\} \times Q$ is the set of transitions. A configuration of $\M$ is
a triple $(q,w,i) \in Q \times \Sigma^* \times \mathbb{N}^{>0}$ where $q$ denotes the current control state, $w = w_1 \ldots w_n$ is the content of the tape, and $i$ is the position of the tape head. We say a transition $(q, \alpha, \alpha', D, q') \in T$ is enabled in a configuration $(q,w,i)$ if $w_i = \alpha$. In that case, taking the transition results in the machine reaching the new configuration $(q',w',i')$ with $w'_i = \alpha'$, $w'_j = w_j$ for all $j \neq i$, and $i' = i + 1$ if $D = R$ or $i' = i - 1$ otherwise. A given word $w$ is said to be accepted by $\M$ if there is a sequence of transitions from $(q_0,w,1)$ to a configuration of the form $(q_F,w',i)$.  Deciding whether a given LBTM accepts a given word is \PSPACE-complete~\cite{DBLP:books/aw/HopcroftU79}.

\begin{proof}[of Theorem~\ref{thm:reachability} (lower bound)]
We show that the acceptance problem for LBTMs can be reduced in polynomial time to the reachability problem for \AwTs. The proof is inspired by the one presented by Aceto and Laroussinie for the fact that reachability is \PSPACE-hard for timed automata~\cite[Section 3.1]{DBLP:journals/jlp/AcetoL02}.
	
Let $\M = (\Sigma, Q, q_0, q_F, T)$ be an LBTM and let $w \in \Sigma^*$ be an input word with $|w| = n$.  From $\M$ and $w$ we are going to build, in polynomial time, an \AwT $\A_{\M, w}$ such that $w$ is accepted by $\M$ iff there exists a timed run that ends in a specific state $r_\done$ of $\A_{\M, w}$.\footnote{In the following, we rely on the notion of blocks to pass on the intuition. While blocks are only defined on \maximal timed runs, it is easy to check that any timed run reaching $r_\done$ can be turned into a \maximal run.}

Let $\Sigma = \{ a_1 ,\dots, a_k \}$. Then, every $2k+2$ time units the \AwT will simulate a single step of the LBTM\@. This is what we call a \emph{phase}. 

Let $T = \{ t_1,\ldots, t_m \}$ be the set of transitions of $\M$.  Then, $\A_{\M, w}$ has inputs $I = \{ \go \} \cup T$ and timers $X = \{x,x_1,\ldots,x_n\}$. Now, a state of $\A_{\M,w}$ is an element of $\{ r_0, \dots, r_n, r_\done, r_\sink \}$ (with $r_0$ the initial state) or a tuple
  \(
  \langle \state, i, \tsymbol, \clock \rangle,
  \)
where:
  \begin{itemize}
    \item $\state \in Q$ records the current state of the LBTM,
    \item $i \in \{ 1, \dots, n \}$ records the current position of the tape head,
    \item $\tsymbol \in \{0 ,\dots, k \}$ records the index of the last symbol read from the tape ($0$ when no symbol has been read or the last read symbol has been processed),
    \item $\clock \in \{ 0, 1, \ldots, 2k+1 \}$.
  \end{itemize}

The \AwT starts with an initialization phase in which it goes through states $r_0$ to $r_n$ to start all the timers via a sequence of $\go$-inputs. Execution begins with a $\go$-input that starts timer $x$:
  \(
    r_0 \xrightarrow[(x, 1)]{\go} r_1.
  \)
In order to reach state $r_\done$, all the other timers need to be started
before timer $x$ times out. We use timer $x_i$, for $1 \leq i \leq n$, to
record the value of the $i$-th tape cell: if this value is symbol $a_j \in
\Sigma$, then timer $x_i$ will time out 
when $\lfloor
\clock/2 \rfloor = j$
(it will become clear
later how this is possible). Timer $x_i$ is started in state $r_i$ and set to
its appropriate value: 
\begin{itemize}
    \item if $i < n$, we have the transition
    \(
      r_i \xrightarrow[(x_i, 2j)]{\go} r_{i+1}
    \)
    that initializes timer $x_i$ with value $2j$,
    \item if $i=n$, we have the transition
    \(
    r_n \xrightarrow[(x_n, 2j)]{\go} \langle q_0, 1, 0, 0 \rangle
    \)
    that initializes timer $x_n$ with the same value $2j$ and starts the computation of the LBTM\@.
\end{itemize}
All timeout transitions from $r_i$, $1 \leq i \leq n$, go to $r_\sink$. This ensures that --- in order to reach state $r_\done$--- all timers $x_i$ are initialized.
Hence, all timed runs in $\A_{\M, w}$ that reach $r_\done$ have the same starting blocks, as depicted in Figure~\ref{fig:run_prefix}: an $x$-block $B_x$ and $n$ $x_i$-blocks $B_{x_i}$.\footnote{Notice that some of these blocks participate in a race if their timers are initialized at the same time. For instance, all timers can be started at the same time $d = 0$.}

\begin{figure}[t]
    \centering
    \def\numberBlocksStart{3}

\begin{tikzpicture}
    \coordinate (startBx)   at (0.5, 0);
    \coordinate (toBx)      at (5, 0);
    \coordinate (end)       at (5.5, 0);

    \draw[block]
        let
            \p{s} = (startBx),
            \p{t} = (toBx)
        in
            (\p{s})
                -- (\x{s}, 0.5)
                --  node [above] {\(B_x\)}
                    node [below] {0}
                    (\x{t}, 0.5)
                -- (\p{t})
            (\x{t}, 0.5)
                -- ($(end) + (0, 0.5)$)
    ;
    \draw [loosely dotted, thick, dash phase=20pt]
        ($(end) + (0, 0.5)$)
            -- ($(end) + (0.5, 0.5)$)
    ;

    \foreach \i [
        evaluate=\i using \i-1,
        evaluate=\i using 4-\i as \j,
        count=\counter from 1,
    ]
    in {1, ..., \numberBlocksStart} {
        \draw[block]
            let
                \p{s} = ($(startBx) + (0.2, 0) + (0.8 * \i, 0)$),
                \p{end} = (end),
                \p{t} = ($(\x{end}, -0.5) + (0, -0.3 * \j)$),
            in
                (\p{s})
                    -- (\x{s}, \y{t})
                    -- (\p{t})
        ;
        \draw
            let
                \p{s} = ($(startBx) + (0.2 + 0.8 * \i, -0.5)$),
            in
                node [left] at (\p{s}) {\(B_{x_{\counter}}\)}
        ;
        \draw [loosely dotted, thick, dash phase=20pt]
            let
                \p{end} = (end),
                \p{t} = ($(\x{end}, -0.5) + (0, -0.3 * \j)$),
            in
                (\p{t}) -- ($(end) + (0.5, \y{t})$)
        ;
    }

    \node at ($(startBx) + (0.2 + 0.8 * \numberBlocksStart, -0.5)$) {\(\dotso\)};

    \draw[block]
        let
            \p{s} = ($(startBx) + (4, 0)$),
            \p{end} = (end),
            \p{t} = ($(\x{end}, -0.8)$),
        in
            (\p{s})
                -- (\x{s}, \y{t})
                -- (\p{t})
    ;
    \draw
        let
            \p{s} = ($(startBx) + (4, -0.5)$),
        in
            node [left] at (\p{s}) {\(B_{x_{n}}\)}
    ;
    \draw [loosely dotted, thick, dash phase=20pt]
        let
            \p{end} = (end),
            \p{t} = ($(\x{end}, -0.8)$),
        in
            (\p{t}) -- ($(end) + (0.5, \y{t})$)
    ;

    \draw [timeline]
        (0, 0) -- (end)
    ;
    \draw [loosely dotted, thick, dash phase=20pt]
        (end) -- ($(end) + (0.5, 0)$)
    ;
\end{tikzpicture}
    \caption{Starting blocks.}%
    \label{fig:run_prefix}
\end{figure}
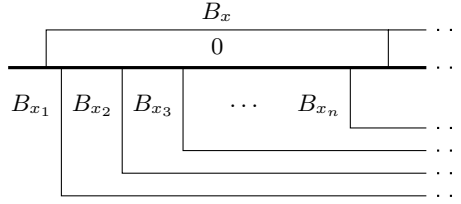

We use timer $x$ to advance the value of $\clock$ that runs cyclically from $0$ to $2k+1$:
  \begin{equation}\label{eqn:condition1}
    \begin{split}
    \clock > 0 \vee \tsymbol = 0 ~~\Rightarrow &\\ 
    \langle \state, i, \tsymbol,
    \clock \rangle \xrightarrow[(x, 1)]{\timeout{x}} {}& \langle \state, i,
    \tsymbol, (\clock + 1) \;\mathrm{mod}\; 2k+2 \rangle. 
    \end{split}
  \end{equation}
(It will become clear later why the condition on this transition is required.) When timer $x_\ell$ times out, for some $\ell \neq i$, then we just restart it so that it will times out at exactly the same point in the next phase:
  \[
    \langle \state, i, \tsymbol, \clock \rangle
    \xrightarrow[(x_\ell, 2k+2)]{\timeout{x_\ell}} \langle \state, i, \tsymbol, \clock \rangle.
  \]
When timer $x_i$ times out, then we restart it in the same way, but in addition we store the index of the symbol that it encodes in the state of the \AwT:
  \[
    \langle \state, i, \tsymbol, \clock \rangle \xrightarrow[(x_i,
  2k+2)]{\timeout{x_i}} \langle \state, i, \lfloor \clock/2 \rfloor , \clock \rangle.
  \]
In order to see why this is true, suppose timer $x_i$ has been started at time
$d$ with value $2j$. Then, $d \in [0,1]$ (see \Cref{fig:run_prefix}) and $x_i$ will expire at time $d' = d + 2j$, so $d'\in [2j, 2j+1]$.  At this time, the value of $\clock$ will be either $2j$ or $2j+1$, and thus $j = \lfloor \clock/2 \rfloor$.  

When the value of $\clock$ becomes $0$ again (a next phase begins), the LBTM $\M$ has read a symbol from the tape, so $\tsymbol > 0$, and $\M$ may (nondeterministically) take a transition.  For each transition $t = (q, \alpha, \alpha', q', L)$ of $\M$, with $\alpha = a_\tsymbol$ and $\alpha'= a_j$, the \AwT has a transition:
  \[
  \langle \state, i, \tsymbol, 0 \rangle \xrightarrow[(x_i, 2j)]{t} \langle \state', i-1, 0, 0 \rangle.
  \]
The \AwT also has transitions, mutatis mutandis, for each $t = (q, \alpha, \alpha', q', R)$ of the LBTM\@. In order to ensure that these transitions are taken before timer $x$ times out, we add transitions:
  \begin{equation} \label{eqn:condition2}
  \tsymbol > 0 ~~ \Rightarrow ~~\langle \state, i, \tsymbol, 0 \rangle \xrightarrow{\timeout{x}} r_\sink.
  \end{equation}
The condition $\clock > 0 \vee \tsymbol = 0$ from~\eqref{eqn:condition1} and the condition $\tsymbol > 0$ from~\eqref{eqn:condition2} both ensure that we see first action $t$ and then action $\timeout{x}$ in a timed run that reaches $r_\done$.

In \Cref{fig:phases}, we fix \(k = 2\), and, for a timed run $\rho$ that reaches $r_\done$, we indicate the sequence of phases, with the cyclic value of $\clock$ from $0$ to $2k+1$. We also indicate the block $B_x$ such that timer $x$ is restarted each time it times out along $\rho$. Finally, we indicate two $x_i$-blocks, $B^1_{x_i}$ and $B^2_{x_i}$, such that in $B^1_{x_i}$, timer $x_i$ is started in phase~$1$, restarted during this phase, and restarted again during phase~$2$, until it is discarded when $\clock = 0$ in phase~$3$; and in $B^2_{x_i}$, timer $x_i$ is started with a new value dictated by the processed transition of the LBTM\@. Note that there may be other blocks (for timers \(x_j\), with \(j \neq i\)) which are not represented in the figure.

\begin{figure}[t]
  \centering
  \def\numberPhases{2}
\def\numberBlocksPerPhase{6}
\def\blockLength{0.6}
\def\blockHeight{0.5}
\def\offsetFirstPhase{0.2}

\def\phaseLength{\numberBlocksPerPhase*\blockLength}
\def\endFigure{\numberPhases*\phaseLength}
\begin{tikzpicture}[
    phaseLine/.style = {
        very thick,
    },
    upwardBrace/.style = {
        decorate,
        decoration = brace,
    },
    downwardBrace/.style = {
        decorate,
        decoration = {brace, mirror},
    },
]
    \coordinate (end)   at
        ($(\offsetFirstPhase + \numberPhases*\phaseLength + \phaseLength, 0)$)
    ;

    \node [above] at ($(\offsetFirstPhase + .5*\blockLength, \blockHeight)$) {\(B_x\)};
    \draw [phaseLine]
        (\offsetFirstPhase, 2*\blockHeight)
        -- (\offsetFirstPhase, -2*\blockHeight)
    ;

    \foreach \phase [
        evaluate=\phase using \phase*\phaseLength + \offsetFirstPhase as \offset,
        evaluate=\phase using (\phase+1)*\phaseLength + \offsetFirstPhase as \nextOffset,
        count=\counterPhase from 1,
    ] in {0, ..., \numberPhases} {
        \foreach \i [
            evaluate=\i using \i - 1,
            evaluate=\i using \i * \blockLength,
            evaluate=\i using \offset + \i,
            evaluate=\i using \i+\blockLength as \j,
            count=\counter from 0,
        ] in {1, ..., \numberBlocksPerPhase} {
            \draw[block]
                (\i, 0)
                    -- (\i, \blockHeight)
                    -- node [below] {\counter} (\j, \blockHeight)
                    -- (\j, 0)
            ;
        }

        \draw [phaseLine]
            (\nextOffset, 2*\blockHeight)
            -- (\nextOffset, -2*\blockHeight)
        ;

        \draw [upwardBrace, phaseLine]
            (\offset, 2*\blockHeight)
                -- node [above=5pt] {Phase \counterPhase}
                    (\nextOffset, 2*\blockHeight)
        ;
    }

    \draw [downwardBrace]
        (\offsetFirstPhase, -2*\blockHeight)
            -- node [below, align=center]  {Initialization\\of the timers}
                ($(\offsetFirstPhase + \blockLength, -2*\blockHeight)$)
    ;
    \draw [downwardBrace]
        ($(\offsetFirstPhase + \phaseLength, -2*\blockHeight)$)
            -- node [below, align=center] {Simulation\\of a step of \(\M\)}
                ($(\offsetFirstPhase + \phaseLength + \blockLength, -2*\blockHeight)$)
    ;
    \draw [downwardBrace]
        ($(\offsetFirstPhase + 2*\phaseLength, -2*\blockHeight)$)
            -- node [below, align=center] {Simulation}
                ($(\offsetFirstPhase + 2*\phaseLength + \blockLength, -2*\blockHeight)$)
    ;

    \coordinate (startB1)   at ($(\offsetFirstPhase, 0) + (.8*\blockLength, 0)$);
    \coordinate (1toB1)     at ($(startB1) + (4*\blockLength, 0)$);
    \coordinate (2toB1)     at ($(1toB1) + (\phaseLength, 0)$);
    \coordinate (diB1)      at ($(2toB1) + (1.4*\blockLength, 0)$);
    \coordinate (3toB1)     at ($(diB1) + (2*\blockLength, 0)$);
    
    \foreach \s/\t/\h/\l in {
        startB1/1toB1/-\blockHeight/B_{x_i}^1,
        1toB1/2toB1/-\blockHeight/
    } {
        \draw[block]
            let
                \p{s} = (\s),
                \p{t} = (\t)
            in
                (\p{s})
                    -- (\x{s}, \h)
                    -- node [below] {\(\l\)} (\x{t}, \h)
                    -- (\p{t})
        ;
    }

    \draw[block]
        let
            \p{s} = (2toB1),
        in
            (\p{s})
                -- (\x{s}, -\blockHeight)
    ;
    \draw[blockBeforeKilled]
        let
            \p{s} = (2toB1),
            \p{t} = (diB1),
        in
            (\x{s}, -\blockHeight)
                -- (\x{t}, -\blockHeight)
            node[nonNullTimerKilled] at (\x{t}, -\blockHeight) {}
    ;

    \foreach \s/\t/\h/\l in {
        diB1/3toB1/-\blockHeight/B_{x_i}^2,
        3toB1/end/-\blockHeight/
    } {
        \draw[block]
            let
                \p{s} = (\s),
                \p{t} = (\t)
            in
                (\p{s})
                    -- (\x{s}, \h)
                    -- node [below] {\(\l\)} (\x{t}, \h)
                    -- (\p{t})
        ;
    }

    \draw [loosely dotted, thick, dash phase=20pt]
        let
            \p{s} = (end),
        in
            (\x{s}, \blockHeight)
                -- ($(\x{s}, \blockHeight) + (.5, 0)$)
            (\x{s}, -\blockHeight)
                -- ($(\x{s}, -\blockHeight) + (.5, 0)$)
    ;

    \draw[timeline]
        (0, 0) -- (end)
    ;
    \draw [loosely dotted, thick, dash phase=20pt]
        (end) -- ($(end) + (0.5, 0)$)
    ;
\end{tikzpicture}
  \caption{The beginning of a timed run that reaches $r_\done$ with \(k = 2\).}%
  \label{fig:phases}
\end{figure}
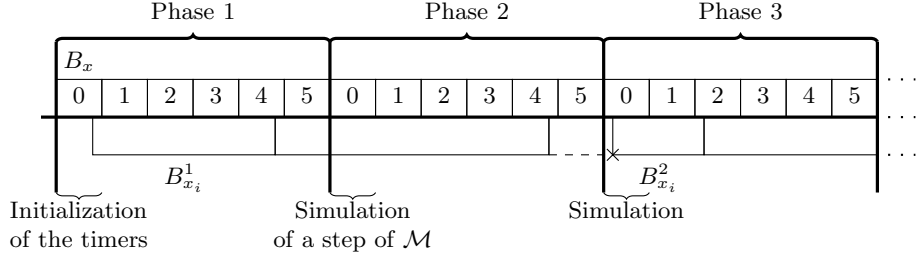

As soon as the LBTM reaches $q_F$, the \AwT may proceed to its final state $r_\done$:
  \[
    \langle \state_F, i, \tsymbol, \clock \rangle \xrightarrow{\go} r_\done.
  \]
For all states of the \AwT, outgoing transitions for actions that have not been specified lead to $r_\sink$. 

Finally, we define the active timers in each state of the \AwT as follows: $\chi(r_0) = \chi(r_\sink) = \chi(r_\done) = \emptyset$, $\chi(r_i) = \{x,x_1,x_2,\ldots,x_{i-1}\}$ for $1 \leq i \leq n$, and $\chi(r) =\{x,x_1,x_2,\ldots,x_n\}$ for all the other states $r$. 

It is clear that $\A_{\M, w}$ can be constructed from $\M$ and $w$ in polynomial time and that $r_\done$ is reachable in the \AwT iff the LBTM accepts $w$.
\qed\end{proof}

\section{Proof of \Cref{lem:region-auto}}

\lemRegionAuto*

\begin{proof}[of \Cref{lem:region-auto}]
For the first statement of the lemma, recall that each state of the region automaton is of the form $(q, \regionClass{\valuation})$.
The number of states $q$ is equal to $|Q|$. Concerning the number of region classes $\regionClass{\valuation}$, ${(C+1)}^{|X|}$ is related to the integer parts of the timers between 0 and $C$, $2^{|X|}$ to which timers have a zero fractional part, and $|X|! $ to the order of these fractional parts. 

The second statement of the lemma follows from the definition of the region automaton $\regionAutomaton$. Notice that delay transitions with delay $0$ in $\A$ disappear in $\regionAutomaton$.
\qed\end{proof}

\section{Proof of PSPACE upper bound of \Cref{thm:reachability}}

\thmReachability*

\begin{proof}[upper bound of \Cref{thm:reachability}]
To decide the reachability problem for \AwTs, by \Cref{lem:region-auto}, we can simulate a run of the corresponding region automaton. Instead of constructing the region automaton in full, we can do so ``on the fly''. This yields a nondeterministic decision procedure for the reachability problem which, due to the form $(q,\regionClass{\valuation})$ of the states of $\regionAutomaton$,
requires polynomial space only. Since $\NPSPACE = \PSPACE$, we obtain the upper bound stated in Theorem~\ref{thm:reachability}.
\qed\end{proof}

\section{Proof of \Cref{prop:wigglable_run}}

\propWigglableRun*

Before establishing this result, we prove the following intermediate result.

\begin{restatable}{lemma}{lemWiggleOneBlock}\label{lem:wiggle_one_block}
  Let \(\blockGraph[\rho]\) be the block graph of \(\rho \in \mtruns{\A}\) and \(B\) be a block in this graph. It is impossible to wiggle \(B\) iff \(B\) has at least one predecessor and at least one successor in \(\blockGraph[\rho]\).
\end{restatable}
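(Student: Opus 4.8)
The plan is to establish the two directions of the equivalence by carefully unpacking what it means, quantitatively, for a wiggle of a single block $B = (k_1 \ldots k_m, \timerFate)$ by an amount $\epsilon$ to yield a valid \maximal timed run $\rho'$ with $\untime{\rho} = \untime{\rho'}$ in which $B$ no longer races. The key observation is that moving $B$ by $\epsilon$ changes exactly the delays immediately entering $B$ (replaced by $d_{k_\ell} + \epsilon$) and immediately leaving $B$ (replaced by $d_{k_\ell+1} - \epsilon$), and leaves the timer values at the relevant configurations consistent because $B$'s own timer $x$ is (re)started to integer constants independently of these delays, while all timers crossing the moved region see their "remaining time until timeout'' shifted uniformly. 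So the question of whether a wiggle by $\epsilon$ is legal reduces to a finite system of constraints on $\epsilon$: (i) all affected delays must stay nonnegative (and the very first and last delays must stay positive), and (ii) $B$ must be separated from every block it currently races with, which forces $\epsilon \neq 0$ with a definite sign relative to each such racing partner.

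For the direction "$B$ has both a predecessor and a successor $\implies$ $B$ cannot be wiggled'': suppose $B' \prec B$, meaning $B'$ has an action $i'$ occurring before some action $i \in B$ with zero total delay between them, or $B'$'s timer reaches zero exactly when an action $i \in B$ discards it. In either case, to break that race we would need to push $i$ strictly later relative to $i'$, i.e. we need $\epsilon > 0$; moving $B$ to the left ($\epsilon < 0$) keeps or worsens the race (in the discard case, $\epsilon<0$ makes the discarded timer reach zero \emph{before} the discard, which is not even a legal run with the same untimed trace, and $\epsilon = 0$ leaves the race). Dually, if $B \prec B''$ for a successor $B''$, breaking that race forces $\epsilon < 0$. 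These two requirements are contradictory, so no single $\epsilon$ works; hence $B$ cannot be wiggled. I would spell this out by formalizing, for each racing partner $B'$ of $B$, the "sign obligation'' on $\epsilon$ (using the definition of $\prec$ and the orientation of edges in $\blockGraph[\rho]$), and noting that a predecessor demands $\epsilon > 0$ while a successor demands $\epsilon < 0$.

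For the converse "$B$ has no predecessor (or no successor) $\implies$ $B$ can be wiggled'': assume w.l.o.g. $B$ has no predecessor in $\blockGraph[\rho]$, so every block racing with $B$ is a successor, and all sign obligations read $\epsilon < 0$. It then suffices to show a small enough $\epsilon < 0$ is simultaneously legal: taking $|\epsilon|$ smaller than every positive delay among the $O(n)$ affected delays (in particular smaller than $d_1$ and $d_{n+1}$, which are positive since $\rho$ is \maximal) keeps all delays nonnegative and the boundary ones positive; the delays entering $B$ only increase. One must also check that no \emph{new} race is created: shrinking the nonzero gaps by $|\epsilon|$ cannot make any of them zero if $|\epsilon|$ is below their minimum, and the gaps that were already zero are precisely the races with $B$'s partners, all of which are strictly widened by $\epsilon < 0$; the discard-type races are likewise resolved since the relevant timer now reaches zero strictly after being discarded is impossible — rather, it is discarded with strictly positive value, so the fate becomes $\nonNilTimerKilled$, consistent with no race. (If $\timerFate = \nilTimerKilled$ for $B$ itself, i.e. $B$'s own timer is discarded at value zero by some later action, a leftward move would change that fate; this is exactly why such a $B$ may still have a "predecessor obligation'' pointing the other way, and the no-predecessor hypothesis rules it out — I would make sure the bookkeeping of $B$'s own fate is included among the constraints.) Finally I would remark that $\epsilon < 0$ might also need to be bounded so that $B$ does not collide with yet another block on its left, but since $B$ has no predecessor, no such block is at zero distance, so a sufficiently small $|\epsilon|$ avoids contact.

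**Main obstacle.** The delicate point is the complete enumeration of constraints on $\epsilon$ — in particular, making sure that moving $B$ does not inadvertently create a race with a block that was previously at a strictly positive (but small) distance, and correctly tracking how $B$'s own timer fate $\timerFate$ and the fates of timers it discards behave under the shift. Getting the sign conventions for the two clauses of \Cref{def:race} to line up with edge orientation in $\blockGraph[\rho]$, and verifying that "no predecessor'' genuinely collapses all obligations to one sign, is where the proof needs care; the rest is the routine "choose $\epsilon$ small enough'' argument.
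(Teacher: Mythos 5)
Your plan follows the paper's proof exactly: each predecessor edge forces a $\epsilon > 0$ obligation and each successor edge forces $\epsilon < 0$, so a block with both cannot be moved, while absence of a predecessor collapses every constraint (including the one from $B$'s own timer fate, which could only be $\nilTimerKilled$ if $B$ had a predecessor) to $\epsilon < 0$, and a sufficiently small leftward shift --- using maximality for the boundary delays and the finiteness of the strictly positive affected delays --- then works and strictly widens the previously zero gaps. One bookkeeping slip to fix when writing it out: in unpacking $B' \prec B$ you have the discard clause reversed (it is $B$'s timer, restarted by $i \in B$, that is discarded at value zero by $B'$'s action $i'$, not $B'$'s timer discarded by $B$'s action), and you also misstate the effect of $\epsilon < 0$ on that clause; these two inversions happen to cancel so your conclusion $\epsilon > 0$ is correct, but the full proof must derive it from the definition of $\prec$ as the paper does rather than from the reversed reading.
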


\begin{proof}
  We prove the lemma by showing both directions.

\medskip
  \(\Leftarrow\)
  Suppose that \(B\) has a block \(B'\) as predecessor and \(B''\) as successor, that is, \(B' \prec B\) and \(B \prec B''\). Notice that it may be that \(B' = B''\). However, \(B \neq B'\) and \(B \neq B''\) by the definition of races. Let us prove that it is impossible to wiggle \(B\) by arguing that it is not feasible to move \(B\) to the right nor to the left. Given \(B' \prec B\), let us prove that we cannot move \(B\) to the left. We have two cases for the actions \(i \in B\) and \(i' \in B'\) that participate in the race:
  \begin{itemize}
    \item Action $i'$ occurs before action $i$ along $\rho$ and the sum of the delays
      between these two actions is zero. Thus the delay $d$ before $i$ in $\rho$ is equal to zero and it is impossible to have $d+\epsilon \geq 0$ for any $\epsilon <0$. This implies that no movement of $B$ to the left is possible.
    \item The timer $x$ of \(B\) (re)started by $i$ reaches value zero when \(i'\) discards it. By moving \(B\) to the left by some $\epsilon < 0$, $x$ times out and therefore an action \(\timeout{x}\) occurs, while it does not occur in \(\rho\) (as $i'$ discards $x$). As we want to keep the same untimed trace as for \(\rho\), it is impossible to move \(B\) to the left.
  \end{itemize}
  With symmetrical arguments, we obtain that we cannot move \(B\) to the right, as \(B \prec B''\). We conclude that we cannot wiggle \(B\).
  
\medskip
  \(\Rightarrow\)
  We prove this direction by contraposition. We first assume that \(B\) has no predecessor (however it may have successors $C$). We argue that \(B\) can be wiggled by moving it to the left. From the definition of a race, we obtain that:
  \begin{itemize}
    \item Since $\rho$ is a \maximal timed run, the first delay $d_1$ of $\rho$ is non-zero.
    \item For each action $i'$ before some action $i \in B$ such that $i' \not\in B$, the delay $d_{i'}$ between $i'$ and $i$ must be non-zero (as $B$ has no predecessor).
    \item The timer fate $\timerFate_B$ of \(B\) is either \(\bot\) or \(\nonNilTimerKilled\). Indeed, assume by contradiction that $\timerFate_B = \nilTimerKilled$. Recall that, by definition of a \maximal run, timers cannot have a zero value at the end of a run. Therefore, there must exist a block \(B''\) that discards the timer of \(B\) while its valuation is zero.
    Thus, we have that \(B'' \prec B\) which is not possible.
  \end{itemize}
 From these observations, we conclude that there is enough room to move \(B\) to the left. Indeed, it is possible to choose some $\epsilon < 0$ with $|\epsilon|$ small enough, such that $d_1 + \epsilon > 0$, $d_{i'} + \epsilon > 0$ for all the delays $d_{i'}$ mentioned above, and in a way that if $\timerFate_B = \nonNilTimerKilled$ then the new timer fate of $B$ is still equal to $\nonNilTimerKilled$. In this way we produce a timed run \(\rho'\) such that \(\untime{\rho} = \untime{\rho'}\).

  It remains to explain that the blocks $C$ participating in a race with $B$ in $\rho$ no longer participate in such a race in $\rho'$. Let $C$ be one of these blocks, hence $B \prec C$. We have again two cases:
    \begin{itemize}
    \item There exist $i \in B$ and $i' \in C$ such that $i$ occurs before $i'$ in $\rho$ and the total delay between them is zero. In the timed run $\rho'$, this delay becomes equal to $-\epsilon > 0$ and $B,C$ no longer participate in a race.
    \item The timer fate $\timerFate_C$ of \(C\) is equal to \(\nilTimerKilled\) and the timer of $C$ is discarded by $B$ along $\rho$. In $\rho'$, we get that $\timerFate_C = \nonNilTimerKilled$ and $B,C$ no longer participate in a race.
  \end{itemize}
  
  It follows that if $B$ has no predecessor, we can wiggle it. With symmetrical arguments, if \(B\) has no successor, we can also wiggle it. Hence, the lemma holds. 
\qed\end{proof}

Now, we proceed to proving \Cref{prop:wigglable_run}.

\begin{proof}[of \Cref{prop:wigglable_run}]
  We prove the equivalence by showing both directions.

\medskip
  \(\Rightarrow\)
  We prove this direction by contraposition. Suppose that \(\blockGraph[\rho]\) has a cycle that we can assume to be simple, i.e., there are \(k > 1\) distinct blocks $B_\ell$, $0 \leq \ell \leq k-1$ such that \(B_\ell \prec B_{\ell + 1 \bmod k}\). As every block \(B_\ell\) has a predecessor and a successor in this cycle, we cannot wiggle \(B_\ell\) by \Cref{lem:wiggle_one_block}. Thus, $\rho$ cannot be wiggled as it is impossible to resolve the races in which the blocks \(B_\ell\) participate.

\medskip
  \(\Leftarrow\)
  Assume \(\blockGraph[\rho]\) is acyclic. Hence, we can compute a topological sort of $\blockGraph[\rho]$ restricted to the blocks participating in the races of $\rho$. Let $B$ be the greatest block with respect to this sort, that is, $B$ has no successor and it has predecessors. By~\Cref{lem:wiggle_one_block}, we can wiggle $B$ by moving it slightly to the right, thus eliminating the races between \(B\) and all the other blocks. We thus obtain a new timed run \(\rho'\) such that \(\untime{\rho} = \untime{\rho'}\) and $\blockGraph[\rho']$ has the same vertices as $\blockGraph[\rho]$ and strictly less edges ($B$ becomes an isolated vertex). We repeat this process until the blocks of the graph are all isolated, meaning that the resulting timed run has no races and the same untimed trace as $\rho$.
\qed\end{proof}

\section{Proof of \Cref{cor:cycle}}

\corCycle*

\begin{proof}
As \(\blockGraph[\rho]\) is cyclic, we consider a cycle of minimal length. First notice that a block \(B\) of this cycle can only appear once per race. Indeed, the value at which a timer is (re)started in the block is positive, thus imposing non-zero delays between two actions of $B$ (i.e., two actions of $B$ can not participate in a common race). Second, by minimality of its length, the cycle is simple, implying that each of its blocks participates in exactly two races, one with its unique successor (in the cycle) and another one with its unique predecessor. Third, assume that three blocks \(B_1, B_2\) and \(B_3\) participate in a common race, in that order. By the previous remark, they are pairwise distinct, and it must be that \(B_1 \prec B_2\), \(B_2 \prec B_3\) and \(B_1 \prec B_3\). It follows that we get a smaller cycle by eliminating $B_2$, which is a contradiction. Finally, assume that the cycle contains some block $B = (k_1,\timerFate)$ with $\timerFate \neq \nilTimerKilled$. Let $B_1 \prec B$ (resp. $B \prec B_2$) be the predecessor (resp.\ successor) of $B$ in the cycle. Due to the form of $B$, the three blocks $B_1, B_2$ and $B$ participate in the same race, which is impossible. 
\qed\end{proof}

\section{Proof of \Cref{prop:RobustCharacterization}}

\propRobustCharacterization*

For the first equivalence, we only need to prove one implication of this result, as by definition of wiggling, an \AwT $\A$ is \robust if all its \maximal timed runs with races are wigglable. The second equivalence is then a consequence of \Cref{prop:wigglable_run}.

For this purpose, we need to introduce some new notions. Given a \maximal timed run $\rho = (q_0,\kappa_0) ~ d_1 ~ i_1/u_1 ~ \dots ~ d_n ~ i_n/u_n ~ d_{n+1} ~ (q,\kappa)$, we extend it with additional transitions indicating when a timer has been discarded in the following way.\footnote{In~\Cref{subsec:MSO}, this was done in the modified region automaton $\regionAutomaton$ of $\A$ with the new symbols $\killSymbol{x}$ indicating that the timer $x$ was discarded when its value was zero. We here also consider the case when $x$ is discarded with a non-zero value. To get more intuition about the proof of \Cref{prop:RobustCharacterization}, we recommend reading \Cref{proof:lem:encoding}.} Let $(q_{\ell-1},\kappa_{\ell-1}) ~ d_\ell ~ i_\ell/u_\ell ~ (q_{\ell},\kappa_{\ell})$ be any transition of $\rho$ such that the set $D = \{y_1, \ldots, y_m\}$ of timers discarded by $i_\ell$ is not empty. Then we replace $(q_{\ell},\kappa_{\ell})$ by the sequence of transitions 
\[
  (q_{\ell},\kappa_{\ell}) ~ 0 ~ j_1/\bot ~ (q_{\ell},\kappa_{\ell}) ~ 0 ~j_2/\bot ~ (q_{\ell},\kappa_{\ell}) ~\cdots~ (q_{\ell},\kappa_{\ell}) ~ 0 ~j_m/\bot ~(q_{\ell},\kappa_{\ell})
\] 
such that 
\begin{itemize}
\item for all $k$, $1 \leq k \leq m$, if $y_k$ was discarded by $i_\ell$ when its value was zero, then $j_k = \nilTimerKilled$, otherwise $j_k = \nonNilTimerKilled$,
\item each delay is zero, and
\item each update is $\bot$.
\end{itemize}
We denote by $\ext{\rho}$ the resulting extended run, such that symbols $\nilTimerKilled$ and $\nonNilTimerKilled$ are also called actions.

\begin{figure}[t]
  \centering
  \begin{tikzpicture}
    \coordinate (startB1)   at (0.5, 0);
    \coordinate (1toB1)     at (1.7, 0);
    \coordinate (startB2)   at (0.7, 0);
    \coordinate (1toB2)     at (2.7, 0);
    \coordinate (startB3)   at (1.7, 0);
    \coordinate (1toB3)     at (2.9, 0);
    \coordinate (2toB3)     at (3.2, 0);
    \coordinate (end)       at (3.2, 0);
    
    \foreach \s/\t/\h/\l in {startB2/1toB2/-0.5/B_2, startB3/1toB3/0.5/B_3} {
        \draw[block]
            let
                \p{s} = (\s),
                \p{t} = (\t)
            in
                (\p{s})
                    -- node [left=-0.1] {\(\l\)} (\x{s}, \h)
                    -- (\x{t}, \h)
                    -- (\p{t})
        ;
    }

    \foreach \s/\t/\h/\type/\l in {
        startB1/1toB1/1.2/nullTimerKilled/B_1
    } {
        \draw[block]
            let
                \p{s} = (\s),
            in
                (\p{s})
                    -- node [left=-0.1] {\(\l\)} (\x{s}, \h)
        ;
        \draw[blockBeforeKilled]
            let
                \p{s} = (\s),
                \p{t} = (\t),
            in
                (\x{s}, \h)
                    -- (\x{t}, \h)
                node[\type] at (\x{t}, \h) {}
        ;
    }

    \foreach \point/\lbl/\position/\h in {
        startB1/i_1/above/1.2,
        startB2/i_2/below/-0.5,
        startB3/i_3/above/0.5,
        1toB1/i_4/above/1.2,
        1toB2/i_5/below/-0.5,
        1toB3/i_6/above/0.5
    } {
        \draw
            let
                \p{s} = (\point)
            in
                node [\position] at (\x{s}, \h) {\(\lbl\)}
        ;
    }

    \draw[timeline]
        (0, 0) -- (startB1)
        (startB2) -- (1toB2)
        (1toB3) -- (end)
    ;
    \draw[hole]
        (startB1) -- (startB2)
        (1toB2) -- (1toB3)
    ;
\end{tikzpicture}
  \caption{The extended run of $\pi$ and its block decomposition.}%
  \label{fig:extrun}
\end{figure}
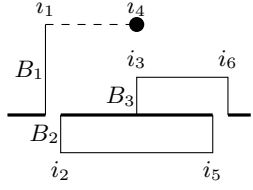

\begin{example}
    Let us come back to the timed run $\pi$ of \Cref{ex:blocks} (see also \Cref{fig:ex:runs:pi}). Its transition 
    $(q_2, x_1 = 1, x_2 = 2) ~1 ~i/(x_1, 1) ~ (q_2, x_1 = 1, x_2 = 1)$ discards timer $x_1$ when its value is zero. Therefore in $\ext{\pi}$, we have $(q_2, x_1 = 1, x_2 = 1)$ being replaced by $(q_2, x_1 = 1, x_2 = 1) ~0 ~\nilTimerKilled/\bot ~ (q_2, x_1 = 1, x_2 = 1)$. This extended run and its block decomposition are depicted in~\Cref{fig:extrun}, such that $i_1,i_2,\ldots,i_7$ is the sequence of actions along $\ext{\pi}$ with $i_4 = \nilTimerKilled$.
\end{example}

Let $\rho$ be a \maximal timed run and $\ext{\rho}$ be its extended run. Given two actions \(i\) and \(i'\) of $\ext{\rho}$, the \emph{\elapsed} between $i$ and $i'$, denoted by $\distance_\rho(i, i')$, is defined as follows from the sum $d$ of all delays between $i$ and $i'$ in $\ext{\rho}$: if \(i\) occurs before \(i'\), then \(\distance_\rho(i, i') = d\), otherwise \(\distance_\rho(i, i') = -d\). Notice that the \elapsed is sensitive to the order of the actions along $\ext{\rho}$, and if \(i\) and \(i'\) participate in a race, then \(\distance_\rho(i, i') = \distance_\rho(i', i) = 0\). 
We naturally lift this definition to a sequence of actions \(i_1, i_2, \dotsc, i_k\) as
\[
  \distance_\rho(i_1, i_2, \dotsc, i_k) = \sum_{\ell = 1}^{k-1} \distance_\rho(i_\ell, i_{\ell + 1}).
\]
The following lemma is trivial, as the \elapsed between two actions has a sign that depends on the relative position of the actions. It is illustrated by \Cref{ex:nullelapsedtime} below.

\begin{lemma}\label{lem:cycleNul}
If the sequence \(i_1, i_2, \dotsc, i_k\) is a cycle (that is, $k \geq 3$ and $i_k  = i_1$), then $\distance_\rho(i_1, i_2, \dotsc, i_k) = 0$.
\end{lemma}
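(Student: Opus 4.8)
The plan is to reduce the claim to the trivial fact that a telescoping sum collapses, by re-expressing $\distance_\rho$ as the difference of two timestamps. Concretely, I would first fix the extended run $\ext{\rho}$ and, for every action $i$ occurring in it (including the inserted $\nilTimerKilled$- and $\nonNilTimerKilled$-actions, which sit on zero-delay transitions and thus do not affect timestamps), let $t(i) \in \nnr$ be the sum of all delays appearing in $\ext{\rho}$ strictly before $i$; that is, $t(i)$ is the absolute time at which $i$ is taken, measured from the start of $\ext{\rho}$. This is well defined because the actions are occurrences at fixed positions along $\ext{\rho}$, so in particular the position $i_1 = i_k$ is a single occurrence and $t(i_1) = t(i_k)$.

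Next I would establish the key identity $\distance_\rho(i, i') = t(i') - t(i)$ for arbitrary actions $i, i'$ of $\ext{\rho}$ by a case split on their relative order. If $i$ occurs before $i'$, then the sum $d$ of the delays between them is exactly $t(i') - t(i) \geq 0$, and by definition $\distance_\rho(i, i') = d = t(i') - t(i)$. If $i$ occurs after $i'$, then $d = t(i) - t(i') \geq 0$ and $\distance_\rho(i, i') = -d = t(i') - t(i)$; and if $i = i'$ both sides are $0$. So the identity holds in every case, and the only point requiring the slightest care is matching the sign convention in the definition of $\distance_\rho$ — the rest is bookkeeping.

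Finally, applying the identity termwise and telescoping yields
\[
  \distance_\rho(i_1, i_2, \dotsc, i_k) \;=\; \sum_{\ell = 1}^{k-1} \distance_\rho(i_\ell, i_{\ell+1}) \;=\; \sum_{\ell=1}^{k-1} \bigl( t(i_{\ell+1}) - t(i_\ell) \bigr) \;=\; t(i_k) - t(i_1) \;=\; 0,
\]
the last equality because $i_k = i_1$. I do not expect any genuine obstacle: the statement is essentially the observation that a sum of signed elapsed times around a closed loop cancels, and the entire content of the proof is repackaging $\distance_\rho$ as a difference of timestamps so that the telescoping becomes visible.
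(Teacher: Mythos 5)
Your proof is correct and is exactly the formalization of the intuition the paper gives (the paper merely asserts the lemma is ``trivial, as the relative elapsed time between two actions has a sign that depends on the relative position of the actions'' and refers to an example). Introducing the absolute timestamp $t(i)$ and observing $\distance_\rho(i,i') = t(i') - t(i)$, then telescoping around the cycle, is precisely the sign cancellation the paper is alluding to.
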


\begin{example}\label{ex:nullelapsedtime}  
We consider again the timed run $\pi$ and its extended run $\ext{\pi}$. Recall that $\pi$ cannot be wiggled as $\blockGraph[\pi]$ is cyclic (see \Cref{fig:ex:block_graph:pi}). From this cycle and the block decomposition of $\ext{\pi}$ (see \Cref{fig:extrun}), we extract the following sequence of actions: $i_1,i_2,i_5,i_6,i_3,i_4,i_1$. Notice that it is a cycle such that any two consecutive actions are either in the same block, or participate in a race, and are enumerated in a way to \enquote{follow} the cycle $B_1 \prec B_2 \prec B_3 \prec B_1$ of $\blockGraph[\pi]$. For instance, the first two actions $i_1,i_2$ describes the race $B_1 \prec B_2$, then $i_2,i_5$ both belong to $B_2$, then $i_5,i_6$ describes the race $B_2 \prec B_3$, etc. We have $\distance_{\pi}(i_1,i_2,i_5,i_6,i_3,i_4,i_1) = 0 + 2 + 0 - 1 + 0 - 1  = 0$.
\end{example}

We now proceed to the proof of \Cref{prop:RobustCharacterization}.

\begin{proof}[of \Cref{prop:RobustCharacterization}]
The second equivalence holds by \Cref{prop:wigglable_run}. Let us focus on proving that \(\A\) is \robust iff any \maximal timed run \(\rho \in \mtruns{\A}\) is wigglable. As, by definition, it is obvious that \(\A\) is \robust if any \maximal timed run is wigglable, we show the other direction.

Towards a contradiction, assume \(\A\) is \robust and there exists \(\rho_1 \in \mtruns{\A}\) with races that is not wigglable. Since \(\A\) is \robust, there exists another \maximal timed run \(\rho_2\) without races and
  such that \(\untime{\rho_1} = \untime{\rho_2}\). We consider the two extended runs $\ext{\rho_1}$ and $\ext{\rho_2}$.

By \Cref{prop:wigglable_run}, there must exist a cycle \(\mathcal{C}\) in the block graph of $\rho_1$. We assume that $\mathcal C$ is as described in \Cref{cor:cycle} and we study it on $\ext{\rho_1}$ (instead of $\rho_1$). That is, \(\mathcal{C}\) is composed of \(k > 1\) distinct blocks \(B_\ell\), \(0 \leq \ell \leq k -1\), such that \(B_\ell \prec B_{\ell + 1 \bmod k}\), and
  \begin{itemize}
      \item any block $B_\ell$ participates in exactly two races described by $\mathcal{C}$,
      \item for any race described by $\mathcal{C}$, exactly two blocks participate in the race,
      \item the blocks of the cycle have at least two actions (of which one can be $\nilTimerKilled$).
  \end{itemize}
We thus have the following sequence of actions from \(\ext{\rho_1}\)
\[
  \mathcal{S}_1 = i'_0,i_1,i'_1, \ldots, i_{k-1},i'_{k-1},i_{0}, i'_0
\]
that is a cycle and such that for all $\ell$, $0 \leq \ell < k$ (see also~\Cref{ex:nullelapsedtime}):
\begin{itemize}
\item $i_\ell$ and $i'_\ell$ are the two symbols of $B_\ell$ that participate in (different) races of \(\mathcal{C}\),
\item $i'_\ell \in B_\ell$ and $i_{\ell + 1 \bmod k} \in B_{\ell + 1 \bmod k}$ participate in a race of $\mathcal C$, that is, $\distance_{\rho_1}(i'_\ell,i_{\ell + 1 \bmod k}) = 0$,
\item $i'_{\ell}$ occurs before $i_{\ell + 1 \bmod k}$ in $\ext{\rho_1}$ (since $B_\ell \prec B_{\ell + 1 \bmod k}$).
\end{itemize}
By \Cref{lem:cycleNul}, we have $\distance_{\rho_1}(\mathcal{S}_1) = 0$. Therefore, 
\begin{eqnarray}\label{eq:distanceCycle}
  \distance_{\rho_1}(\mathcal{S}_1) = \sum_{\ell = 0}^{k-1} \distance_{\rho_1}(i_\ell,i'_\ell) = 0.
\end{eqnarray}

Let us now study $\ext{\rho_2}$ knowing that \(\untime{\rho_1} = \untime{\rho_2}\). Both \maximal timed runs \(\rho_1\) and \(\rho_2\) (and thus \(\ext{\rho_1}\) and \(\ext{\rho_2}\)) must have the same block decomposition. Indeed recall that \(\A\) is deterministic and we see the same actions. Hence, it must be that \(\rho_1\) and \(\rho_2\) follow the same transitions, with the same updates alongside both runs. We then have the same sequences of triggered actions, and therefore the same block decomposition in both runs.

We can thus consider the blocks of $\mathcal{C}$ seen as blocks in $\ext{\rho_2}$, and the sequence $\mathcal{S}_2 = j'_0,j_1,j'_1 \ldots, j_{k-1}, j'_{k-1}, j_{0},j'_0$ from $\ext{\rho_2}$ that corresponds to the sequence $\mathcal{S}_1$ from $\ext{\rho_1}$. We have the following properties for all $\ell, 0 \leq \ell < k$:
\begin{itemize}
\item if $i_\ell \in \hat{I}$ (resp. $i'_\ell \in \hat{I}$), then $j_\ell = i_\ell$ ($j'_\ell = i'_\ell$),
\item if $i_\ell = \nilTimerKilled$ (resp. $i'_\ell = \nilTimerKilled$), then $j_\ell = \nonNilTimerKilled$ ($j'_\ell = \nonNilTimerKilled$), as $\rho_2$ has no races and \(\untime{\rho_1} = \untime{\rho_2}\),
\item if $i_\ell, i'_\ell \in \hat{I}$, then $\distance_{\rho_1}(i_\ell,i'_\ell) = \distance_{\rho_2}(j_\ell,j'_\ell)$, as $i_\ell = j_\ell, i'_\ell = j'_\ell$ are in the same block in both $\ext{\rho_1}$ and $\ext{\rho_2}$,
\item if one among $i_\ell, i'_\ell$ is equal to $\nilTimerKilled$, then $|\distance_{\rho_1}(i_\ell,i'_\ell)| > |\distance_{\rho_2}(j_\ell,j'_\ell)|$ as the corresponding action in $\ext{\rho_2}$ is equal to $\nonNilTimerKilled$ (the timer has been discarded earlier in $\rho_2$ than in $\rho_1$),
\item we have $\distance_{\rho_2}(j'_\ell,j_{\ell + 1 \bmod k}) \neq 0$, as $\rho_2$ has no races,
\item moreover, $\distance_{\rho_2}(j'_\ell,j_{\ell + 1 \bmod k}) > 0$ because $i'_{\ell}$ occurs before $i_{\ell + 1 \bmod k}$ in $\ext{\rho_1}$ and $\untime{\rho_1} = \untime{\rho_2}$.
\end{itemize}

Let us first assume that no action $\nilTimerKilled$ ever appears in $\mathcal{S}_1$. Hence, there is no action $\nonNilTimerKilled$ in $\mathcal{S}_2$. As $\mathcal{S}_2$ is a cycle, by~\Cref{lem:cycleNul}, we have that $\distance_{\rho_2}(\mathcal{S}_2) = 0$. It follows by~\eqref{eq:distanceCycle} that
\begin{align} \label{eq:contradiction}
    0 = \distance_{\rho_2}(\mathcal{S}_2) 
    &= \sum_{\ell = 0}^{k-1} \distance_{\rho_2}(j'_\ell, j_{\ell + 1 \bmod
    k}) + \sum_{\ell = 0}^{k-1} \distance_{\rho_2}(j_\ell, j'_\ell)\nonumber \\
    &> 0 + \sum_{\ell = 0}^{k-1} \distance_{\rho_1}(i_\ell, i'_\ell)\\
    &= \distance_{\rho_1}(\mathcal{S}_1) = 0. \nonumber
\end{align}
This leads to a contradiction.

Let us now assume that there exists at least one action $\nilTimerKilled$ in $\mathcal{S}_1$. Consider any $\ell$, $0 \leq \ell < k$, such that one action among $i_\ell, i'_\ell$ is equal to $\nilTimerKilled$. Necessarily, $i_{\ell} = \nilTimerKilled$ and $i'_\ell$ occurs before $i_\ell$ in $\rho_1$, that is as $\distance_{\rho_1}(i_\ell,i'_\ell) < 0$. Indeed, given the two races $B_{\ell-1 \bmod k} \prec B_\ell \prec B_{\ell+1 \bmod k}$, $\nilTimerKilled$ participates in the first race and appears at the end of $B_\ell$. It follows that 
\begin{eqnarray*}
  \distance_{\rho_1}(i_\ell,i'_\ell) < \distance_{\rho_2}(j_\ell,j'_\ell) < 0.
\end{eqnarray*}
Therefore, we get the same inequalities as in~\eqref{eq:contradiction}, leading again to a contradiction.
This completes the proof.
\qed\end{proof}

\section{Proof of \Cref{lem:encoding}}\label{proof:lem:encoding}

\lemEncoding*

Before giving the proof, some definitions
are in order: Sets of finite words (word structures, to be precise) over an
alphabet $\Sigma$ can be defined by sentences in MSO with the signature
$(<,{\{Q_a\}}_{a \in \Sigma})$. Intuitively, we interpret the formula over the
word $w \in \Sigma^*$ with variables being positions that take values in
$\mathbb{N}$, that can be ordered with $<$, and the predicates $Q_a(p)$
indicating whether the $p$-th symbol of the word (structure) is $a$. The
formulas also use variables $P$ being sets of positions, and $P(p)$ meaning
that $p$ is a position belonging to $P$.
We recall that a formula is in \emph{prenex normal form} if it can be written as \(Q_1 v_1 Q_2 v_2 \dotso Q_n v_n F\) with \(F\) a formula without
quantifiers, \(Q_i\) a quantifier and \(v_i\) a variable for all \(1 \leq i
\leq n\). We suppose the reader is familiar with the rules to put a formula into a prenex normal form. By \emph{quantifier alternations}, we mean alternating blocks of existential or universal quantifiers, respectively denoted by $\exists^*$ and $\forall^*$.

Moreover, recall that the modifications applied on the region automaton imply
the following property (see \Cref{subsec:MSO}).
Given a timed run \(\rho\) of an \AwT, by \Cref{lem:region-auto}, there exists
an equivalent run \(\rho'\) in the region automaton such that any $x$-block
$(i_{k_1} \ldots i_{k_m},\timerFate)$ of $\rho$ is translated into the sequence
of symbols $(i'_{k_1},\ldots,i'_{k_m},\timerFate')$ in $\rho'$ with an optional
symbol $\timerFate'$ such that:
\begin{itemize}
    \item $i'_{k_\ell} = (i_{k_\ell},x)$, for $1 \leq \ell < m$,
    \item $i'_{k_m} = (i_{k_m},\bot)$ if $\timerFate = \bot$, and $(i_{k_m},x)$ otherwise,
    \item $\timerFate' = \killSymbol{x}$ if $\timerFate = \nilTimerKilled$, and $\timerFate'$ does not exist otherwise.
\end{itemize}
In the sequel, we again call $\regionAutomaton$ the modified region automaton. 

We are going to describe a formula \(\nowigglingMSO\) such that a word
labeling a run $\rho$ of $\regionAutomaton$ satisfies $\nowigglingMSO$ iff
\(\rho\) is a \maximal run that cannot be wiggled. To define \(\nowigglingMSO\), we
use~\Cref{prop:wigglable_run}. Recall that it characterizes an unwigglable run
$\rho$ by a cyclic block graph $\blockGraph[\rho]$. We also focus on the
particular cycle of $\blockGraph[\rho]$ as described in~\Cref{cor:cycle}. Step
by step, we create MSO formulas expressing the following statements about a
run $\rho$ of $\regionAutomaton$:
\begin{enumerate}
  \item Two symbols belong to the same block (see the above property about the translation of $x$-blocks in $\regionAutomaton$ and the particular case of zero-valued timers that are discarded).
  \item Two blocks participate in a race. Rather, we express that two symbols, one in each block, participate in a race.
  \item The run is a \maximal run that cannot be wiggled. Rather, we express that there exists a cycle in $\blockGraph[\rho]$ whose form is as in \Cref{cor:cycle}.
\end{enumerate}

\subsubsection{Some useful predicates.}
We define four predicates to help us write the MSO formulas. The formula \(\firstMSO(p, P)\) expresses that a position \(p\) is the first element of a set \(P\), while \(\lastMSO(p, P)\) states that \(p\) is the last element of \(P\). Finally, \(\nextMSO(p, P, q)\) expresses that \(q\) is the successor of \(p\) in $P$ with regards to \(<\). More formally,
\begin{gather} 
  \firstMSO(p, P) \coloneqq P(p) \land \forall q \colon q < p \impliesMSO \neg P(q)
  \label{eq:MSO:first}\\
  \lastMSO(p, P) \coloneqq P(p) \land \forall q \colon q > p \impliesMSO \neg P(q)
  \label{eq:MSO:last}\\
  \nextMSO(p, P, q) \coloneqq p < q \land P(p) \land P(q) \land \forall r \colon
  p < r < q \impliesMSO \neg P(r).
  \label{eq:MSO:next}
\end{gather}
The last useful predicate, $\partition(P,P_1,P_2)$, states that there exist sets of positions \(P,
P_1, P_2\) such that \(P = P_1 \uplus P_2\), the first position of \(P\) is in \(P_1\), the last one is in \(P_2\), and the positions of \(P\) alternate between \(P_1\) and \(P_2\):
\begin{gather}  \label{eq:MSO:partition}
  \begin{split}
    \partition(P,P_1,P_2) &\coloneqq \forall r \colon P(r) \iffMSO (P_1(r) \vee P_2(r)) \\
      &\land \exists p, q \colon \firstMSO(p, P) \land \lastMSO(q, P)
        \land P_1(p) \land P_2(q)\\
      &\land \forall r \colon P_1(r) \iffMSO \neg P_2(r)\\
      &\land \forall r, s \colon \nextMSO(r, P, s) \impliesMSO (P_1(r) \iffMSO P_2(s)).
  \end{split}
\end{gather}

\subsubsection{Two symbols belong to the same block.}
We give here an MSO formula expressing that two positions \(p < q\) are labeled by symbols belonging to the same $x$-block. Thus, the formula \(\blockMSO_x(p, q, P)\), with \(P\) a set of positions labeled by consecutive symbols of an $x$-block, states that \(P\) must respect the following constraints:
\begin{itemize}
  \item We have \(p < q\), with \(p, q \in P\).
  \item The position \(p\) is labeled by either \((i, x) \in \Sigma\) (meaning we start an $x$-block~$B$), or \((\timeout{x}, x) \in \Sigma\) (meaning we are in the block $B$),
  \item The input at position \(q\) is either \((\timeout{x}, x)\) (we are in the block $B$), or \((\timeout{x}, \bot)\) (we are at the end of $B$ and we do not restart \(x\)), or \(\killSymbol{x}\) (we finish $B$ by discarding its timer while its value is zero),
  \item Every other position $r \in P$ such that $p<r<q$ is labeled by \((\timeout{x}, x)\) (we restart \(x\) to keep the block active),
  \item There is no position \(r \notin P\) between \(p\) and \(q\) such that \(r\) is labeled by some symbol of $\Sigma$ among \((\timeout{x}, \cdot), (\cdot, x)\), or \(\killSymbol{x}\) (the \(\cdot\) indicates \enquote{any value}). That is, any intermediate position cannot affect \(x\).
\end{itemize}
Formally, we have:
\begin{gather}
  \mathrm{Affects}_x(r) \coloneqq Q_{(\timeout{x}, x)}(r) \lor Q_{(\timeout{x}, \bot)}(r) \lor \bigvee_{i \in I} Q_{(i, x)}(r) \lor Q_{\killSymbol{x}}(r) \label{eq:MSO:Affects}\\
  \label{eq:MSO:block}
  \begin{split}
    \blockMSO_x(p, q, P) &\coloneqq p < q
    \land P(p) \land P(q)\\
    &\land \big(\bigvee_{i \in I} Q_{(i, x)}(p) \lor Q_{(\timeout{x}, x)}(p)\big)\\
    &\land \big(Q_{(\timeout{x}, x)}(q) \lor Q_{(\timeout{x}, \bot)}(q) \lor
      Q_{\killSymbol{x}}(q)\big)\\
    &\land \forall r \colon \big(p < r < q \land P(r)\big) \impliesMSO 
      Q_{(\timeout{x}, x)}(r)\\
    &\land \forall r \colon \big(p < r < q \land \neg P(r)\big) \impliesMSO
      \neg \mathrm{Affects}_x(r).
  \end{split}
\end{gather}

\subsubsection{Two symbols participate in a race.}
The formula \(\raceMSO(p, q)\) states that two positions \(p < q\) are labeled by symbols that participate in a race, that is, there is no position labeled by $\tau$ between $p$ and $q$. 
\begin{equation}\label{eq:MSO:race}
  \raceMSO(p, q) \coloneqq p < q \land \neg\big(\exists r \colon p < r < q
    \land Q_{\delaySymbol}(r)\big).
\end{equation}

\subsubsection{The run is unwigglable.}
Finally, we give a formula \(\nowigglingMSO\) that expresses that a word is the label of a \maximal run \(\rho\) that cannot be wiggled, i.e., that highlights a cycle (as in~\Cref{cor:cycle}) in the block graph of $\rho$. The idea of that formula is as follows. There are positions \(p_1 < q_1 < p_2 < q_2 < \dotsb < p_m < q_m\) such that each pair \(p_k, q_k\) participate in a race. Moreover, for any \(q_k\) belonging to a block \(B_k\), there must exist a \(p_\ell\) that also belongs to \(B_k\). Notice that \(p_\ell\) is not necessarily after \(q_k\). See \Cref{fig:races} for an illustration of that scenario with $m = 5$.

The formula \(\nowigglingMSO\) states that there exist sets of positions \(P,P_1, P_2\) such that:
\begin{itemize}
  \item We have $P_1 \uplus P_2$ forming a partition of $P$ as described previously,
  \item For any \(p \in P_1\) and \(q \in P_2\) such that \(q\) is the next
    element after \(p\) in \(P\), we have \(\raceMSO(p, q)\).
  \item For any \(q \in P_2\), there must exist a \(p\) in \(P_1\) such that
    \(p\) and \(q\) belong to the same block.
    That is, there must exist a timer \(x\) and a set \(P'\) such that
    \(\blockMSO_x(p, q, P')\) if \(p < q\) or \(\blockMSO_x(q, p, P')\) if \(p > q\).
\end{itemize}
Finally, in order to describe a \maximal run $\rho$, the first and last positions of the word must be labeled with $\tau$ (i.e., a non-zero delay). These positions do not belong to $P$.
\begin{gather}
  \mathrm{InBlock}(p, q, P') \coloneqq
    \big(p < q \land \bigvee_{x} \blockMSO_x(p, q, P')\big)
    \lor \big(p > q \land \bigvee_{x} \blockMSO_x(q, p, P')\big) \label{eq:MSO:InBlock}\\
  \label{eq:MSO:NoWiggling}
  \begin{split}
    \nowigglingMSO &\coloneqq \exists P, P_1, P_2 \colon \Big(\partition(P,P_1,P_2)\\
      &\land \forall p, q \colon \big(P_1(p) \land \nextMSO(p, P, q)\big) \impliesMSO
        \raceMSO(p, q)\\
      &\land \forall q \colon P_2(q) \impliesMSO \big(\exists p, P' \colon P_1(p)
        \land \mathrm{InBlock}(p, q, P')\big)\Big) \\
        &\land Q_\tau(1) \land \big(\exists r \colon Q_\tau(r) \land (\forall r' \colon r' < r)\big).
  \end{split}
\end{gather}

\subsubsection{Correctness.} Now that we have constructed the formula $\nowigglingMSO$, let us show that it correctly encodes that the word labeling a run $\rho$ in $\regionAutomaton$ satisfies $\nowigglingMSO$ iff $\rho$ is a \maximal run that is not wigglable.

  \(\Leftarrow\)
  Assume $\rho$ in $\regionAutomaton$ is a \maximal run that is not wigglable. Let $w \in \Sigma^*$ be its labeling. By \Cref{prop:wigglable_run}, we know that the block graph \(\blockGraph[\rho]\) is cyclic. Moreover, by~\Cref{cor:cycle}, there exists a cycle whose blocks satisfy the following properties: exactly two blocks participate in any race of the cycle, any block participates in exactly two races, and any block has a size at least equal to two\footnote{Recall the way blocks in $\A$ are translated into blocks in $\regionAutomaton$.}. We consider this particular cycle \((B_0, \ldots, B_{m-1}, B_0)\).

  From the races in which the blocks $B_k$ participate, we define the sets \(P_1, P_2\) of positions, and, thus, \(P = P_1 \uplus P_2\) as follows. For every \(B_{k} \prec B_{k+1 \bmod m}\) in the cycle, consider $a \in B_k$ and $b \in B_{k+1 \bmod m}$ that are the two symbols of $w$ participating in a race. We add the position of $a$ in \(P_1\) and the position of $b$ in \(P_2\) (see~\Cref{fig:races} to get intuition). Thus, the positions in \(P\) alternate between \(P_1\) and \(P_2\), the first element of \(P\) is in \(P_1\), and the last is in \(P_2\). Moreover, for any \(p \in P_1\) and \(q \in P_2\) such that \(q\) is the successor of \(p\) in \(P\), it holds that \(\raceMSO(p, q)\). That is, the second line of formula~\eqref{eq:MSO:NoWiggling} about races is satisfied by $w$.

  We have to show that the third line of~\eqref{eq:MSO:NoWiggling} is also satisfied by $w$, that is, for any position \(q\) in \(P_2\), there exists a position \(p\) in \(P_1\) such that \(p\) and \(q\) belong to the same block. Let \(q \in P_2\). By construction, \(q\) belongs to some block \(B_k\) of the cycle. By~\Cref{cor:cycle}, $B_k$ participate in exactly two races of the cycle, one as described above with \(\raceMSO(p, q)\), and another one with \(\raceMSO(p', q')\) for some other positions $p' \in P_1$ and $q' \in P_2$. Necessarily, $p'$ is a position of a symbol in $B_k$ (and not $q'$ by choice of the cycle), such that either $p' < q$ or $p'> q$. Thus, the third line of~\eqref{eq:MSO:NoWiggling} is satisfied by $w$.

  Finally, since $\rho$ is \maximal, it must be that its first and last delays are positive, i.e., the corresponding positions are labeled by $\tau$. Hence the last line of~\eqref{eq:MSO:NoWiggling} is satisfied by $w$. We conclude that $w$ satisfies all conjuncts of~\eqref{eq:MSO:NoWiggling} and then also the formula \(\nowigglingMSO\). 
  
  \(\Rightarrow\)
  Assume now that the label $w$ of a run $\rho$ in $\regionAutomaton$ satisfies formula $\nowigglingMSO$. 
  Since the last line of~\eqref{eq:MSO:NoWiggling} forces the first and last symbols of $w$ to be \(\tau\), the formula describes a \maximal run of \(\regionAutomaton\).

  Let $P, P_1$, and $P_2$ be the sets that satisfy the formula $\nowigglingMSO$. Let \(P_1 = \{p_1, \dotsc, p_m\}\) and \(P_2 = \{q_1, \dotsc, q_m\}\) be such that \(\nextMSO(p_k, P, q_k)\) is satisfied for all \(k\). Then, by~\eqref{eq:MSO:race}, it holds that \(\raceMSO(p_k, q_k)\) is also satisfied, i.e., the symbols of $w$ labeling the positions $p_k, q_k$ participate in a race. The third line of~\eqref{eq:MSO:NoWiggling} implies that there are at most \(m\) blocks involved in these races. Notice that there can be less than \(m\) blocks, as for \(q, q'\) in \(P_2\) with $q \neq q'$, we could have the same \(p \in P_1\) that makes sub-formula $\mathrm{InBlock}$ satisfied in~\eqref{eq:MSO:NoWiggling}. 
  
  From formula $\nowigglingMSO$, we are going to construct a part of the block graph $\blockGraph[\rho]$ of \(\rho\) that is cyclic. We proceed inductively as follows:
  \begin{itemize}
    \item Take an arbitrary position \(q_{k_0} \in P_2\). There exists \(p_{k_1} \in P_1\) such that \(\blockMSO_{x_1}(q_{k_0}, p_{k_1}, P'_1)\) or \(\blockMSO_{x_1}(p_{k_1}, q_{k_0}, P'_1)\) is satisfied. Call $B_1$ the related $x_1$-block.
    \item Let \(q_{k_1} \in P_2\). Then, there exists \(p_{k_2} \in P_1\) such that
      \(\blockMSO_{x_2}(q_{k_1}, p_{k_2}, P'_2)\) or
      \(\blockMSO_{x_2}(p_{k_2}, q_{k_1}, P'_2)\) is satisfied.
      For the related $x_2$-block $B_2$, we have \(B_1 \prec B_2\) as
      $p_{k_1}, q_{k_1}$ participate in a race, and \(p_{k_1} \in B_1, q_{k_1} \in B_2\).
    \item Let \(q_{k_2} \in P_2\). Then, there exists \(p_{k_3} \in P_1\) such that
      \(\blockMSO_{x_3}(q_{k_2}, p_{k_3}, P'_3)\) or
      \(\blockMSO_{x_3}(p_{k_3}, q_{k_2}, P'_3)\) is satisfied.
      For the related $x_3$-block $B_3$, we have \(B_2 \prec B_3\).
    \item We repeat this process until we obtain a cycle. This situation necessarily
      arises as the number of blocks is bounded by \(m\).
  \end{itemize}
  This shows that $\blockGraph[\rho]$ is cyclic.

\subsubsection{Prenex form of $\nowigglingMSO$.}

Formula \(\blockMSO_x(p, q, B)\), see~\eqref{eq:MSO:block}, can be easily rewritten in prenex normal form that starts with a block \(\forall^*\) of universal quantifiers.
Similarly $\raceMSO(p,q)$, see~\eqref{eq:MSO:race}, requires a single universal quantifier. Let us consider the formula \(\nowigglingMSO\) putting aside the quantifiers $\exists P,P_1,P_2$, see~\eqref{eq:MSO:NoWiggling}. Notice that formulas~\eqref{eq:MSO:first},~\eqref{eq:MSO:last}, and~\eqref{eq:MSO:next} all use a single universal quantifier. The first conjunct of~\eqref{eq:MSO:NoWiggling} uses $\partition(P,P_1,P_2)$, see~\eqref{eq:MSO:partition}, that can be rewritten with three blocks $\forall^*\exists^*\forall^*$. The second (resp.\ last) conjunct can be rewritten with two blocks $\forall^* \exists^*$ (resp. $\exists^*\forall^*$), and the third one with three blocks $\forall^*\exists^*\forall^*$. Hence, we obtain that the quantifiers of the prenex normal form of \(\nowigglingMSO\) are \(\exists^* \forall^* \exists^* \forall^*\), that is, with three quantifier alternations, as expected.

Finally, by carefully examining the formulas, we notice that most of them have constant size except $\mathrm{Affects}_x(r)$ and $\blockMSO_x(p, q, P)$ whose sizes are linear in $|I|$, and $\mathrm{InBlock}(p, q, P')$ and $\nowigglingMSO$ whose sizes are linear in $|I|$ and $|X|$.

\section{Proof of \Cref{thm:robust}}

\thmRobust*

We begin by proving the upper bound.

\subsection{Upper bound}
We make use of the B\"uchi-Elgot-Trakhtenbrot theorem: A language is regular
iff it can be defined as the set of all words satisfied by an MSO
formula (with effective translations,
see~\cite{gradel2003automata,DBLP:reference/hfl/Thomas97}). First, from the
formula $\nowigglingMSO$ of \Cref{lem:encoding}, we can construct a
finite-state automaton $\N$ whose language is the set of all words
satisfying $\nowigglingMSO$. Due to the reduction from non-deterministic to
deterministic automaton, each quantifier alternation induces an exponential
blowup in the automaton construction. Hence, the size of $\N$ is
triple-exponential as $\nowigglingMSO$ has three quantifier alternations.
Second, we compute the intersection of $\N$ with \(\regionAutomaton\) ---
itself exponential in size.  This can be done in polynomial time in the
sizes of both automata, i.e., in \THREEEXP{}. Finally, the language of the
resulting automaton is empty iff there is no \maximal run of $\A$ that
cannot be wiggled, and this can be checked in polynomial time with respect
to the size of the (triple-exponential) automaton.

\subsection{Lower bound}

In this section, we prove that deciding whether all \maximal timed runs of a given \AwT are wigglable is a \PSPACE-hard problem. The idea of the proof is to leverage the \PSPACE-hardness proof for the reachability problem, see \Cref{thm:reachability}. We use the same notations as in the proof provided for the latter result in \Cref{sec:reachability:lower}.

Let $\M$ be an LBTM and $w$ be an input word. Let $\A_{\M, w}$ be the \AwT constructed from $\M$ and $w$ in the proof of \Cref{thm:reachability}, such that the state $r_\done$ is reachable in the \AwT iff the LBTM accepts $w$. We are going to slightly modify the LBTM and the constructed \AwT such that any \maximal timed run \(\rho\) of $\A_{\M, w}$ has an acyclic block graph \(\blockGraph[\rho]\). Thanks to \Cref{prop:wigglable_run}, this is equivalent to stating that \(\rho\) can be wiggled. Then, we give a widget that extends any \maximal timed run \(\rho\) reaching $r_{done}$ into one that is unwigglable. Hence, the given LBTM accepts $w$ if the constructed \AwT has some unwigglable \maximal timed run.

First, we modify the LBTM \(\M\) and word \(w = w_1 \dotsc w_n\) to obtain a new LBTM \(\M'\) that accepts \(w\) iff it does so while maintaining the invariant that no two cells of the tape of \(\M'\) hold the same symbol. Let \(\Sigma\) be the alphabet of \(\M\). We create a new word \(w' = (w_1, 1) (w_2, 2) \dotso (w_n, n)\), and a new LBTM \(\M'\) over the alphabet \(\Sigma' = \Sigma \times \{1, \dotsc, n\}\) such that \(\M'\) simulates \(\M\) by discarding the second component of each symbol of $\Sigma'$, except that whenever \(\M\) writes the symbol \(a\) at the position \(i\) on the tape, \(\M'\) writes the symbol \((a, i)\). This requires to store the current \(i\) in the state of $\M'$, inducing a polynomial blowup in \(n\) for the number of states of $\M'$. Thanks to the second component, we indeed have that every cell contains a symbol that is different from any other cell.

\begin{figure}[t]
  \centering
  \begin{tikzpicture}
    \coordinate (startBx)   at (1, 0);
    \coordinate (1toBx)     at (4, 0);
    \coordinate (2toBx)     at (7, 0);
    \coordinate (startBy1)  at (1.2, 0);
    \coordinate (1toBy1)    at (2.2, 0);
    \coordinate (startBy2)  at (2.8, 0);
    \coordinate (1toBy2)    at (3.8, 0);
    \coordinate (end)       at (8, 0);
    
    \foreach \s/\t/\h/\l in {
        startBx/1toBx/0.5/B_x,
        1toBx/2toBx/0.5/,
        startBy1/1toBy1/-0.5/B_y^1,
        startBy2/1toBy2/-0.5/B_y^2} {
        \draw[block]
            let
                \p{s} = (\s),
                \p{t} = (\t)
            in
                (\p{s})
                    -- node [left] {\(\l\)} (\x{s}, \h)
                    -- (\x{t}, \h)
                    -- (\p{t})
        ;
    }

    \draw[timeline]
        (0, 0) -- (startBx)
        (startBx) -- (startBy1)
        (startBy1) -- (1toBy2)
        (1toBy2) -- (1toBx)
        (1toBx) -- (2toBx)
        (2toBx) -- (end)
    ;
\end{tikzpicture}
  \caption{Visualization of the forced buffer using timer \(y\).}%
  \label{fig:buffer}
\end{figure}
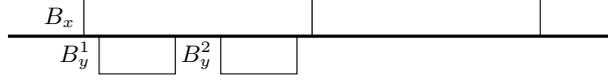

Second, we modify the construction of the \AwT \(\A_{\M', w'}\) as follows. We add a new timer \(y\) which we use to force a block acting as a buffer before and after each action implying the timer \(x\) when the value of $\clock$ is equal to zero, as illustrated in \Cref{fig:buffer}. Therefore, any input-action must take place between these two buffers. In order to have enough room for these buffers, we multiply by three all the values at which the timers \(x, x_1, \dotsc, x_n\) are updated (in particular $x$ is always (re)started with value $3$ instead of $1$). For the initialization of the timers \(x, x_1, \dotsc, x_n\) (at the start of phase~$1$), we add new states $\langle r_1,y_k \rangle$ and $\langle q_0, 1, 0, 0, y_k \rangle$, for $k = 1,2$, and modify the transitions to force \(y\) to start and time out, then we force the initialization part and, finally, force \(y\) to start and time out again:
\begin{gather*}
  r_0
  \xrightarrow[(x, 3)]{go} \langle r_1, y_1 \rangle
  \xrightarrow[(y, 1)]{go} \langle r_1, y_2 \rangle
  \xrightarrow[\bot]{\timeout{y}} r_1 \ldots\\
  \ldots r_n
  \xrightarrow[(x_n, 6j)]{go} \langle q_0, 1, 0, 0, y_1 \rangle
  \xrightarrow[(y, 1)]{go} \langle q_0, 1, 0, 0, y_2 \rangle
  \xrightarrow[\bot]{\timeout{y}} \langle q_0, 1, 0, 0 \rangle
\end{gather*}
with \(6j\) the value at which \(x_n\) must be set (see the proof of
\Cref{thm:reachability}). We define $\chi(\langle r_1,y_1 \rangle) = \chi(r_1) = \{x\}$, $\chi(\langle r_1,y_2 \rangle) = \{x,y\}$, $\chi(\langle q_0, 1, 0, 0, y_1 \rangle) = \chi(\langle q_0, 1, 0, 0 \rangle) = \{x,x_1,\ldots,x_n\}$, and $\chi(\langle q_0, 1, 0, 0, y_2 \rangle) = \{x,x_1,\ldots,x_n,y\}$. In states where $\clock = 0$ in the other phases, we do likewise in the following way. For each state \(\langle q, i, \tsymbol, 0 \rangle\) of the \AwT with \(\tsymbol > 0\), we add new states \(\langle q, i, \tsymbol, 0, y_k \rangle\) and $\langle q', j, 0, 0, y_k \rangle$, for $k = 1,2$, and modify the transitions to force \(y\) to start and time out, then allow a \(t\)-transition\footnote{We use the notations of the proof of \Cref{thm:reachability}.}, and, finally, force \(y\) to start and time out again:
  \begin{align*}
    \langle q, i, \tsymbol, 0, y_1 \rangle
    &\xrightarrow[(y, 1)]{\go} \langle q, i, \tsymbol, 0, y_2 \rangle
    \xrightarrow[\bot]{\timeout{y}} \langle q, i, \tsymbol, 0 \rangle\\
    &\xrightarrow[(x_i, 6j)]{t} \langle q', j, 0, 0, y_1 \rangle
    \xrightarrow[(y, 1)]{\go} \langle q', j, 0, 0, y_2 \rangle
    \xrightarrow[\bot]{\timeout{y}} \langle q', j, 0, 0 \rangle.
  \end{align*}
Any \(\timeout{x}\)-transition leading to \(\langle q, i, \tsymbol, 0 \rangle\) instead goes to \(\langle q, i, \tsymbol, 0, y_1 \rangle\). We then define $\chi(\langle q, i, \tsymbol, 0, y_1 \rangle) = \chi(\langle q', j, 0, 0, y_1 \rangle) = \{x,x_1,\ldots,x_n\}$ and  $\chi(\langle q, i, \tsymbol, 0, y_2 \rangle) = \chi(\langle q', j, 0, 0, y_2 \rangle) = \{x,x_1,\ldots,x_n,y\}$. Any other transition leads to \(r_\sink\).

Third, let \(\A'\) be the \AwT obtained with this modified construction. We now argue that any \maximal timed run \(\rho \in \mtruns{\A'}\) can be wiggled. We do so by proving that the block graph of \(\rho\) is acyclic. In the sequel, we say that two block $B, B'$ are \emph{incomparable} if neither $B \prec B'$ nor $B' \prec B$. 

\begin{enumerate} 
\item Suppose first that $\rho$ does not contain the state $r_{\sink}$.
\begin{itemize}
\item Recall that \(x\) is never discarded and every \(\timeout{x}\)-transition restarts it. Thus, there exists a single \(x\)-block; we call it \(B_x\). By construction, none of the \(\timeout{y}\)-transitions update \(y\). Therefore, we have strictly more than one \(y\)-block. Call them \(B_y^1, B_y^2, \dotsc, B_y^{m_y}\) in the order they are seen alongside \(\rho\). For any odd \(i\), it may be that \(B_x \prec B_y^i\) (if the input-action starting \(y\) occurs at the same time \(x\) times out), \(B_y^i \prec B_y^{i+1}\) (if the sum of the delays between the timeout of \(y\) in $B_y^i$ and the \(\go\)-transition starting $B_y^{i+1}$ is zero), and \(B_y^{i+1} \prec B_x\). However, it is impossible to have \(B_x \prec B_y^i \prec B_y^{i+1} \prec B_x\), as $x$ is (re)started with value $3$ and there are exactly two units of time if \(B_y^i \prec B_y^{i+1}\) (since \(y\) is always started with value $1$). That is, we do not have a cycle. Moreover, since the \(y\)-blocks only appear when the \(\clock\) component of the current state is zero, \(B_y^j\) and \(B_y^{j+1}\) are incomparable for every even \(j\).
\item Let us now focus on the timers \(x_i\). Since a \(t\)-action can discard \(x_i\), we may have multiple \(x_i\)-blocks, say \(B_{x_i}^1, \dotsc, B_{x_i}^{m_{x_i}}\) (again, in the order they are seen in \(\rho\)). Since \(x_i\) is updated such that it cannot time out while the \(\clock\) component of the current state is zero (i.e., the timer fate of the corresponding block is \(\nonNilTimerKilled\)) and only one \(t\)-transition can occur per phase of \(\M'\), all of the \(x_i\)-blocks are pairwise incomparable. Moreover, thanks to the \(y\)-buffers, \(B_x\) and any block \(B_{x_i}^j\) are incomparable. As stated before, it may happen that a \(t\)-transition of some block \(B_{x_i}^j\) occurs concurrently with an action of a block \(B_y^\ell\) (resp. \(B_y^{\ell+1}\)) with \(\ell\) an odd number. By construction, \(B_{y}^\ell \prec B_{x_i}^j\) (resp. \(B_{x_i}^j \prec B_{y}^{\ell+1}\)). Note that it is possible that \(B_y^\ell \prec B_{x_i}^j \prec B_y^{\ell+1}\) if all blocks participate in the same race.
\item We now consider two different timers \(x_i\) and \(x_k\). Since the LBTM \(\M'\) is such that no two cells contain the same value, it must be that \(x_i\) and \(x_k\) are always updated to time out in states containing different \(\clock\) component in \(\A'\) (as, otherwise, this would imply that the two cells of \(\M'\) contain identical symbols). That is, it is not possible for two timers to time out concurrently. However, during the initialization, it may be that the actions starting \(B_{x_1}^1, B_{x_2}^1, \dotsc, B_{x_n}^1\) occur at the same time. We thus have \(B_{x_1}^1 \prec B_{x_2}^1 \prec \dotsb \prec B_{x_n}^1\) and the blocks \(B_{x_i}^j\) and \(B_{x_k}^\ell\) are incomparable with \(j, \ell > 1\).
\end{itemize}
Using all these facts over the races and \(\prec\), we deduce that \(\blockGraph[\rho]\) is acyclic, i.e., \(\rho\) can be wiggled. 

\item Suppose now that $\rho$ contains the state $r_\sink$.
Recall that \(\chi(r_\sink) = \emptyset\), meaning that the update of every
transition ending in $r_\sink$ is \(\bot\), every timer is stopped, and every new
block started after reaching $r_\sink$ contains exactly one action.
We thus focus on the prefix of the run up to the transition leading to $r_\sink$.
Call this last transition $t^*$ with action $i^*$.
By construction, it must be this prefix is a run that satisfies the constraints
explained above (i.e., there is no cycle in the block graph induced by the prefix
of the run).
Let us show that adding $t^*$ after the prefix does not induce a cycle in
the block graph.

Suppose first that $i^*$ is an input. As the update of $t^*$ is $\bot$, it must be that $i^*$ is the only action of its block, and this block cannot appear in a cycle.
Suppose now that $i^*$ is a timeout-action. 
\begin{itemize}
\item Let us study the initialization or a simulation step (i.e., the start of a phase).
Recall that no timer \(x_i\) can time out, i.e., we only have to consider the timers \(x\) and \(y\). In this case, we must have $i^* = \timeout{x}$. Indeed, this happens when a block $B_y^\ell$ is started too late, in a way that $\timeout{x}$ occurs before $y$ times out, or when such a block is not started at all. The only hope to have a cycle is to adapt to the current situation the discussion we made above about \(B_x \prec B_y^\ell \prec B_y^{\ell+1} \prec B_x\) with $\ell$ odd. Here, as $i^* = \timeout{x}$ stops the timer $y$ and $x$ (resp. $y$) is (re)started with value 3 (resp.\ value $1$), we get $B_x \prec B_y^\ell$ and $B_x \prec B_y^{\ell+1}$. We thus have no cycle. 
\item Otherwise, we consider a state in which the \(\clock\) value is not zero (i.e., this is not the start of a phase). By construction, a \(\timeout{x}\)-, or any \(\timeout{x_i}\)-transition cannot lead to \(r_\sink\), and \(y\) is never started.
Thus, $i^*$ cannot be a timeout-action in this case.
\end{itemize}

Hence, we covered every case and never obtained a cycle, i.e., $\rho$ is wigglable.
\end {enumerate}
We have thus proved that each timed run \(\rho \in \mtruns{\A'}\) can be wiggled.

Finally, we add a widget that forces an unwigglable run after \(r_\done\). To do so, we add new timers \(z, z'\) and states \(s_1\) to \(s_4\), and define the following transitions
  \begin{equation*}
    r_\done
      \xrightarrow[(z, 1)]{\go} s_1
      \xrightarrow[(z', 1)]{\go} s_2
      \xrightarrow[\bot]{\timeout{z'}} s_3
      \xrightarrow[\bot]{\timeout{z}} s_4.
  \end{equation*}
  We define $\chi(s_1) = \chi(s_3) = \{z\}$, $\chi(s_2) = \{z,z'\}$, and $\chi(s_4) = \emptyset$.
  Given a \maximal timed run ending in $r_\done$, the only ways to extend it into a \maximal timed run reaching \(s_4\) are by adding the following sequence of transitions, with any $d > 0$:
  \begin{equation}\label{eq:widget}
    \begin{split}
        &(r_\done, \emptyset)
          \xrightarrow[(z, 1)]{\go}             (s_1, z = 1)
          \xrightarrow{0}                       (s_1, z = 1)
          \xrightarrow[(z', 1)]{\go}            (s_2, z = 1, z' = 1) \\
          &\xrightarrow{1}                      (s_2, z = 0, z' = 0)
          \!\xrightarrow[\bot]{\timeout{z'}}\!  (s_3, z = 0)
          \xrightarrow{0}                       (s_3, z = 0)
          \xrightarrow[\bot]{\timeout{z}}       (s_4, \emptyset)
          \xrightarrow{d}                       (s_4, \emptyset).
    \end{split}
  \end{equation}
The resulting \maximal timed run is not wigglable, as two blocks $B_z$ and $B_{z'}$ have been added such that \(B_{z} \prec B_{z'} \prec B_{z}\).

To conclude, it remains to prove that the given \(\M'\) accepts the given \(w'\) iff there exists an unwigglable \maximal timed run in the \AwT $\A'$ extended with the widget. First, suppose that $\M'$ accepts $w'$. Then by the proof of \Cref{thm:reachability}, we know that there exists a timed run $\rho \in \truns{\A'}$ reaching $r_\done$. As both $r_0$ and $r_\done$ do not have any active timer, the first and last delays of $\rho$ can be made positive, thus making $\rho$ \maximal. We then extend $\rho$ with (\ref{eq:widget}) and obtain an unwigglable run that is still \maximal. Second, suppose that there exists a \maximal timed run $\rho$ that cannot be wiggled. We proved above that if $\rho$ ends in some state of $\A'$, then $\rho$ is wigglable. Therefore, by construction of the widget, $\rho$ has to end with $s_4$, meaning that a prefix of $\rho$ reaches $r_\done$. It follows that $w'$ is accepted by $\M'$. Thus, deciding whether all \maximal timed runs of an \AwT can be wiggled is a \PSPACE-hard problem.

\end{document}